\documentclass[11pt]{article}
\usepackage[dvips]{graphicx}
\graphicspath{{images/}}
\usepackage{verbatim}
\usepackage{amsmath,amsthm, amsfonts}
\usepackage{xspace}
\usepackage{pifont}
\usepackage{graphicx}
\usepackage{amssymb}
\usepackage{epic, eepic}
\usepackage{dsfont}
\usepackage{amssymb}
\usepackage{anyfontsize}
\usepackage{makeidx}
\usepackage{mathrsfs}
\usepackage{arydshln}
\usepackage{exscale}
\usepackage[dvipsnames]{xcolor}
\usepackage{overpic} 
\usepackage{bm}
\usepackage{bbm}
\usepackage{booktabs} 
\usepackage{color, colortbl}
\usepackage{subcaption}
\usepackage{float}
\usepackage[numbers]{natbib}
\usepackage[title]{appendix}
\usepackage{booktabs}
\usepackage{makecell}
\usepackage{enumerate}
\usepackage{soul}
\usepackage{algorithm2e}

\RequirePackage[colorlinks,citecolor=blue,urlcolor=blue]{hyperref}

\definecolor{Gray}{gray}{0.9}

\usepackage{amsmath,afterpage}
\usepackage{epsf}
\usepackage{graphics,color}
\RequirePackage[colorlinks,citecolor=blue,urlcolor=blue]{hyperref}
\usepackage[export]{adjustbox}

\usepackage{tcolorbox}

\def\0{\mathbf{0}}

\def\lam{\lambda}

\def \< {\langle}
\def \> {\rangle}

\def\beqa{\begin{eqnarray}}
\def\eeqa{\end{eqnarray}}
\def\beqas{\begin{eqnarray*}}
\def\eeqas{\end{eqnarray*}}

\def\red#1{\textcolor{red}{#1}}

\newtheorem{theorem}{Theorem}[section]
\newtheorem{lemma}[theorem]{Lemma}

\newtheorem{proposition}[theorem]{Proposition}

\newtheorem{remark}[theorem]{Remark}

\newtheorem{definition}[theorem]{Definition}
\newtheorem{assumption}[theorem]{Assumption}
\numberwithin{equation}{section}
\newcommand{\hatd}[1]{{}}

\newcommand{\bd}{\begin{displaymath}}
\newcommand{\ed}{\end{displaymath}}
\newcommand{\be}{\begin{equation}}
\newcommand{\ee}{\end{equation}}
\newcommand{\bq}{\begin{eqnarray}}
\newcommand{\eq}{\end{eqnarray}}
\newcommand{\bn}{\begin{eqnarray*}}
\newcommand{\en}{\end{eqnarray*}}
\newcommand{\dl}{\delta}

\def\wt{\widetilde}

\def\E{{\mathbb{E}}}

\usepackage[normalem]{ulem}

\usepackage{mathtools}
\DeclarePairedDelimiter{\norm}{\lVert}{\rVert}
\mathtoolsset{showonlyrefs}

\usepackage{authblk}

\title{Competition in Dealer Markets with Internalisation and Externalisation
\footnotetext{\scriptsize{\hskip-0.53cm Robert Boyce is supported by the EPSRC Centre for Doctoral Training in Mathematics of Random \mbox{Systems}: Analysis, Modelling and Simulation (EP/S023925/1).}}} 

\usepackage{geometry}
\DeclareMathAlphabet{\altcal}{OMS}{zplm}{m}{n}

\usepackage[bottom]{footmisc}

\author[1]{Robert Boyce}
\author[1]{Eyal Neuman}
\affil[1]{Department of Mathematics, Imperial College London}

\begin{document}

\vspace{-0.5cm}
\maketitle

\begin{abstract}
\noindent
We model a market with multiple dealers who compete for client order flow by dynamically updating their bid and ask quotes for a risky asset. Dealers aim to maximise expected profits while controlling inventory risk by skewing their quotes to attract offsetting order flow (internalisation) or by directly offloading positions in the market (externalisation). Using a variational approach, we derive a closed-form equilibrium for the resulting Nash competition, shedding light on key features of dealer market dynamics. We show that dealers relying on internalisation are compelled to increase their externalisation activity when competing with externalising dealers. This strategic shift in equilibrium leads to significantly higher hedging costs for all dealers and substantially wider spreads for clients.
\end{abstract} 

\begin{description}
{\small \item[Mathematics Subject Classification (2010):] 
91G10, 49N10, 49N90, 93E20 
\item[JEL Classification:] C61, C73, G11, G24, G32
\item[Keywords:] market making, Nash equilibrium, optimal liquidation, price impact
}
\end{description}


\section{Introduction}

In modern financial markets, the distinction between liquidity providers and liquidity takers is becoming increasingly blurred. Liquidity providers, also known as dealers, set prices at which they are willing to trade with clients but are typically averse to accumulating large positions due to the risk of adverse price movements, known as inventory risk. The practice of holding inventory in anticipation of a future offsetting trade is referred to as \emph{internalising}. Conversely, dealers may act as liquidity takers, incurring transaction costs to rapidly unwind positions; this is known as \emph{externalising}. A dealer’s risk preferences play a key role in determining whether to internalise or externalise a trade, but this decision may also depend on the behaviour of other dealers trading the same asset.

The behaviour of dealers who exclusively internalise their trades, i.e. who warehouse client orders and wait for offsetting trades to passively unwind their inventory, has been extensively studied in the literature. The seminal work of \citet{avellaneda2008high} introduced a market-making model in which a dealer controls the spread to optimise the trade-off between revenues and risk. An explicit optimal strategy for this model was derived in \citet{gueant2013solution}, and numerous extensions have since been proposed (see, e.g., Chapter 10 of \cite{cartea2015algorithmic} and \citep{cartea2014buy, gueant2017optimal, cartea2017algorithmic, boyce2025market}). \citet{butz2019internalisation} developed a model of internalisation using queueing theory, deriving closed-form expressions for the internalisation horizon and analysing the associated costs. \citet{guo2026macroscopic} study a multi-agent market-making Nash competition in which dealers exclusively internalise order flow. They characterise the solution in terms of a system of FBSDEs and provide sufficient conditions for the existence of a solution to the system.

Dealers also have the option to externalise their accumulated inventory in the market in order to reduce their exposure to risk. This process may create additional transaction costs due to the price impact of their orders. The trade-off between internalisation and externalisation has been studied in the context of interactions between a broker and a client \citep{bergault2025mean,cartea2025brokers,cartea2026nash,donnelly2025liquidity}. The problem of a central risk book in this context was studied by \citep{barzykin2024unwinding,nutz2023unwinding}. Both papers adopt an optimal liquidation framework (see \citep{lehalle2019signals,neuman2022optimal,neuman2023trading,obizhaeva2013dynamics}), where the dealer controls only the externalisation rate. \citet{barzykin2023algorithmic}, proposed an extension to the Avellaneda-Stoikov model, in which the dealer not only controls the spread but can also externalise inventory. 

When multiple liquidity providers compete for clients’ order flow, the externalisation behaviour of some providers can reduce the portfolio value of others through the adverse effects of their aggregated price impact on the asset price. \citet{oomen2017aggregator} proposed a preliminary model that captures key features of this phenomenon by considering a static game between two internalisers and one externaliser competing bilaterally for a single client’s order flow (see Section 5 therein). In this model, each dealer estimates the fundamental price and posts a constant, non-controlled spread, with the option to immediately externalise, thereby inducing a permanent price impact. One of the main results of \citet{oomen2017aggregator} is that the presence of a single externaliser leads other dealers to externalise as well. This effect was subsequently documented empirically in Section 3.3 of \citet{oomen2019price}, in particular in Figure 13 and Table 3 therein.

Some limitations of the model in \cite{oomen2017aggregator} are that it does not account for inventory or inventory risk, dealers cannot control the bid–ask spread, and the framework is static. As a result, important features of the system are not captured. For example, when internalisers hold large inventories, the price impact generated by externalisers’ trades may move prices in an adverse direction, reducing the value of internalisers’ positions. This mechanism is one of the main reasons why internalisers may choose to externalise in such settings. Moreover, allowing all dealers to dynamically control their spreads can shed light on how the presence of externalisers affects transaction costs for clients.

The main objective of this work is to propose a dynamic model that quantifies how a given dealer is affected by the risk management decisions of competitors when the system is at equilibrium. Furthermore, the model illustrates how these externalisation decisions adversely affect the prices offered to clients. To this end, we formulate a continuous-time model of a market with multiple dealers who provide liquidity in a risky asset and compete for client order flow by quoting bid and ask prices. Each dealer controls both their bid–ask quotes and their rate of externalisation. Differences across dealers primarily stem from their sensitivity to inventory risk, which influences their urgency to externalise. Dealers aim to maximise profits from trading with clients via the bid–ask spread while managing inventory risk, either by skewing their quotes or by externalising in the broader market.

Interactions among dealers arise both through competition for client order flow via pricing strategies and through the aggregate price impact of their trades. We consider a dynamic Nash competition framework and, using a variational approach, derive the first-order conditions in the form of a system of forward–backward stochastic differential equations (FBSDEs). We solve this system explicitly and obtain the unique open-loop Nash equilibrium. Finally, we conduct numerical experiments to analyse market behaviour under different compositions of internalisers and externalisers.

\paragraph{Main contribution.} 
Previous works, as mentioned earlier, either consider only single-agent internalisation and externalisation problems, as in \citet{barzykin2023algorithmic}, or focus on purely internalisation, as in \citet{guo2026macroscopic}. Our work integrates a multiplayer framework with internalisation–externalisation strategies for each agent, where agents simultaneously compete for client order flow and influence the market through their collective price impact. These additional interactions significantly alter the results, as they create a feedback loop between quoting behaviour and inventory risk management across dealers.
A central contribution of our work is Theorem \ref{thm:solution}, in which we derive a closed-form solution for the Nash equilibrium in this dynamic, competitive environment. This result not only advances the theoretical understanding of multi-dealer liquidity provision under inventory risk but also provides a rigorous explanation for the empirical findings in Section 3.3 of \citet{oomen2019price}, in particular regarding how externalisation by other liquidity providers can adversely affect prices. Furthermore, our analysis offers insights into how the presence and relative proportions of internalisers and externalisers influence price dynamics, execution costs, and overall market liquidity, as detailed below.

\begin{enumerate}[\textbf{(i)}]

    \item \textbf{More externalisers make internalisers behave like externalisers.}
When native externalisers, who have a higher priority for inventory liquidation, are present in the market, native internalisers with greater tolerance for inventory risk are incentivised to unwind positions more aggressively and to externalise more. This is because the execution order flow of the externalisers acts as a signal that internalisers can exploit in their trading strategies to generate additional profits. This effect leads to higher price impact for all agents and, consequently, increases inventory hedging costs. In Table \ref{tab:keymetrics-unconstrained}, we show that the average hedging volume of the internalisers increases by a factor of $14$ when a single internaliser in the liquidity pool is replaced by an externaliser, while their average hedging costs increase by a factor of $8$.

    \item [\textbf{(ii)}] \textbf{More externalisers increase internalisers’ effective spread capture.} 
    As the ratio of externalisers to internalisers in the pool increases, the remaining internalisers can increase their effective spread capture. Indeed, a substantial inflow on one side of the book will cause externalisers to aggressively skew their spreads in order to maintain low inventory, while internalisers, who face less restrictive inventory costs, will be able to offer more competitive spreads in the direction of the inflow. See Tables \ref{tab:keymetrics-unconstrained} and \ref{tab:keymetrics-constrained}.

    \item [\textbf{(iii)}] \textbf{More externalisers can worsen execution prices for the client.} While the execution prices of the median client can initially improve slightly by diversifying dealer types, adding more externalisers quickly negates this effect and increases costs for the typical (median) client. A net client inflow on one side of the book will cause externalisers to unwind positions aggressively due to inventory risk, leading to price distortion in a direction adverse to the client’s position. See Table \ref{tab:keymetrics-unconstrained}, where the median client's costs defined in \eqref{def:median_client_best_spread} gradually increase by over $20\%$ as internalisers are replaced by externalisers.  
    
    \item [\textbf{(iv)}] \textbf{If dealers are order-flow agnostic, the above results are emphasised, and internalisers’ P\&L decreases as the number of externalisers increases.} In the case where dealers are agnostic to the order flow generated by others’ executions due to regulation or informational inaccuracy, and therefore do not account for aggregated price impact, their hedging costs rise much more sharply in its presence, while median client spread costs strictly increase as externalisers enter the market. Moreover, the increased hedging costs outweigh the benefits of improved effective spread capture, leading to a decline in internalisers’ P\&L.

Specifically, when a single internaliser in a pool of six is replaced by an externaliser, the average hedging volume and costs of the remaining internalisers increase by factors of $4$ and $15$, respectively, and client spreads rise with the number of externalisers, eventually by over $20\%$. Since internalisers are individually incentivised to react to the externalisers’ actions in this way, yet are collectively worse off, this can be interpreted as a form of prisoner’s dilemma. This is a similar effect to the prisoner's dilemma demonstrated in the static model in Section 5 of \cite{oomen2017aggregator}.
\end{enumerate}

\paragraph{Organisation of this paper:}
In Section \ref{sec:model}, we introduce the multi-dealer stochastic game and derive the unique Nash equilibrium in Theorem \ref{thm:solution}. In Section \ref{sec:numerical_results}, we present numerical illustrations across a range of market scenarios, along with the methodology and results of our main experiments. Finally, Appendix \ref{appendix:proof} contains the proofs of the main results, while auxiliary definitions and notation are provided in Appendices \ref{appendix:matrices} and \ref{appendix:matrices2}.

\section{Model Setup and Main Results} \label{sec:model}

We consider a game between $N \geq 2$ dealers who all simultaneously and continuously in time choose their ask prices, bid prices, and externalisation rates, in order to `make the spread' while minimising trading costs and inventory risk. 

Let $T>0$ be a finite time horizon, and let $(\Omega, \mathcal{F}, \{\mathcal{F}_{t}\}_{t\in[0, T]}, \mathbb{P})$ be a filtered probability space satisfying the usual conditions of right continuity and completeness. We consider a single risky asset which follows a $\{\mathcal{F}_{t}\}_{t\in[0, T]}$-martingale unaffected price process $P^{\ast}=(P^{\ast}_{t})_{ t \in [0,T]}$ and satisfies $\mathbb{E}\left[\langle P^{\ast} \rangle_{T}\right] < \infty$ with $P^{\ast}_0=p$. The dealers aim to optimally decide between unwinding or internalising the order flow by controlling the externalisation rate $q^i=(q_t^i)_{ t \in [0,T]}$, $i=1,...,N$, chosen from a set of admissible strategies  \begin{equation} \label{adms} 
    \altcal{A} = \left\{q: q \text{ progressively measurable s.t. } \mathbb{E}\left[\int_{0}^{T}q^{2}_{t}dt\right]< \infty\right\}.
\end{equation}
We assume that the dealers' aggregated trading activity causes price impact such that the traded price is given by
\begin{equation} \label{def:P}
    P_{t} = P^{\ast}_{t} +  I_{t}, \quad  0\leq t \leq T, 
\end{equation}
where
\begin{equation} \label{def:Y}
I_{t}:= \lambda \int_{0}^{t}e^{-\beta(t-s)}\left(\sum_{i=1}^{N} q^{i}_{s}\right)ds,
\end{equation}
is the transient price impact created by the aggregated trades according to the Obizhaeva and Wang model   \citep{obizhaeva2013dynamics}, and $\lambda, \beta$ are non-negative constants. 

The market's cumulative ask and bid order flows are modelled by two Poisson processes $Z^{a}= (Z_t^a)_{t\geq0}$ and $Z^{b}= (Z_t^b)_{t\geq0}$ with intensities $\rho^a$ and $\rho^b$, respectively, which are initialised at $0$. In addition, each dealer has an idiosyncratic cumulative ask flow $J^{a, i}=(J^{a, i}_{t})_{t\geq0}$ and cumulative bid flow $J^{b, i}=(J^{b, i}_{t})_{t\geq0}$ which they do not have to compete for. These are also modelled by independent Poisson processes. For simplicity we assume the intensity of all dealers' idiosyncratic ask and bid flow is equal to $\mu$. 
In the following we often use the notation $[N] = \{1,...,N\}$. 
Each dealer $i \in [N]$ continuously controls their ask and bid depths $\delta^{a, i}=(\delta_t^{a, i})_{ t \in [0,T]} $ and $\delta^{b, i}=(\delta_t^{b, i})_{ t \in [0,T]} $ which are chosen from the set of strategies
\begin{equation} \label{adms2} 
    \altcal{D} = \left\{ \delta: \delta \text{ progressively measurable s.t. } \mathbb{E}\left[\int_{0}^{T}\delta^{2}_{t}dt\right]<\infty \right\}.
\end{equation}
The ask and bid quotes of each dealer $i$ at any time $0\leq t \leq T$ are therefore given by,  
\begin{equation} \label{def:S^i}
    S^{a, i}_{t} = P_{t} + \delta^{a, i}_{t},
    \qquad\text{and}\qquad
    S^{b, i}_{t} = P_{t} - \delta^{b, i}_{t}.  
\end{equation}
 The inflow received by each dealer $i$ is assumed to be a proportion of the clients' order flow as follows, 
\begin{equation} \label{def:Z^i}
\begin{split} 
    dZ^{a, i}_{t} &= \frac{1}{N}\left(1-\kappa\left(\delta^{a, i}_{t}-\frac{1}{N}\sum_{j=1}^{N}\delta_{t}^{a, j}\right)\right)dZ^{a}_{t} + {dJ^{a, i}_{t}}, \\ 
    dZ^{b, i}_{t} &= \frac{1}{N}\left(1-\kappa\left(\delta^{b, i}_{t}-\frac{1}{N}\sum_{j=1}^{N}\delta_{t}^{b, j}\right)\right)dZ^{b}_{t} +{ dJ^{b, i}_{t}}.
\end{split}   
\end{equation}
We notice from \eqref{def:Z^i} that the dealers face a trade-off when choosing $\delta^{a, i}$ and $\delta^{b, i}$ between trading at a wider spread or getting  higher proportion of the total order flow. The dealer's inflows collectively sum to the total market inflow on both the ask and bid side and the total idiosyncratic flow, that is,
\begin{equation}
\sum_{i=1}^N Z^{k, i}_{t} = Z^k_t + \sum_{i=1}^{N}J^{k, i}_{t} , \quad \textrm{for all } 0\leq t \leq T , \  k\in \{a,b\}.
\end{equation}
In particular if $\mu=0$ then $\sum_{i=1}^N Z^{k, i}_{t} = Z^k_t$.
\begin{remark}
The dynamics of \( Z^{a, i} \) and \( Z^{b, i} \) in \eqref{def:Z^i} are motivated by a linearisation of exponential fill probabilities. In the single-dealer model of Avellaneda and Stoikov \cite{avellaneda2008high}, the fill probability of a client's market order on the ask side is given by \( e^{-\kappa \delta^a_t} \), where \( \kappa > 0 \) is a constant and \( \delta^a_t \) is the quoted depth. Higher depths therefore attract less order flow. A natural extension of the fill probability for dealer \( i \) in the multi-dealer setting is
\begin{equation}
    \exp\left(-\kappa\left(\delta^{a, i} - \frac{1}{N} \sum_{j=1}^N \delta^{a, j} \right)\right),
\end{equation}
so that quotes better than the average receive more order flow, and vice versa. A first-order Taylor expansion of the deviation from the average depth around zero yields the integrands of \eqref{def:Z^i}. This approximation is justified by the simulation shown in Figures \ref{fig:illustrative-unconstrained} and \ref{fig:illustrative-constrained}, where the differences between competitors' depths are small. Note that linear-quadratic approximations of fill probabilities have previously been used in \citep{adrian2020intraday,bergault2021closed} and in Section 2 of \cite{guo2026macroscopic}.
\end{remark}
Dealer $i$'s inventory is given by the difference between its individual inflow from clients and the amount executed in the market, including an initial inventory  $x_0^{i} \in \mathbb{R}$,
\begin{equation} \label{def:X}
    X^{i}_{t} = x_0^{i} + Z^{b, i}_{t}-Z^{a, i}_{t} + \int_{0}^{t}q^{i}_{s}ds,  \quad t\geq0, \ i \in [N]. 
\end{equation}
Let $\phi^i$ and $\alpha^i$ be positive constants representing inventory risk aversion and terminal inventory penalty (respectively) for each dealer $i \in [N]$.  
The heterogeneity in the risk aversion and the terminal penalty parameters gives us the flexibility to define two types of dealers: internalisers who are comfortable with holding inventory and have low $\phi^{i}$ and $\alpha^{i}$, and native externalisers who are averse to holding inventory and have high $\phi^{i}$ and $\alpha^{i}$ (see Table \ref{tab:params} for the specific values). 

Define $u^i= (\dl^{a,i},\dl^{b,i},q^{i})$ and $u^{-i}=(\dl^{a,j},\dl^{b,j},q^{j})_{ j \in [N] \setminus \{i\} }$. We formulate the objective of dealer $i\in [N]$ as the following revenue-risk functional, 
\begin{equation} \label{def:objective}
\begin{split}
    &\altcal{J}^{i}(u^{i}, u^{-i}) \\
    &= \mathbb{E}\Bigg[\int_{0}^{T}\left(P^{\ast}_{t}+I_{t}+\delta^{a, i}_{t}\right)dZ^{a, i}_{t} -
    \int_{0}^{T}\big(P^{\ast}_{t}+I_{t}-\delta^{b, i}_{t}\big)dZ^{b, i}_{t}  - \int_{0}^{T}\big(P^{\ast}_{t}+I_{t}+\frac{1}{2}\epsilon^{i} q_t^i\big)q^{i}_{t}dt 
    \  \\
    &\qquad - \phi^{i}\int_{0}^{T}\left(X^{i}_{t}\right)^{2}dt - \alpha^{i} \left(X^{i}_{T}\right)^{2} + (P^{\ast}_{T}X^{i}_{T} - p x_0^{i}) \Bigg]. 
\end{split}
\end{equation}  
The first two terms on the right-hand side of \eqref{def:objective} represent the revenue from making the spread, that is from selling and buying the asset from the clients. The third term represents the externalisation costs, where in particular the term $\frac{\epsilon^{i}}{2}(q^{i}_{t})^2$ represents dealer $i$'s slippage costs. 
The last three terms represent the inventory risk aversion, terminal inventory penalisation and the difference in book value of the terminal inventory and initial inventory, respectively.

The goal of each dealer $i \in [N]$ is to maximise the objective functional \eqref{def:objective} over $(\dl^{a,i},\dl^{b,i},q^i)$ in a class of admissible controls determined by \eqref{adms} and \eqref{adms2}. We  prove that if all $N$ control problems can be solved simultaneously, it follows that the system has a Nash equilibrium in the following sense. 
\begin{definition} \label{def:nash}
    A set of admissible strategies $(u^{i})_{i \in [N] }$ where $u^{i}=(\delta^{a, i}, \delta^{b, i}, q^{i})$, is called an open-loop Nash equilibrium if for every admissible strategy $v$ it holds that
    \begin{equation} \label{eq:nash_condition}
        \altcal{J}^{i}(u^{i}, u^{-i}) \geq \altcal{J}^{i}(v, u^{-i}), \quad \textrm{for all } i\in [N]. 
    \end{equation}
\end{definition}
Before we can state our theoretical main result, we introduce some notation and definitions. Let 
\begin{equation}\label{v-i-def}
    V^{i}_{t} = \big(
         \delta^{a, i}_{t}, 
        \delta^{b, i}_{t},
        q^{i}_{t}, 
        \Gamma^{i}_{t} 
     \big)^{\top}, \quad 0\leq t \leq T,\  i \in [N], 
\end{equation}
where each $\Gamma^{i} =(\Gamma^{i}_t)_{ t \in [0,T]}$ is the square-integrable auxiliary process given by 
\begin{equation} \label{def:Gamma}
     \Gamma^{i}_{t} = \lambda e^{\beta t}\Bigg(\int_{0}^{t}e^{-\beta s}\big(dZ^{b, i}_{s}- dZ^{a, i}_{s}\big)
    + \int_{0}^{t}e^{-\beta s}q^{i}_{s}ds  + \wt{N}^{i}_{t}\Bigg)
 \end{equation}
with the square-integrable martingale $(\wt{N}_t)_{ t \in [0,T]}$, 
\begin{equation} \label{def:tildeN} 
\begin{split}
    \wt{N}^{i}_{t} :=
    &- \mathbb{E}\Bigg[\int_{0}^{T}e^{-\beta s}\big(dZ^{b, i}_{s}- dZ^{a, i}_{s}\big)
    + \int_{0}^{T}e^{-\beta s}q^{i}_{s}ds\Big\vert\mathcal{F}_{t}\Bigg].
\end{split}
\end{equation}
We further define ${\mathbf{V}}$ as the following stochastic process, 
\begin{equation} \label{def:V_vector}
    {\mathbf{V}}_{t} = \big(
       (V^{1}_{t})^\top,
         \dots,
        (V^{N}_{t})^\top, 
        1
 \big)^{\top}  , \quad 0\leq t \leq T, 
\end{equation}
Note that for any $t\in[0,T]$, ${\mathbf{V}}_{t}$ is a vector in $\mathbb{R}^{4N+1}$ that contains all the controls $(
         \delta^{a, i}_{t}, 
        \delta^{b, i}_{t},
        q^{i}_{t}, 
        \Gamma^{i}_{t})_{i \in [N]}$ in the first $4N$ entries and the scalar $1$ in $4N+1$ entry. 
        
Recall the agent's inventory and the transient price impact processes \eqref{def:X} and \eqref{def:Y}, respectively. We further define $\mathbf{X}$ as the following $ \mathbb{R}^{N+1}$-valued stochastic process,  
\begin{equation} \label{def:X_vector}
    \mathbf{X}_{t} = \left( 
         X^{1}_{t}, \dots, 
        X^{N}_{t}, I_{t}
    \right)^{\top} , \quad 0\leq t \leq T. 
\end{equation}
We introduce two matrices $\mathbf{K}\in\mathbb{R}^{(5N+2)\times(5N+2)}$ and $\mathbf{T}\in\mathbb{R}^{(4N+1)\times(5N+2)}$ which contain combinations of the model's parameters $N, \lambda, \beta, \kappa, (\epsilon^{i}, \phi^{i}, \alpha^{i})_{i \in [N]}$ and the intensities of the clients' flow. Due to their involved presentation we refer to their explicit forms \eqref{def:K} and \eqref{def:T} in Appendix \ref{appendix:matrices}. We further define,  
\begin{equation}
    \mathbf{G}(t) := \exp\left(\mathbf{K}t\right) , \quad 0\leq t \leq T. 
\end{equation}
and let $\mathbf{H}(t) := \mathbf{T}\mathbf{G}(t)\in\mathbb{R}^{(4N+1)\times(5N+2)}$. Also denote by $\mathbf{H}^{V}$ the first $4N+1$ columns of $\mathbf{H}$ and by $\mathbf{H}^{X}$ the last $N+1$ columns. Finally, we define the following vector in $\mathbb{R}^{4N+1}$,
\begin{equation}
    \mathbf{c} := (0, ..., 0, 1)^{\top}. 
\end{equation}
\paragraph{Convention.} We denote by $\norm*{\cdot}$ the Euclidean norm and by $\norm*{\cdot}_{2}$ the matrix spectral norm. 
The following assumption is crucial for the derivation of the Nash equilibrium. 
\begin{assumption} \label{assum:bounded}
    We assume the model parameters are chosen such that,
    \begin{itemize}
    \item[\textbf{(i)}] 
    The inverse matrix $(\mathbf{H}^{V}( t))^{-1}$ exists for all $t\in(0, T)$  and that 
    \begin{equation} 
        \sup_{t\in [0,T]} \left(\norm*{\left(\mathbf{H}^{V}(t)\right)^{-1}\mathbf{c}} +\norm*{\left(\mathbf{H}^{V}(t)\right)^{-1}\mathbf{H}^{X}(t)}_{2} \right)<\infty, 
    \end{equation}
        \item[\textbf{(ii)}] 
 $$ \frac{\beta}{\lambda} \geq \frac{\kappa(N-1)}{4N^{2}}(\rho^a+\rho^b).$$
 \end{itemize} 
 \end{assumption}
We are now ready to state our main theoretical result, which derives a closed-form solution to the Nash equilibrium of the model. Recall that the process $\mathbf{V}$ in \eqref{def:V_vector} contains all dealers' controls and the process $\mathbf{X}$ includes the states of the model. 
\begin{theorem} \label{thm:solution}
Under Assumption \ref{assum:bounded} there exists a unique Nash equilibrium $u^{*}=(u^{i,*})_{i\in [N]}$ to the game, in the sense of Definition \ref{def:nash} which is given by the following feedback form, 
    \begin{equation} \label{eq:feedback}
    \mathbf{V}^{}_{t} = \Big(\mathbf{H}^{V}(T-t)\Big)^{-1}\Big(\mathbf{c}-\mathbf{H}^{X}(T-t)\mathbf{X}_{t}\Big), \quad 0\leq t \leq T.  
\end{equation}
\end{theorem}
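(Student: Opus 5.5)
Our approach is variational: we compute the Gateaux derivative of each $\altcal{J}^i$ in the direction of dealer $i$'s own controls, show that $\altcal{J}^i$ is strictly concave in $u^i$ for fixed $u^{-i}$, and thereby reduce the Nash problem to a coupled system of linear first-order conditions. This system can then be written as a linear FBSDE in $(\mathbf{X},\mathbf{V})$ and solved with the matrix exponential $\mathbf{G}$.

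\textbf{Step 1 (concavity).} Fix $u^{-i}$ and perturb $u^i$ by $\varepsilon w=\varepsilon(w^a,w^b,w^q)$. We first compensate the Poisson integrals, so that $dZ^{a,i}$ contributes $\frac{\rho^a}{N}(1-\kappa(\delta^{a,i}-\bar\delta^a))\,dt+\mu\,dt$, and similarly on the bid side. Written this way, $\altcal{J}^i$ is a quadratic functional of $u^i$. The second-order term collects the following contributions:
\begin{itemize}
\item $-\frac{\rho^a\kappa}{N}(1-\frac1N)\int (w^a)^2\,dt$ and the analogous bid term, coming from the spread revenue;
\item $-\frac{\epsilon^i}{2}\int (w^q)^2\,dt$, from slippage;
\item $-\phi^i\int (\delta X)^2\,dt-\alpha^i(\delta X_T)^2$, from the inventory penalties;
\item a cross term from the impact $I$, namely $-\int \delta I\,(w^q\,dt+\text{flow variations})$.
\end{itemize}
The pure impact part $-\lambda\int\!\!\int e^{-\beta(t-s)}w^q_sw^q_t\,ds\,dt$ is nonpositive because the kernel $e^{-\beta|t-s|}$ is positive definite. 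The cross terms between $\delta I$ and the flow perturbations $\frac{\kappa}{N}(1-\frac1N)\rho^{a,b}w^{a,b}$ are not sign-definite. We bound them by Young's inequality against the strictly negative spread terms and the impact term, using $\int\delta I\cdot w\le \frac{\lambda}{\beta}\|w^q\|\|w\|$. We expect the resulting requirement on parameters to be exactly Assumption \ref{assum:bounded}(ii), $\beta/\lambda\ge \kappa(N-1)(\rho^a+\rho^b)/(4N^2)$. The conclusion is strict concavity, so any critical point is the unique maximiser of $\altcal{J}^i(\cdot,u^{-i})$.

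\textbf{Step 2 (first-order conditions).} Setting the first variation to zero for all $w$, we obtain linear identities. The terms involving $\int_t^T e^{-\beta(s-t)}(\cdot)\,ds$ conditioned on $\mathcal{F}_t$ arise because a perturbation at time $t$ affects $I$ on $[t,T]$; these are exactly what $\Gamma^i$ in \eqref{def:Gamma}–\eqref{def:tildeN} encodes. With this in mind, the derivatives in $\delta^{a,i}$ and $\delta^{b,i}$ give algebraic relations between $\delta^{a,i}_t$, $\bar\delta^a_t$, $\E[X^i_T\text{-type terms}|\mathcal{F}_t]$, $I_t$ and $\Gamma^i_t$. The derivative in $q^i$ gives $\epsilon^iq^i_t=-\Gamma^i_t-I_t+\E[\ldots|\mathcal{F}_t]$ involving $2\phi^i\int_t^T X^i_s\,ds$ and $-2\alpha^iX^i_T$. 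Differentiating these identities in $t$ turns them into a linear FBSDE
\[
d\begin{pmatrix}\mathbf{V}_t\\ \mathbf{X}_t\end{pmatrix}=\mathbf{K}\begin{pmatrix}\mathbf{V}_t\\ \mathbf{X}_t\end{pmatrix}dt+dM_t,
\]
with martingale $M$ and terminal condition $\mathbf{T}(\mathbf{V}_T,\mathbf{X}_T)^\top=\mathbf{c}$. The constant entry $1$ of $\mathbf{V}$ handles the affine terms coming from $\rho^{a,b}$ and $\mu$.

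\textbf{Step 3 (solving and verification).} Because the system is linear with constant coefficients,
\[
\E\big[(\mathbf{V}_T,\mathbf{X}_T)\,\big|\,\mathcal{F}_t\big]=\mathbf{G}(T-t)(\mathbf{V}_t,\mathbf{X}_t).
\]
Applying $\mathbf{T}$ gives $\mathbf{H}^V(T-t)\mathbf{V}_t+\mathbf{H}^X(T-t)\mathbf{X}_t=\mathbf{c}$, and inverting via Assumption \ref{assum:bounded}(i) yields the feedback form \eqref{eq:feedback}.

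For existence, we substitute \eqref{eq:feedback} into the state dynamics. The result is a linear SDE for $\mathbf{X}$, driven by Poisson noise, with bounded coefficients; the bound is again Assumption \ref{assum:bounded}(i). This SDE has a unique square-integrable solution, so the controls are admissible, and reversing the computation shows the first-order conditions hold. By Step 1 each $u^{i}$ is then a best response, so the profile is a Nash equilibrium in the sense of Definition \ref{def:nash}.

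For uniqueness, any equilibrium must satisfy the first-order conditions, and hence the FBSDE. Any square-integrable solution of the FBSDE satisfies the conditional-expectation identity above, which forces \eqref{eq:feedback}.

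The main obstacle is Step 1 together with the bookkeeping in Step 2. The concavity estimate must control the interaction between spread perturbations, the order flow they induce, and transient impact sharply enough to reproduce condition (ii). Step 2 also requires correctly identifying $\Gamma^i$ and the matrix $\mathbf{K}$ when differentiating conditional expectations of the future impact.
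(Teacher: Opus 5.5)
Your overall architecture matches the paper's: strict concavity, Gateaux first-order conditions, a linear FBSDE in $(\mathbf{V},\mathbf{X})$ with drift matrix $\mathbf{K}$, decoupling through $\mathbf{G}(T-t)$ and inversion of $\mathbf{H}^V$, then admissibility of the closed loop. The genuine gap is in Step 1, in how you absorb the cross terms $c_a\langle\delta I,w^a\rangle$ and $c_b\langle\delta I,w^b\rangle$, where $c_\ell=\kappa(N-1)\rho^\ell/N^2$.

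Positive definiteness of $e^{-\beta|t-s|}$ only gives $-\langle\delta I,w^q\rangle\le 0$. It gives no coercivity in $\|w^q\|_{L^2}$, because a rapidly oscillating $w^q$ produces a tiny $\delta I$. So once you use $|\langle\delta I,w^\ell\rangle|\le\frac{\lambda}{\beta}\|w^q\|\,\|w^\ell\|$, only the slippage term $-\frac{\epsilon^i}{2}\|w^q\|^2$ can absorb the resulting $\|w^q\|^2$. Even the optimal Young split then requires $\frac{\lambda^2}{4\beta^2}(c_a+c_b)\le\frac{\epsilon^i}{2}$, that is, $\beta^2/\lambda^2\ge\kappa(N-1)(\rho^a+\rho^b)/(2N^2\epsilon^i)$.

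This is not Assumption \ref{assum:bounded}(ii), and it fails for the parameters of Table \ref{tab:params}: it needs roughly $\beta/\lambda\ge 4.2$, whereas (ii) only needs $\beta/\lambda\ge 0.87$. So your Step 1 does not deliver concavity under the stated assumptions. Uniqueness and the sufficiency of the first-order conditions both rest on that step.

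The missing idea is to measure the impact contribution in $\delta I$ rather than in $w^q$. Since $d(\delta I)_t=(-\beta\,\delta I_t+\lambda w^q_t)\,dt$ with $\delta I_0=0$, integration by parts gives
\[
\int_0^T\delta I_t\,w^q_t\,dt=\frac{1}{2\lambda}(\delta I_T)^2+\frac{\beta}{\lambda}\int_0^T(\delta I_t)^2\,dt .
\]
So the impact term is coercive in $\|\delta I\|^2$ with constant $\beta/\lambda$. Then $c_\ell|\langle\delta I,w^\ell\rangle|\le c_\ell\|w^\ell\|^2+\frac{c_\ell}{4}\|\delta I\|^2$, and the quadratic part of $\altcal{J}^i$ is bounded above by
\[
-\Big(\frac{\beta}{\lambda}-\frac{c_a+c_b}{4}\Big)\|\delta I\|^2-\frac{\epsilon^i}{2}\|w^q\|^2 ,
\]
minus the inventory penalties. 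This is exactly where (ii) comes from. Strictness follows from the $\epsilon^i$ term if $w^q\neq0$, and from the spread terms if $w^q=0$ (then $\delta I=0$). This is the content of Proposition \ref{prop:uniqueness}.

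A minor bookkeeping point in Step 2: with $\Gamma^i$ as in \eqref{def:Gamma}, the $q$-condition reads
\[
\epsilon^iq^i_t=-P^{\ast}_t-I_t+\Gamma^i_t+\E\Big[P^{\ast}_T-2\alpha^iX^i_T-2\phi^i\int_t^TX^i_s\,ds\,\Big|\,\mathcal{F}_t\Big],
\]
so $\Gamma^i$ enters with a plus sign. Also, $\Gamma^i$ does not enter the spread conditions, since perturbing $\delta^{a,i},\delta^{b,i}$ leaves $I$ unchanged.
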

The proof of Theorem \ref{thm:solution} is given in Appendix \ref{appendix:proof}.
\begin{remark}
In Proposition \ref{prop:FBSDE}, we use a variational approach to derive a system of FBSDEs that characterises the Nash equilibrium defined in Definition \ref{def:nash}, under a slightly more relaxed assumption that $Z^{a}$ and $Z^{b}$ are non-decreasing, non-negative, progressively measurable processes. To obtain closed-form solutions, we focus on the specific case of Poisson inflow in Theorem \ref{thm:solution}.
 \end{remark}
\begin{remark}
Assumption \ref{assum:bounded}(i) is essential for deriving the closed-form solution in Theorem \ref{thm:solution} and for establishing its admissibility. Assumption \ref{assum:bounded}(ii) is required to prove the concavity of the individual agents' cost functionals \eqref{def:objective} (see Proposition \ref{prop:uniqueness}). This concavity, in turn, is necessary to establish the uniqueness of the Nash equilibrium.
\end{remark}

\subsection{Evaluation metrics} \label{sec:metrics}
We introduce several metrics we use to investigate our experiments. 

\begin{itemize}
\item \textbf{Nominal spread capture}
\begin{equation}
{S}^i_T = \int_{0}^{T}\delta^{a, i}_{t} dZ^{a, i}_{t} + \int_{0}^{T}\delta^{b, i}_{t} dZ^{b, i}_{t} 
\end{equation}
Nominal spread capture measures the profit from `making the spread'.
\item\textbf{Effective spread capture}
\begin{equation}
\altcal{S}^i_T = \int_{0}^{T}\left(I_{t}+\delta^{a, i}_{t}\right)dZ^{a, i}_{t} - \int_{0}^{T}\big(I_{t}-\delta^{b, i}_{t}\big)dZ^{b, i}_{t} 
\end{equation}
Effective spread capture measures the profit from `making the spread' after price impact.
\item\textbf{Total hedging costs}
\newline
\noindent
Instantaneous hedging costs measure the cost of externalisation due to the immediate effects, for example approximating the primary spread and are given by
\begin{equation}
\tilde{H}^i_T = \frac{\epsilon^{i}}{2}\int_{0}^{T}(q^{i}_{t})^{2}dt.
\end{equation}
Impact hedging costs measure the cost of externalisation due to transient price impact. Note that unlike instantaneous hedging costs this depends indirectly on the trading of other dealers in the market. Impact hedging costs are given by 
\begin{equation}
    \hat{H}^i_T = \int_0^T I_{t} q_{t}^{i} dt.
\end{equation}
Total hedging costs are the sum of instantaneous hedging costs and impact hedging costs and are thus given by
\begin{equation} \label{def:total_hedging_costs}
    \altcal{H}_{T}^{i} = \tilde{H}^{i}_{T} + \hat{H}^{i}_{T}.
\end{equation}
\item\textbf{P\&L}
\begin{equation} \label{def:pnl}
\begin{split}
{P\&L}^i_T 
&= \altcal{S}^i_T -\altcal{H}^i_T + \int_{0}^{T} P^{\ast}_{t} (dZ^{a, i}_{t} - dZ^{b, i}_{t} - q^{i}_{t}dt) + (P^{\ast}_{T} X_T^i - p x_0^i)
\end{split}
\end{equation}
P\&L is the revenue minus costs of all activity of dealer $i$. This includes effective spread capture, total hedging costs, which is exactly the interior of the objective functional \eqref{def:objective} without the penalty terms.
\item\textbf{Hedging volume}
\begin{equation}
    v^i_T = \int_{0}^{T}|q^{i}_{t}|dt
\end{equation}
Hedging volume is the volume externalised by dealer $i$ on the interbank market.
\item\textbf{Median client's spread}
\begin{equation} \label{def:median_client_spread}
    \left(-I_{t} + \underset{i\in[N]}{\min}\delta^{b, i}_{t} \right)\mathbbm{1}_{\{Z^{b}_{T}>Z^{a}_{T}\}} + \left(I_{t} + \underset{i\in\{1, ..., N\}}{\min}\delta^{a, i}_{t}\right)\mathbbm{1}_{\{Z^{b}_{T}<Z^{a}_{T}\}}. 
\end{equation}
Although clients are not defined as agents in this model, we can consider the trading costs faced by a client. These costs are defined in \eqref{def:median_client_spread} and the reasoning is as follows. In a market where the terminal client order imbalance is positive ($Z^{b}_{T}-Z^{a}_{T}>0$), we say that a selling client is the median client. A selling client's spread cost at time $t$ is the difference between the best price they can sell at, and the unaffected price  $P^{\ast}$, that is 
\begin{equation} \label{eq:selling_median_spread}
    \underset{i\in\{1, ..., N\}}{\max}\left( P^{\ast}_{t} + I_{t} - \delta^{b, i}_{t} \right) - P^{\ast}_{t}   =  I_{t} - \underset{i\in\{1, ..., N\}}{\min}\delta^{b, i}_{t}  .
\end{equation}
In a market where the terminal order imbalance is negative ($Z^{b}_{T}-Z^{a}_{T}<0$), we say that a buying client is the median client. A buying client's spread cost at time $t$ is the difference between the best price they can buy at, and the unaffected price
\begin{equation} \label{eq:buying_median_spread}
    \underset{i\in\{1, ..., N\}}{\min} \left( P^{\ast}_{t} + I_{t} + \delta^{a, i}_{t} \right)  -  P^{\ast}_{t}  =  I_{t} + \underset{i\in\{1, ..., N\}}{\min}\delta^{a, i}_{t}.
\end{equation}
Taking into account the negative sign of the price impact for the sale, the median client's spread at time $t$ is given by \eqref{def:median_client_spread}. High median client's spreads indicate that the majority of the dealers' clients are in an unfavourable scenario.  
\item\textbf{Median client's mean spread} 
\begin{equation} \label{def:median_client_best_spread}
    \left(\frac{1}{T}\int_{0}^{T}-I_{t} + \underset{i\in\{1, ..., N\}}{\min}\delta^{b, i}_{t}dt \right)\mathbbm{1}_{\{Z^{b}_{T}>Z^{a}_{T}\}} + \left(\frac{1}{T}\int_{0}^{T}I_{t} + \underset{i\in\{1, ..., N\}}{\min}\delta^{a, i}_{t}dt\right)\mathbbm{1}_{\{Z^{b}_{T}<Z^{a}_{T}\}}
\end{equation}
Similar to the median client's spread, the median client's mean spread is the average median client's spread per unit time.
\end{itemize}

\section{Main Model Outputs} \label{sec:numerical_results}

In this section we present the main insights and conclusions arising from Theorem \ref{thm:solution}. We investigate how the presence of native externalisers in a pool of  internalisers affects P\&L of the dealers as well as clients' executed prices. We model internalisers as dealers who have lower running and terminal inventory penalties, and are therefore more likely to be able to offset positions with opposing client orders. We set their running inventory costs and terminal inventory penalty coefficients to $\phi^{i}=0.001, \alpha^{i}=0.001$ (see \eqref{def:objective}). Recall that internalisers also have the option to externalise in the market. On the other hand, native externalisers are uncomfortable with holding inventory for long periods of time and therefore are prepared to aggressively unwind their positions, while creating substantial price impact. Their running inventory costs and terminal inventory penalty coefficients are set much higher to $\phi^{i}=0.1, \alpha^{i}=0.1$. These parameters  are summarised in Table \ref{tab:params}. 

Moreover, we wish to display how these effects last over a prolonged time, such as a trading day or week. However, Nash equilibria in our model are unstable over such time horizons. We therefore use the solution at time $t=0$ across a 24-hour simulation period. We set all dealers' initial inventories to zero, that is $x^{i}_{0}=0$, $i\in[N]$. Other model parameters are also given in Table \ref{tab:params}. Note that in Table \ref{tab:params} the intensities of order flow on both the ask and bid side are equal and thus the order imbalance, defined by 
\begin{equation}\label{def:order-imbalance}
    \text{Imb}_{t} = Z^{b}_{t} - Z^{a}_{t}, \quad 0\leq t\leq T, 
\end{equation}
yields a mean-zero net order-flow.   

\begin{table}[H]
\begin{center}
\caption{Model parameter values}\label{tab:params}
\vskip-0.2cm
\small
\setlength{\tabcolsep}{4.5pt}
\begin{tabular}{lclllcl}
\toprule
\multicolumn{2}{l}{\textbf{Parameter}} & \textbf{Description} && \multicolumn{2}{l}{\textbf{Parameter}} & \textbf{Description}\\
\midrule
$N$              & 6     & number of dealers
&& $p$         & 0     & initial price level \\
$\sigma$         & 5.0   & price volatility
&& $\rho^{a,b}$  & 25    & client flow rate \\
$T$              & 24    & time period
&& $\mu$         & 1     & idiosyncratic flow rate \\
$\kappa$         & 0.5   & price sensitivity of flow
&& $\phi^{i}$    & 0.001 & internaliser inventory risk-aversion \\
$\lambda$        & 1.0   & transient price impact
&& $\phi^{i}$    & 0.1   & externaliser inventory risk-aversion \\
$\beta$          & 1.0   & impact decay-rate
&& $\alpha^{i}$  & 0.001 & internaliser terminal inventory penalty \\
$\epsilon_i$     & 0.1   & externalisation cost $\forall i\in [N]$
&& $\alpha^{i}$  & 0.1   & externaliser terminal inventory penalty \\
\bottomrule
\end{tabular}
\end{center}
\end{table}

\subsection{Illustrative simulation} \label{sec:unconstrained_scenario}

In this section we show a simulation of the model with the parameters summarised in Table \ref{tab:params}. We wish to simulate over a long period, such as 24 hours, so to ensure numerical stability we run the strategy evaluated at $t=0$ over the entire period, and do the same for the remainder of this section. Moreover, this is justified by an interpretation of the terminal inventory penalty as future running penalty discounted to a time horizon that should not actually be reached, owing to the 24-hour nature of FX markets.

\begin{figure}[H]
  \centering

  \begin{subfigure}[b]{0.32\textwidth}
    \centering
    \includegraphics[width=\linewidth]{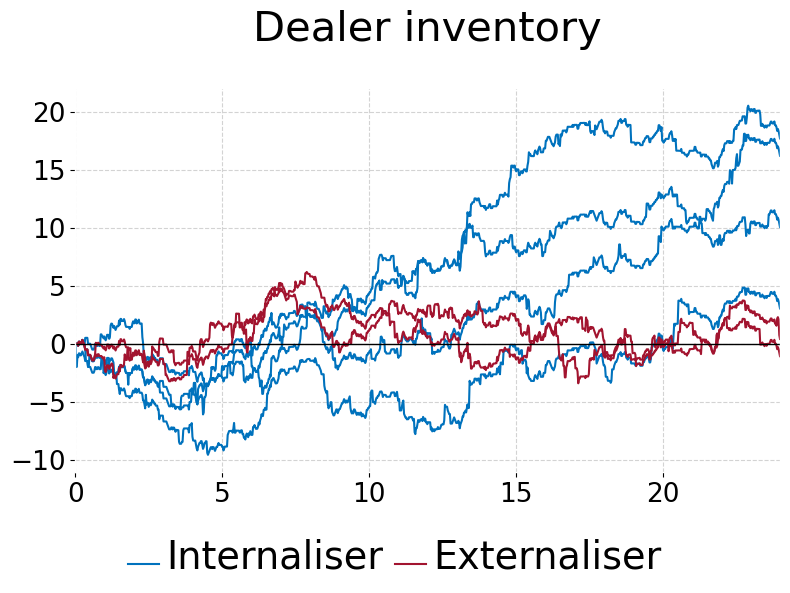}
    \label{fig:X-unconstrained}
  \end{subfigure}\hfill
  \begin{subfigure}[b]{0.32\textwidth}
    \centering
    \includegraphics[width=\linewidth]{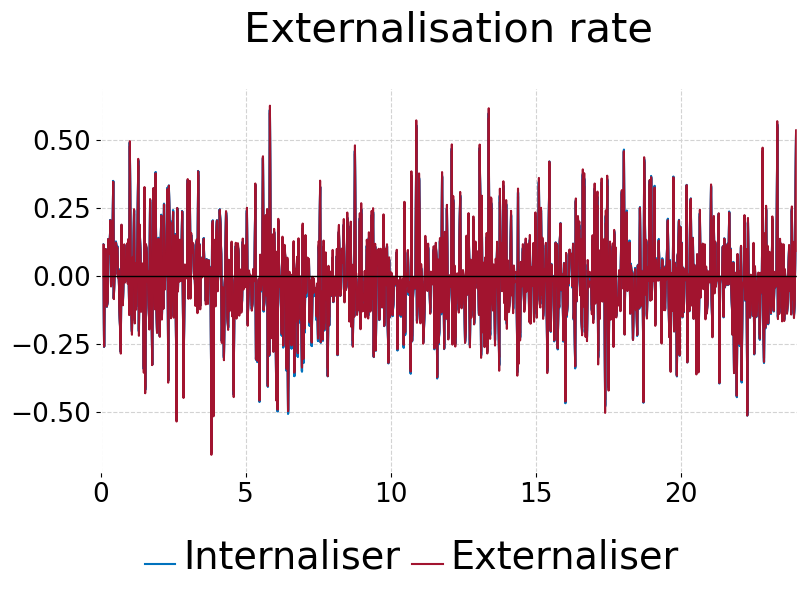}
    \label{fig:q-unconstrained}
  \end{subfigure}\hfill
  \begin{subfigure}[b]{0.32\textwidth}
    \centering
    \includegraphics[width=\linewidth]{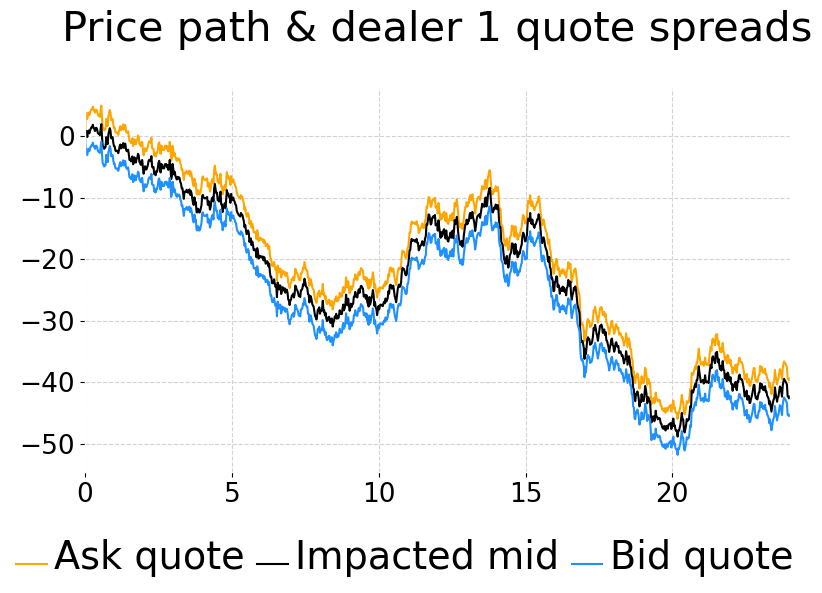}
    \label{fig:prices-unconstrained}
  \end{subfigure}

  \vspace{0.6em}

  \begin{subfigure}[b]{0.32\textwidth}
    \centering
    \includegraphics[width=\linewidth]{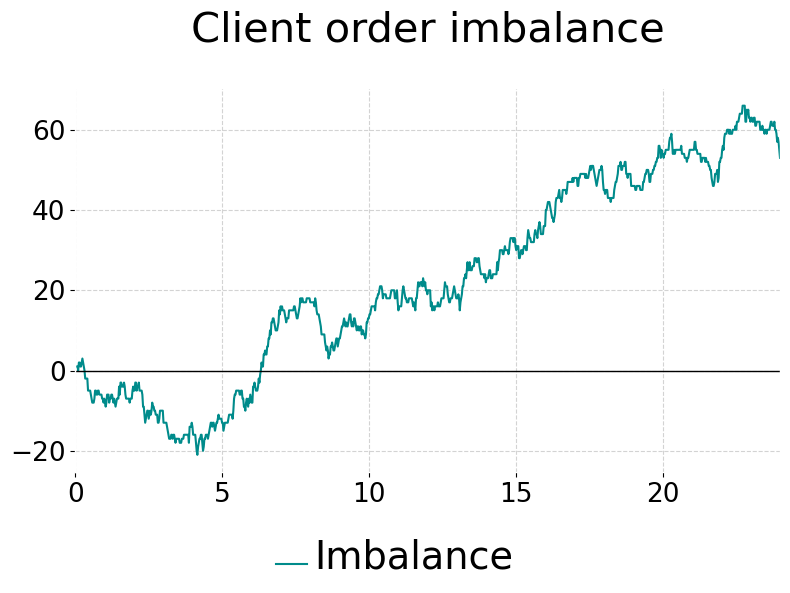}
    \label{fig:imbalance-unconstrained}
  \end{subfigure}\hfill
  \begin{subfigure}[b]{0.32\textwidth}
    \centering
    \includegraphics[width=\linewidth]{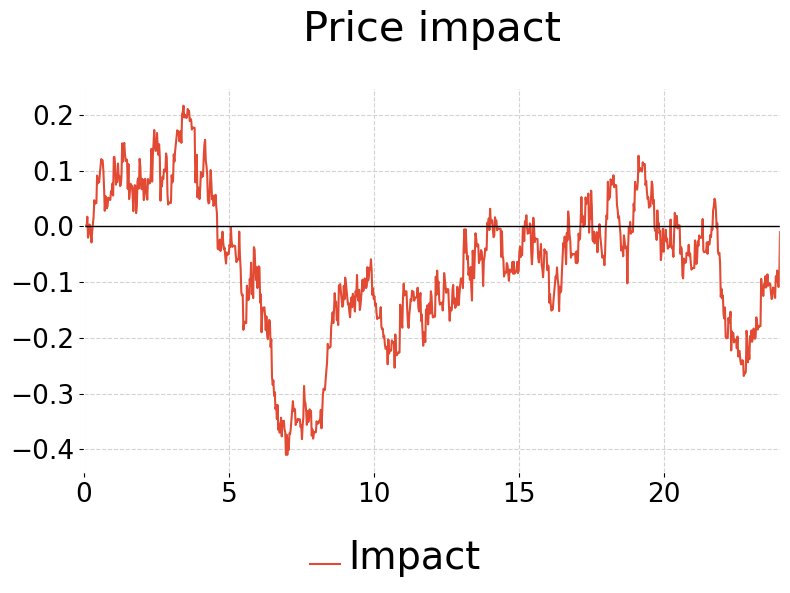}
    \label{fig:I-unconstrained}
  \end{subfigure}\hfill
  \begin{subfigure}[b]{0.32\textwidth}
    \centering
    \includegraphics[width=\linewidth]{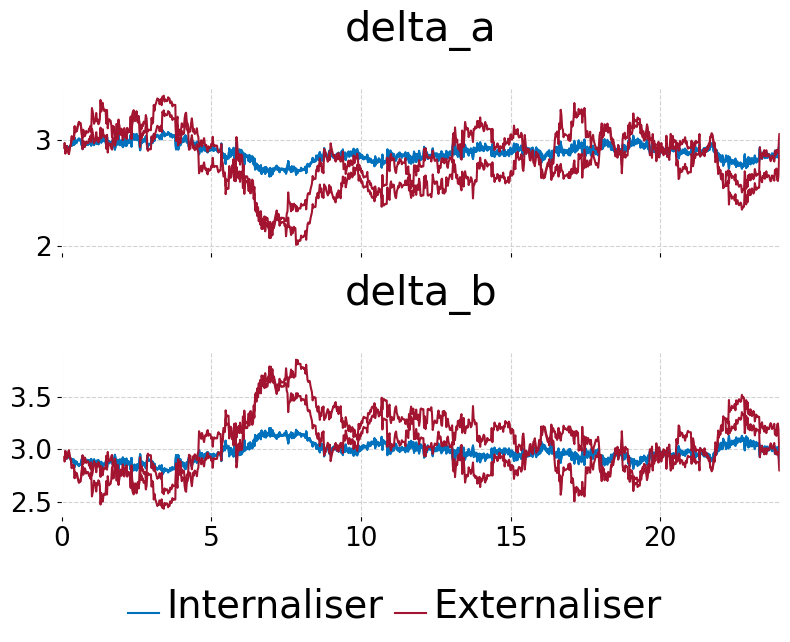}
    \label{fig:deltas-unconstrained}
  \end{subfigure}

  \caption{An example simulation with the parameters from Table \ref{tab:params}. Four dealers are internalisers and two are externalisers.}
  \label{fig:illustrative-unconstrained}
\end{figure}

In Figure \ref{fig:illustrative-unconstrained} we present a realisation of the evolution over time for the dealers' inventories $X^{i}$ (blue for internalisers and red for externalisers), externalisation rates $q^{i}$ (also blue for internalisers and red for externalisers), price paths with dealer 1's spreads, where dealer 1 is an internaliser, OTC order imbalance \eqref{def:order-imbalance}, transient price impact $I$ and dealers' half-spreads $\delta^{a, i}, \delta^{b, i}$ (blue for internalisers and red for externalisers).  Note that the blue line in the upper-middle panel cannot be seen because it is almost exactly hidden by the red line. We investigate this phenomenon further later in this section.

We observe that the externalisers, as required by their strict inventory penalties, maintain positions much smaller in magnitude than the internalisers who are comfortable with deviations from zero. This is achieved by high magnitude externalisation rates, but in fact the internaliser's externalisation rates are very close to the externalisers' and the difference in inventories between internalisers and externalisers is explained mostly by their quoting. We return to this observation later in the section. We also note that the externalisation rates are very noisy and erratic, and appear more reminiscent of high-frequency opportunistic behaviour than of prudent management of inventory risk.

\subsection{Comparison of dealer combinations} \label{sec:comparison-unconstrained}

We consider the same 1000 simulated paths of unaffected price and order flow movements across various scenarios, which vary the composition of dealers in the market, holding the total number of dealers constant at six, $N=6$. In particular, there are six native internalisers in Scenario A, five native internalisers and one native externaliser in Scenario B, four native internalisers and two native externalisers in Scenario C, and so on, until there are six native externalisers in Scenario G. The compositions are summarised in Table \ref{tab:dealer_types}.
\begin{table}[H]
    \centering
    \caption{\textbf{Scenario dealer compositions}}
    \label{tab:dealer_types}
    \renewcommand{\arraystretch}{1.2}
    \setlength{\tabcolsep}{10pt}
    \begin{tabular}{llcccccc}
        \hline
        \textbf{Scenario} & \textbf{A} & \textbf{B} & \textbf{C} & \textbf{D} & \textbf{E} & \textbf{F} & \textbf{G} \\
        \hline
        \textbf{Internalisers} &  6 & 5 & 4 & 3 & 2 & 1 & 0 \\
        \textbf{Externalisers} &  0 & 1 & 2 & 3 & 4 & 5 & 6 \\
        \hline
    \end{tabular}
\end{table}
For each scenario, the means and 95\% confidence intervals of the dealers' P\&Ls, nominal and effective spread captures, hedging costs, hedging volumes, and median client mean spread (all defined in Section \ref{sec:metrics}) are given in Table \ref{tab:keymetrics-unconstrained} averaged across the two dealer types. This allows us to draw some immediate conclusions.

\textbf{Internalisers become more like externalisers as more externalisers are present.}
We find that internalisers behave more like externalisers when the proportion of externalisers in the pool increases. Internalisers externalise much more aggressively in this case, which can be seen in the increased hedging costs \eqref{def:total_hedging_costs}. This is illustrated in Figure \ref{fig:q-comparison-unconstrained}, which shows a single example path of dealer 1's externalisation rate $q^{1}$ from the same random draws of unaffected price and order flow across four scenarios (note dealer 1 is a native internaliser in scenarios A, C, E and a native externaliser in G). The magnitudes of the externalisation rates rise dramatically when there are more native externalisers present even as the dealer type of dealer 1 remains the same. A consequence of this is higher instantaneous trading costs, and that much more transient price impact accumulates (see Figure \ref{fig:I-percentiles-unconstrained} which shows bands within which 5\%-95\% of paths of transient impact fall), which results in higher hedging costs (see Table \ref{tab:keymetrics-unconstrained}). 

\begin{figure}[H]
    \centering
    \begin{subfigure}[t]{0.48\textwidth}
        \centering
        \includegraphics[width=\linewidth]{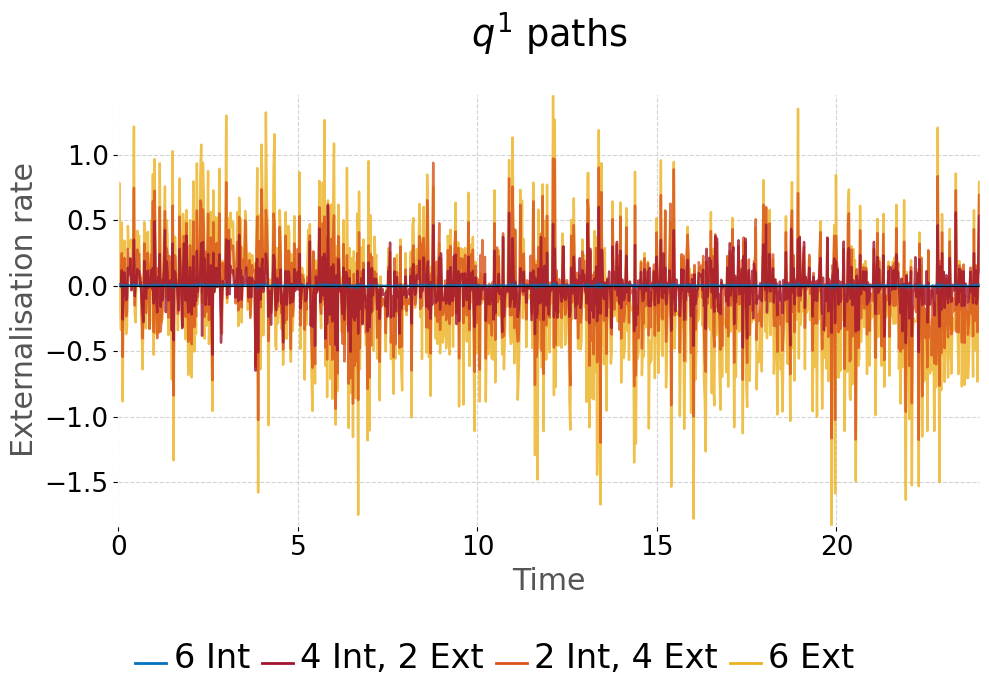}
        \caption{}
        \label{fig:q-comparison-unconstrained}
    \end{subfigure}
    \hfill
    \begin{subfigure}[t]{0.48\textwidth}
        \centering
        \includegraphics[width=\linewidth]{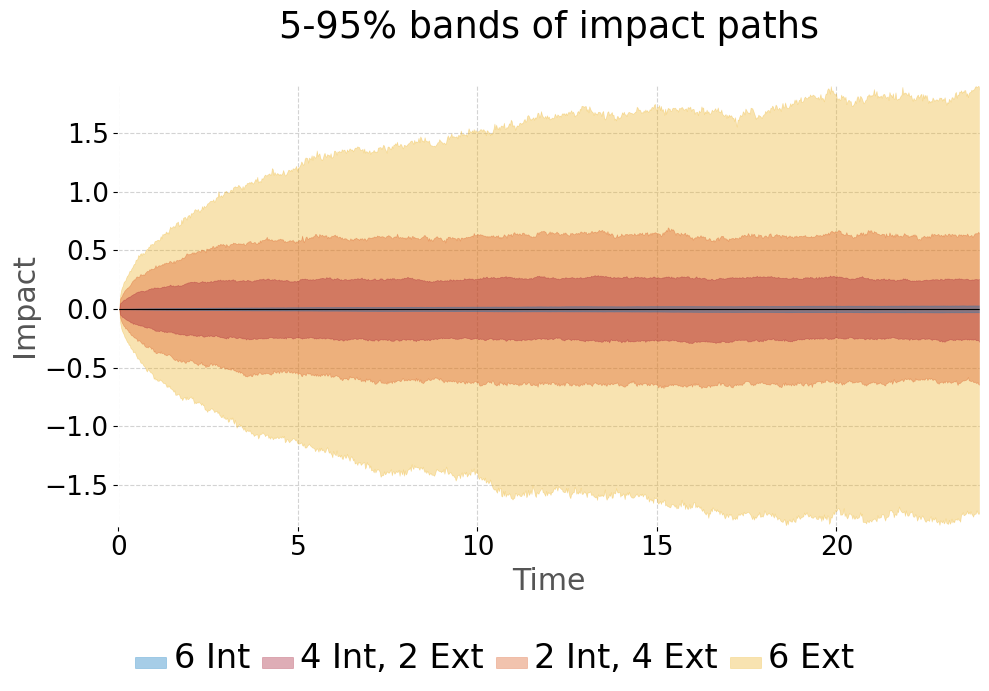}
        \caption{}
        \label{fig:I-percentiles-unconstrained}
    \end{subfigure}
    \caption{(a) A single path of internaliser 1's externalisation rate in Scenarios A (blue), C (red), E (orange), and G (yellow). (b) 5--95\% percentiles for the paths of transient price impact $I$ across time in the same scenarios. Both use the impact-aware strategy profile.}
    \label{fig:unconstrained-side-by-side}
\end{figure}

\textbf{Internalisers have increased effective spread capture as more externalisers are present.} 
In Table \ref{tab:keymetrics-unconstrained}, the internalisers' effective spread capture rises as externalisers are added to the pool. This occurs because both the price impact and the externalisers' individual mids are anticorrelated with the externalisers' inventories. This can be understood most easily through a hypothetical example. Suppose the median externaliser position is short due to large client inflow from the ask side of the book. Then the externalisers are buying and creating price impact which pushes the price up. At the same time, the externalisers skew aggressively, quoting very wide ask spreads and very narrow bid spreads in order to balance their inventory. The internalisers, however, do not replicate this skew, and sell more through clients' flow when the price is temporarily elevated due to impact. The converse logic applies when the median externaliser position is long: in that case, the internalisers buy at temporarily depressed prices.

This mechanism increases the internalisers' effective spread capture and therefore their P\&L, while also increasing their risk. A visual illustration can be seen in Figure \ref{fig:illustrative-zoomed}, which depicts the same simulation as in Figure \ref{fig:illustrative-unconstrained} zoomed to focus on the first ten hours of the day. In particular, between the sixth and eighth hour a sudden increase in client selling (dealer buying) causes dealers to sell via externalisation while adjusting their skews so that impact drives the price downward. The externalisers adjust their skews more aggressively than the internalisers, leaving the internalisers buying at temporarily depressed prices.

\begin{figure}[H]
  \centering
  \begin{subfigure}[b]{0.32\textwidth}
    \centering
    \includegraphics[width=\linewidth]{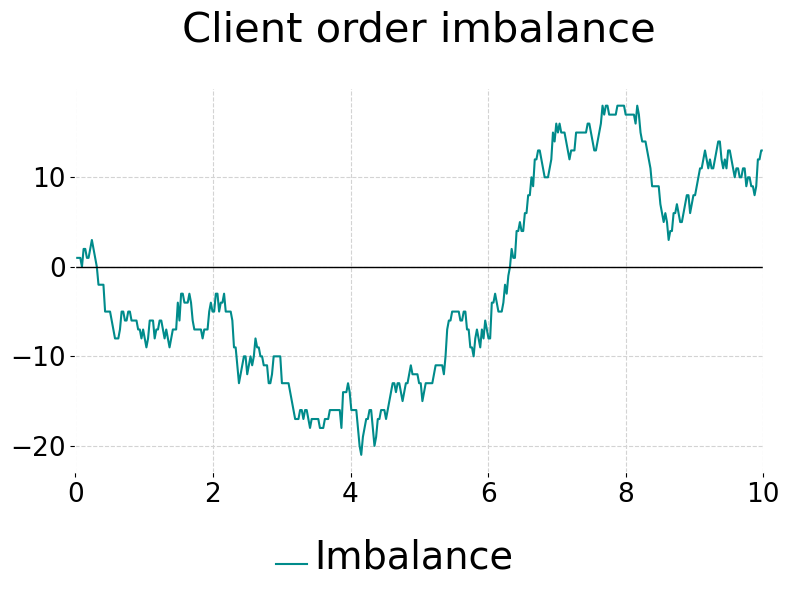}
    \label{fig:imbalance-zoomed}
  \end{subfigure}\hfill
  \begin{subfigure}[b]{0.32\textwidth}
    \centering
    \includegraphics[width=\linewidth]{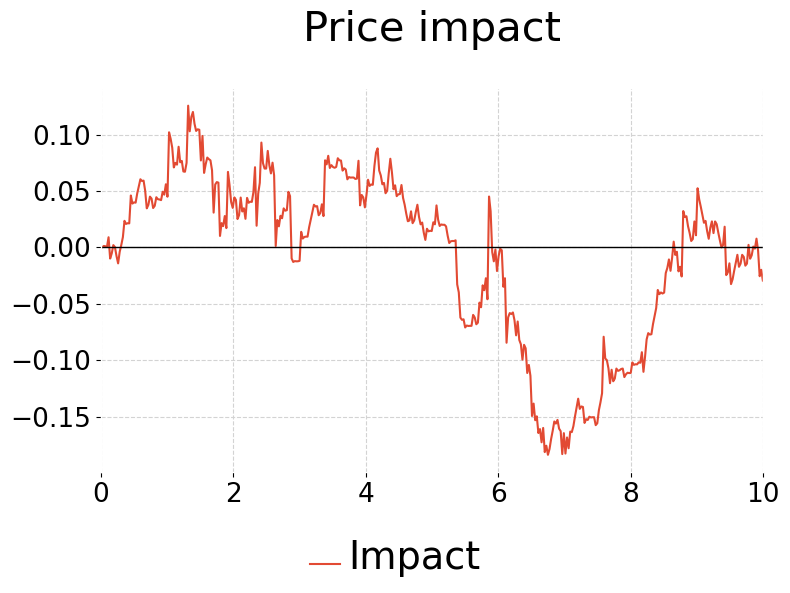}
    \label{fig:I-zoomed}
  \end{subfigure}\hfill
  \begin{subfigure}[b]{0.32\textwidth}
    \centering
    \includegraphics[width=\linewidth]{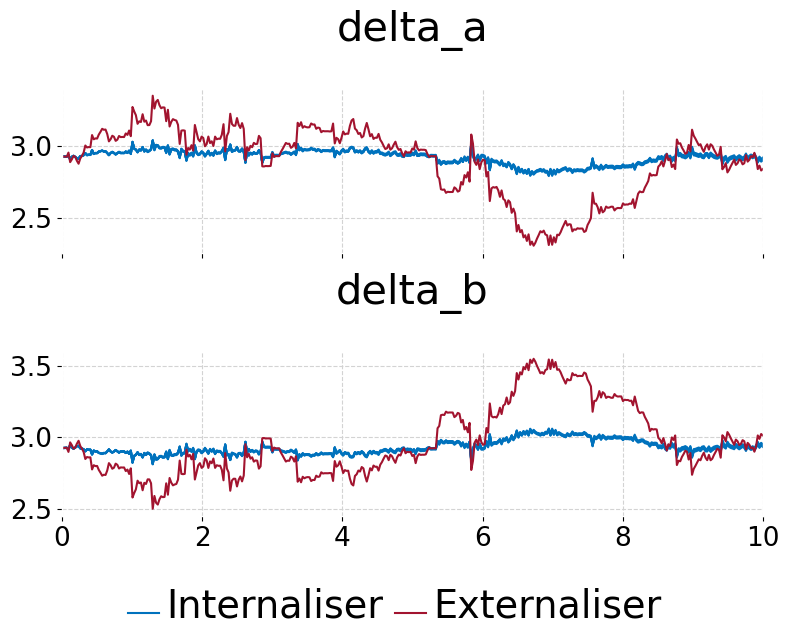}
    \label{fig:deltas-zoomed}
  \end{subfigure}

  \caption{The example simulation from Figure \ref{fig:illustrative-unconstrained}, zoomed to focus on the first ten hours. The effect that leads to a difference in effective spread capture between dealer types is most prominent between the sixth and eighth hours. Four dealers are internalisers and two are externalisers.}
\label{fig:illustrative-zoomed}
\end{figure}

\begin{table}[H]
\begin{center}
\caption{Client and dealer performance metrics: impact-aware strategy profile}
\label{tab:keymetrics-unconstrained}
\vskip-0.2cm
\small
\setlength{\tabcolsep}{4.5pt}
\fontsize{9.2pt}{11.5pt}\selectfont
\begin{tabular*}{\textwidth}{@{\extracolsep{\fill}}lcccccc@{}}
\toprule
            &   & \multicolumn{2}{c}{\textbf{Spread capture}} & \textbf{Hedging} & \textbf{Hedging} & \textbf{Median client} \\
\cline{3-4}
            & \textbf{P\&L} & Nominal $S$ & Effective $\altcal{S}$ & \textbf{costs} $\altcal{H}$ & \textbf{volume $v$} & \textbf{spread} \\
\midrule
\multicolumn{6}{l}{\emph{Scenario A : 6 internalisers}} & 2.930 $\pm$ 0.001 \\
Internaliser    & 726.916 $\pm$ 6.632 & 2.925 $\pm$ 0.000 & 3.629 $\pm$ 0.003 & 0.011 $\pm$ 0.048 &  0.109 ± 0.001 &  \\
\midrule
\multicolumn{6}{l}{\emph{Scenario B : 5 internalisers \& 1 externaliser}} & 2.894 $\pm$ 0.006 \\
Internaliser    & 727.881 $\pm$ 7.591 & 2.925 $\pm$ 0.000 & 3.632 $\pm$ 0.003 & 0.096 $\pm$ 0.168 & 1.561 ± 0.004 &  \\
\cdashline{1-7}[0.4pt/1.6pt]
Externaliser    & 717.681 $\pm$ 3.461 & 2.925 $\pm$ 0.000 & 3.593 $\pm$ 0.006 & 0.153 $\pm$ 0.241 & 1.605 ± 0.004 &  \\
\midrule
\multicolumn{6}{l}{\emph{Scenario C : 4 internalisers \& 2 externalisers}} & 2.901 $\pm$ 0.009 \\
Internaliser    & 729.474 $\pm$ 8.857 & 2.925 $\pm$ 0.000 & 3.638 $\pm$ 0.003 & 0.147 $\pm$ 0.354 &  3.000 ± 0.006 &  \\
\cdashline{1-7}[0.4pt/1.6pt]
Externaliser    & 717.298 $\pm$ 3.016 & 2.925 $\pm$ 0.000 & 3.590 $\pm$ 0.005 & 0.112 $\pm$ 0.291 & 2.980 ± 0.006 &  \\
\midrule
\multicolumn{6}{l}{\emph{Scenario D : 3 internalisers \& 3 externalisers}} & 2.940 $\pm$ 0.013 \\
Internaliser    & 732.590 $\pm$ 10.679 & 2.925 $\pm$ 0.000 & 3.650 $\pm$ 0.004 & 0.401 $\pm$ 0.694 & 4.490 ± 0.010 &  \\
\cdashline{1-7}[0.4pt/1.6pt]
Externaliser    & 716.432 $\pm$ 3.056 & 2.925 $\pm$ 0.000 & 3.587 $\pm$ 0.004 & 0.240 $\pm$ 0.450 & 4.394 ± 0.009 &  \\
\midrule
\multicolumn{6}{l}{\emph{Scenario E : 2 internalisers \& 4 externalisers}} & 3.016 $\pm$ 0.020 \\
Internaliser    & 736.866 $\pm$ 13.500 & 2.925 $\pm$ 0.000 & 3.675 $\pm$ 0.005 & 0.796 $\pm$ 1.270 & 6.110 ± 0.015 &  \\
\cdashline{1-7}[0.4pt/1.6pt]
Externaliser    & 716.672 $\pm$ 3.268 & 2.925 $\pm$ 0.000 & 3.586 $\pm$ 0.003 & 0.461 $\pm$ 0.750 & 5.897 ± 0.011 &  \\
\midrule
\multicolumn{6}{l}{\emph{Scenario F : 1 internaliser \& 5 externalisers}} & 3.166 $\pm$ 0.031 \\
Internaliser    & 753.301 $\pm$ 18.487 & 2.925 $\pm$ 0.000 & 3.736 $\pm$ 0.009 & 2.222 $\pm$ 2.443 & 7.981 ± 0.028 &  \\
\cdashline{1-7}[0.4pt/1.6pt]
Externaliser    & 716.301 $\pm$ 3.740 & 2.925 $\pm$ 0.000 & 3.589 $\pm$ 0.003 & 1.254 $\pm$ 1.411 & 7.549 ± 0.016 &  \\
\midrule
\multicolumn{6}{l}{\emph{Scenario G : 6 externalisers}} & 3.617 $\pm$ 0.064 \\
Externaliser    & 720.607 $\pm$ 4.840 & 2.925 $\pm$ 0.000 & 3.617 $\pm$ 0.004 & 3.961 $\pm$ 3.052 & 9.546 ± 0.034 &  \\
\bottomrule
\end{tabular*}
\end{center}
\end{table}

\subsection{The impact-agnostic scenario}

The crucial aspect of the model underlying each of these qualitative results is the transient impact. In reality, however, it may be difficult to attribute price movements to dislocations caused by impact. Moreover, the institutional and regulatory environment in FX pushes dealers toward risk-management behaviour and away from the explicit exploitation of transient price effects tied to client flow. Real-world dealers may therefore be unable to trade on price impact as a signal in the same way as in the unconstrained stochastic control problem considered in \eqref{def:objective}.

We are therefore interested in investigating the case in which dealers do not use the aggregated price distortion, and hence the externalisation order flow, as a signal when determining their strategies. More broadly, we would like to understand how market effects depend on these two different modelling assumptions, without claiming that reality exactly matches either setting. Solving such a high-dimensional stochastic game under the constraint that each dealer’s control cannot depend functionally on the transient price impact is intractable due to the nonlinearity introduced by this constraint.

In order to approximate such a solution, we modify the Nash equilibrium in  \eqref{eq:feedback} by setting the impact term $I$ in  \eqref{def:V_vector} to zero, that is: 
\begin{equation} \label{def:X_vector_tilde}
    \tilde{\mathbf{X}}_{t} = \left( 
         X^{1}_{t}, \dots, 
        X^{N}_{t}, 0
    \right)^{\top} , \quad 0\leq t \leq T. 
\end{equation}
Recalling the definitions of $\mathbf{H}^{V}, \mathbf{H}^{X}, \mathbf{c}$ in Section \ref{sec:model}, the candidate strategy profile is the variant of \eqref{eq:feedback}, plugging-in $\tilde{\mathbf{X}}$ instead of ${\mathbf{X}}$ to get, 
\begin{equation} \label{eq:candidate}
    \mathbf{V}^{}_{t} = \Big(\mathbf{H}^{V}(T-t)\Big)^{-1}\Big(\mathbf{c}-\mathbf{H}^{X}(T-t)\tilde{\mathbf{X}}_{t}\Big), \quad 0\leq t \leq T, 
\end{equation}
which is a proposed strategy profile agnostic to the price impact $I$ acting as a signal.  

The strategy profile in \eqref{eq:candidate} is an $\epsilon$-Nash equilibrium (with $\epsilon \approx 2\%$ of the P\&L) for the game defined by \eqref{def:objective}, with the same state dynamics as in \eqref{def:P}, \eqref{def:Y}, \eqref{def:S^i}, \eqref{def:Z^i}, and \eqref{def:X}, under the constraint that dealers’ externalisation rates are affine functions of their inventories. Specifically, the constrained class of admissible externalisation rates depends on dealers’ inventories, but not on the price impact:
\begin{equation} \label{adms_constrained} 
    \tilde{\altcal{A}} = \left\{q: q \text{ progressively measurable s.t. }
    \mathbb{E}\left[\int_{0}^{T}q^{2}_{t}dt\right]< \infty \quad \text{and} \quad
    q_t = \sum_{j=1}^{N}f^{j}(t)X_t^{j} + g(t)\right\}
\end{equation}
and similarly for the constrained class of admissible half-spreads. Note that the equilibrium strategies in \eqref{eq:feedback} are affine functions of both inventories and price impact. See Remark \ref{remark:optimality-test} for additional details. The following subsections are dedicated to results arising from the impact-agnostic strategy profile.  

\begin{remark}[$\epsilon$-Optimality test]
\label{remark:optimality-test}
This constrained game is difficult to solve analytically, but we can numerically run a local optimality test. We test many perturbations $\nu$ of the proposed strategy profile in $u$ from \eqref{eq:candidate} and consider $\altcal{J}^i(u^i+\nu, u^{-i})-\altcal{J}^i(u^i,u^{-i})$ 
and perform a search of the optimiser of the constrained problem around $u$. Since we show in Proposition \ref{prop:uniqueness} that our objective functional is strictly concave under Assumption \ref{assum:bounded}, any local maximiser is also a global maximiser. 
We test $100$ perturbations in the constrained admissible sets against the candidate strategy profile \eqref{eq:candidate}, using $500$ simulations for each perturbation to estimate each objective functional. A perturbed strategy profile uses draws $Z^{1}, Z^{2}, Z^{3}$ from a $\altcal{N}(0, \sigma_{\text{pert}})$ distribution and for dealer $i$ uses the strategy profile
\begin{equation}
\begin{split}
    &\delta^{a, i}_{t} = \sum_{j=1}^{N}(f^{1, j}(t)+Z^{1})X^{j}_{t} + g^{1}(t), \\
    &\delta^{b, i}_{t} = \sum_{j=1}^{N}(f^{2, j}(t)+Z^{2})X^{j}_{t} + g^{2}(t), \\
    &q^{i}_{t} = \sum_{j=1}^{N}(f^{3, j}(t)+Z^{3})X^{j}_{t} + g^{3}(t),
\end{split}
\end{equation}
where \(f^{k,j}\), for \(k\in\{1,2,3\}\) and \(j\in[N]\), are the functions corresponding to the strategy profile \eqref{eq:candidate} in non-matrix notation. We consider \(\sigma_{\text{pert}}=\{0.01,0.02,0.03\}\), which produces meaningful changes in the objective functional while remaining within the class of admissible strategies, thereby avoiding blow-up of the system. Table \ref{tab:optimality_test} shows summary statistics of these differences across the perturbation sizes, with statistically insignificant positive results removed. The brackets in Table \ref{tab:optimality_test} contain percentages of the unperturbed objective corresponding to the relevant statistic, to provide a sense of scale. We not only observe that improvements in the objective are very rare, but also that, when they do occur, they are very modest in size (approximately \(1\!-\!2\%\) of the objective) and do not increase with the perturbation size. It can also be shown that the perturbation leading to the largest improvement ceases to yield a significant improvement when applied to a second dealer, and vanishes entirely when applied to a third.
This suggests that our candidate strategy profile is very close to the constrained equilibrium. Crucially, all qualitative results presented in the following sections are robust to the above perturbations. 

\begin{table}[H]
\begin{center}
\caption{Optimality perturbation test: estimates of $\altcal{J}^i(u^i+\nu, u^{-i})-\altcal{J}^i(u^i,u^{-i})$}
\label{tab:optimality_test}
\vskip-0.2cm
\small
\begin{tabular}{llcccc}
\toprule
            &  & \textbf{Median} & \textbf{q95} & \textbf{Maximum} & \% $\mathbf{\leq 0}$ \\
\midrule
$\sigma_{\text{pert}} = 0.01$
    & Internaliser    & $-4.768$ ($-0.69\%$) & $3.106$ ($0.45\%$) & $12.783$ ($1.84\%$) & $93.8$ \\
    & Externaliser    & $-1.074$ ($-0.16\%$) & $-0.073$ ($-0.01\%$) & $-0.013$ ($-0.00\%$) & $100.0$ \\
\midrule
$\sigma_{\text{pert}} = 0.02$
    & Internaliser    & $-13.524$ ($-1.95\%$) & $-0.657$ ($-0.09\%$) & $16.843$ ($2.42\%$) & $96.5$ \\
    & Externaliser    & $-3.747$ ($-0.57\%$) & $-0.966$ ($-0.15\%$) & $-0.016$ ($-0.00\%$) & $100.0$ \\
\midrule
$\sigma_{\text{pert}} = 0.03$
    & Internaliser    & $-25.057$ ($-3.61\%$) & $-3.199$ ($-0.46\%$) & $14.236$ ($2.05\%$) & $99.2$ \\
    & Externaliser    & $-9.124$ ($-1.39\%$) & $-2.687$ ($-0.41\%$) & $-0.706$ ($-0.11\%$) & $100.0$ \\
\bottomrule
\end{tabular}
\vskip0.2cm
{\small The median, 95\textsuperscript{th} percentile, maximum, and percentage of non-positive estimates are shown. In brackets are the percentages of the unperturbed objective.}
\end{center}
\end{table}

\end{remark}

\subsection{Illustrative simulation: impact-agnostic strategy profile}  \label{sec:constrained_scenario}

We now demonstrate the impact-agnostic strategy profile \eqref{eq:candidate} in the same simulation setting as Figure \ref{fig:illustrative-unconstrained}. Specifically, the total market order flow and unaffected price movements are identical. The results are shown in Figure \ref{fig:illustrative-constrained}, which follows the same format as Figure \ref{fig:illustrative-unconstrained}. We immediately observe that the externalisation rates of the externalisers follow a much more consistent path than under the impact-aware strategy. Indeed, we effectively observe only a single path. This is because the dealers all select nearly identical trading trajectories, an effect that will be the focus of the remainder of this section. As in Figure \ref{fig:illustrative-unconstrained}, note that the blue line in the upper-middle panel cannot be seen because it is almost exactly hidden by the red line. This is again a consequence of our first result in Section \ref{sec:comparison-unconstrained}. 

Moreover, the relationship between order imbalance, externalisation rates, price impact, and deltas is now much clearer. This is most evident between 5 and 10 hours into the simulation, when the imbalance rises rapidly, driving both externalisation rates and ask-depths down, while bid-depths increase. These sharply negative trading rates then push the price impact downward, which is also reflected in the price path. However, the impact path is now much smoother, as dealers respond primarily to inventory constraints rather than engaging in short-term signal trading.

\begin{figure}[H]
  \centering

  \begin{subfigure}[b]{0.32\textwidth}
    \centering
    \includegraphics[width=\linewidth]{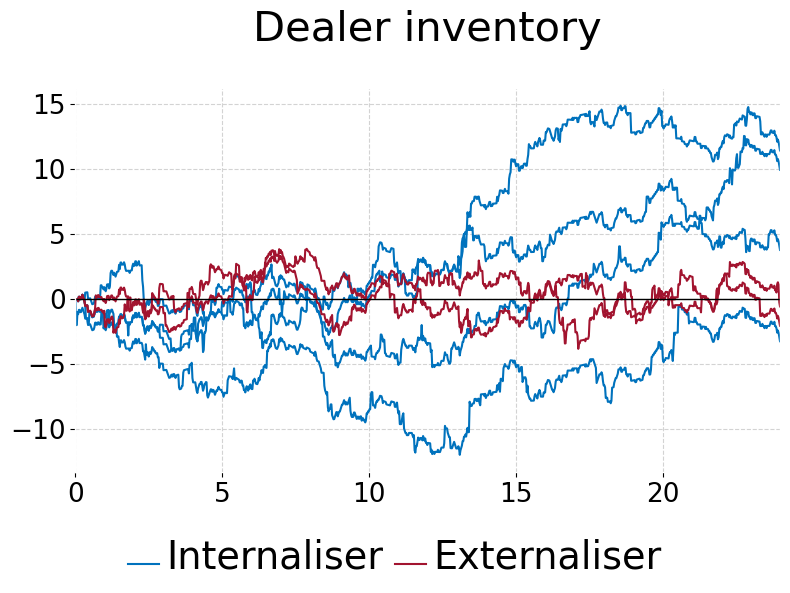}
    \label{fig:X-constrained}
  \end{subfigure}\hfill
  \begin{subfigure}[b]{0.32\textwidth}
    \centering
    \includegraphics[width=\linewidth]{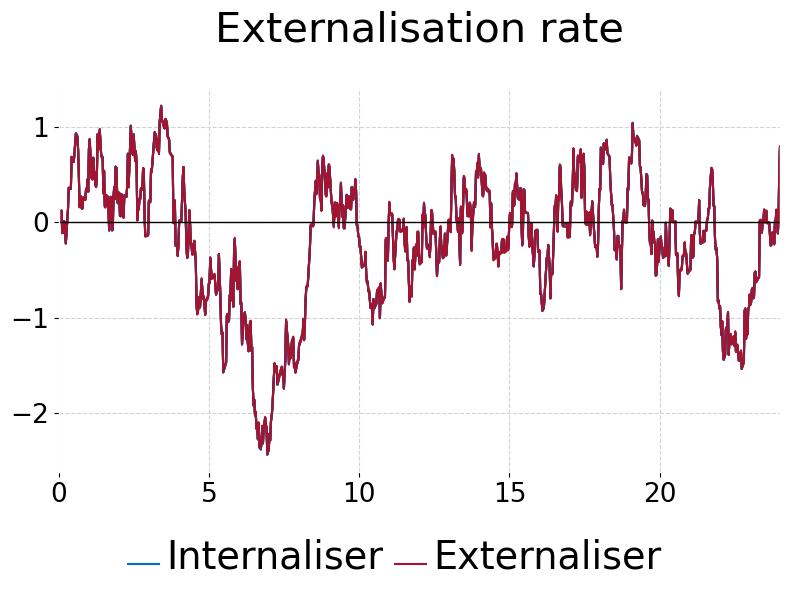}
    \label{fig:q-constrained}
  \end{subfigure}\hfill
  \begin{subfigure}[b]{0.32\textwidth}
    \centering
    \includegraphics[width=\linewidth]{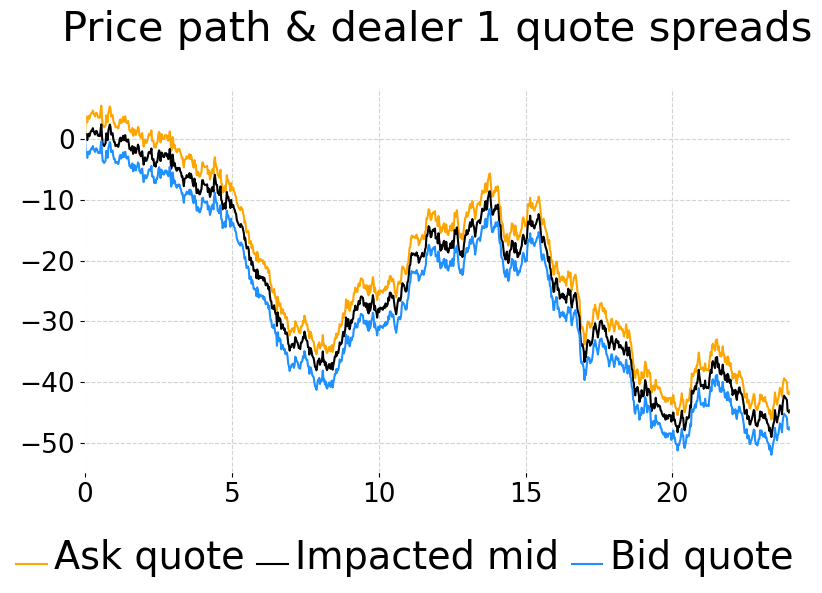}
    \label{fig:prices-constrained}
  \end{subfigure}

  \vspace{0.6em}

  \begin{subfigure}[b]{0.32\textwidth}
    \centering
    \includegraphics[width=\linewidth]{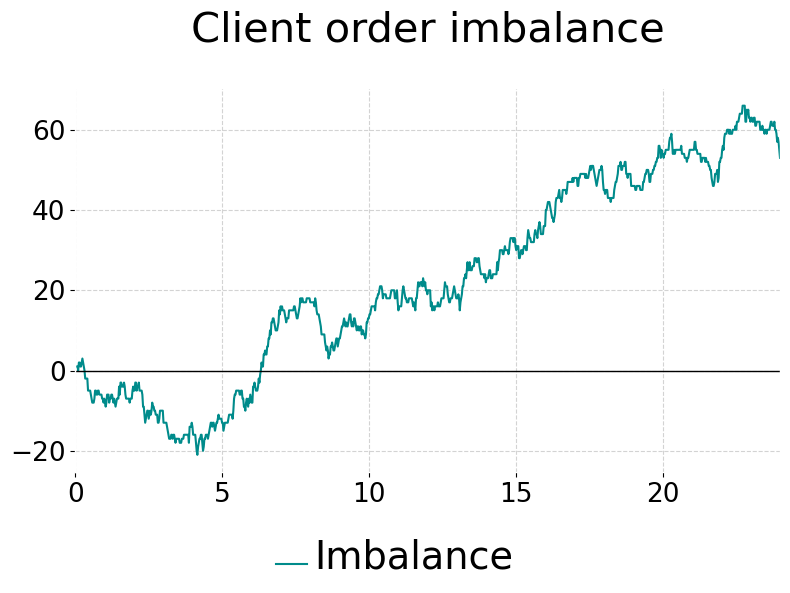}
    \label{fig:imbalance-constrained}
  \end{subfigure}\hfill
  \begin{subfigure}[b]{0.32\textwidth}
    \centering
    \includegraphics[width=\linewidth]{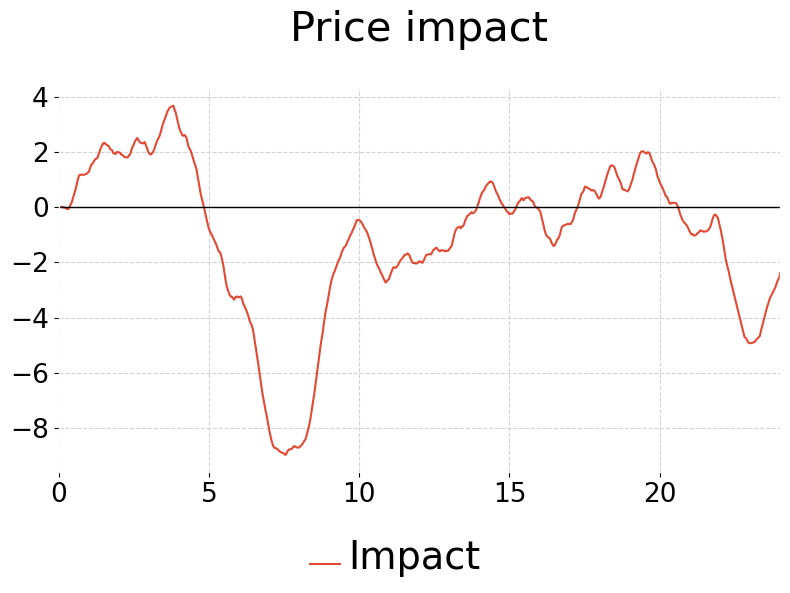}
    \label{fig:I-constrained}
  \end{subfigure}\hfill
  \begin{subfigure}[b]{0.32\textwidth}
    \centering
    \includegraphics[width=\linewidth]{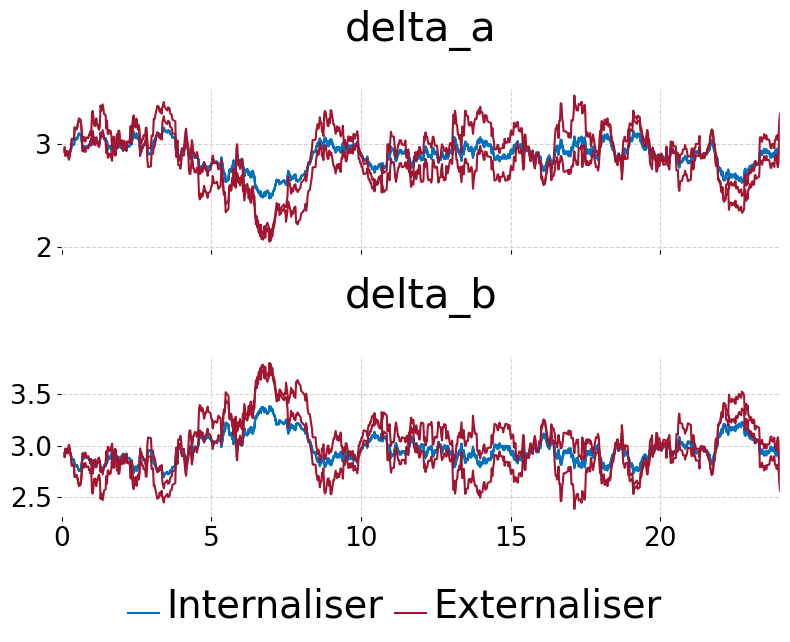}
    \label{fig:deltas-constrained}
  \end{subfigure}

  \caption{An example simulation under the impact-agnostic strategy profile with the parameters from Table \ref{tab:params}. Four dealers are internalisers and two are externalisers.}
  \label{fig:illustrative-constrained}
\end{figure}

\subsection{Comparison of dealer combinations: impact-agnostic strategy profile} \label{sec:comparison-constrained}

As in the impact-aware case, we wish to compare different combinations of dealers present in the market. The means and 95\% confidence intervals of the dealers' P\&Ls, nominal and effective spread captures, hedging costs, hedging volumes, and median client mean spread (all defined in Section \ref{sec:metrics}) are given in Table \ref{tab:keymetrics-constrained} averaged across the two dealer types. We first note that results 1 and 2 also hold under the impact-agnostic strategy profile, and we can also compare Figure \ref{fig:unconstrained-side-by-side} to Figure \ref{fig:constrained-side-by-side}. This immediately yields two further results.

\textbf{Internalisers' P\&L falls due to increased trading costs as more externalisers are present.}
When dealers are impact-agnostic, the P\&L \eqref{def:pnl} of all dealers decreases due to trading costs, as the ratio of the externalisers increases. 
The internalisers are faced with a decision between acting as externalisers (and paying increased trading costs) or continuing to act like internalisers, and having the value of their positions eroded due to price impact. They can also go further and trade the externaliser's impact as a signal, although this still carries trading costs. An example is the case where both a native internaliser and a native externaliser receive a share of bid-side order flow, hence accumulating positive inventory. The native externaliser then offloads this position  by selling, lowering the market price of the asset due to price impact. When the internalisers sit on their position and then receive ask-side order flow to offset the earlier buy, the asset price is lower than before. In such a scenario, internalisers who do not change their behaviour find that they bought high and sold low. This can be seen in Table \ref{tab:keymetrics-constrained}, which shows the mean (and 95\% confidence intervals) of P\&L. 

\textbf{Spread costs rise for the median client as more externalisers are present.}
Under the impact-agnostic strategy profile, the mean spread faced by a median client \eqref{def:median_client_best_spread} increases as more native externalisers are present in the market. Table \ref{tab:keymetrics-constrained} shows the mean (and 95\% confidence intervals) of median client's best spreads \eqref{def:median_client_best_spread} across all 1000 realisations in the different scenarios. In realisations where there are more selling clients than buying clients, dealers on average accumulate positive positions. To unwind these positions, externalisers sell, incurring negative transient price impact. This lowers the market price, so the selling clients trade at worse prices. In addition, the dealers widen spreads on the bid side in an attempt to trade less with selling clients, which also worsens the price for selling clients. The same reasoning applies for realisations where there are more buying clients than selling clients. Note that in the impact-aware case, a similar effect can be seen as the market becomes generally more dominated by externalisers (see Table \ref{tab:keymetrics-unconstrained}), but the relationship is not as strong as in the impact-agnostic case. 

\begin{figure}[H]
    \centering
    \begin{subfigure}[t]{0.48\textwidth}
        \centering
        \includegraphics[width=\linewidth]{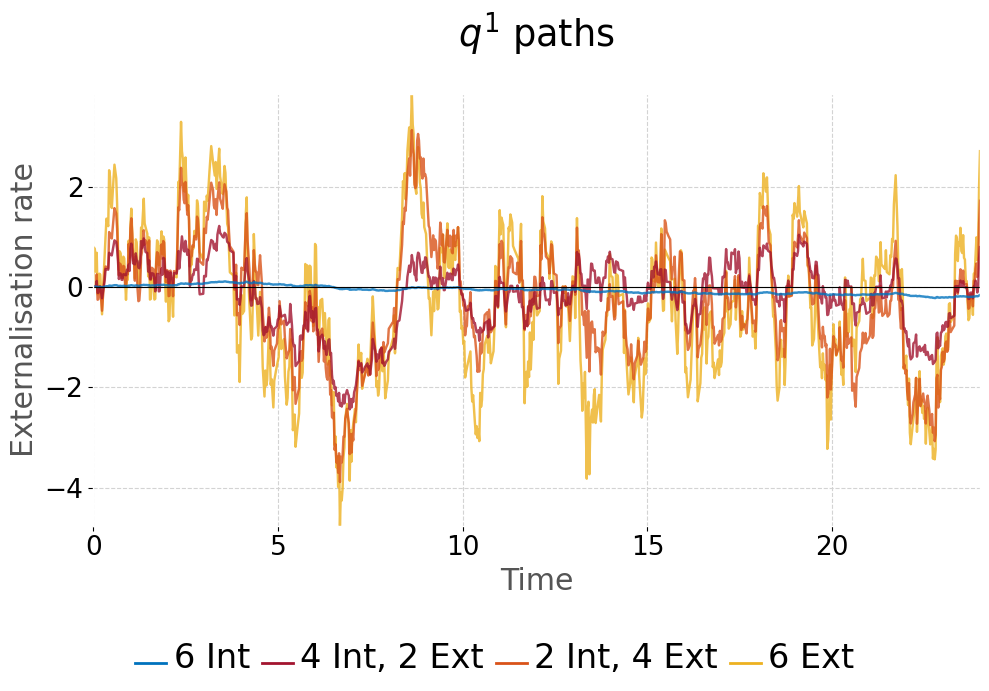}
        \caption{}
        \label{fig:q-comparison-constrained}
    \end{subfigure}
    \hfill
    \begin{subfigure}[t]{0.48\textwidth}
        \centering
        \includegraphics[width=\linewidth]{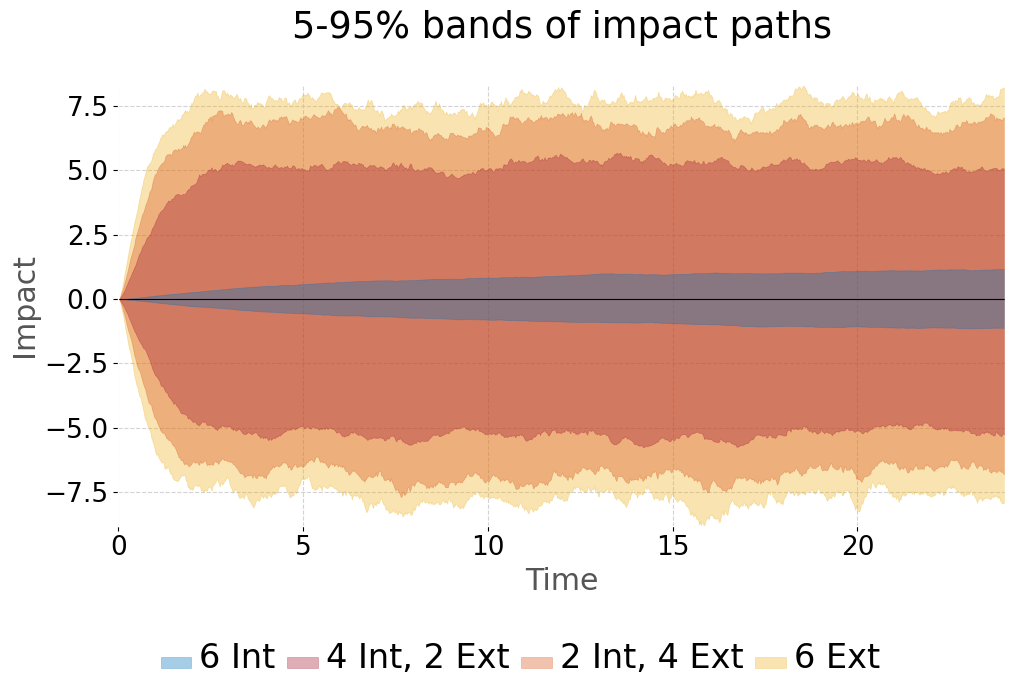}
        \caption{}
        \label{fig:I-percentiles-constrained}
    \end{subfigure}
    \caption{(a) A single path of internaliser 1's externalisation rate in Scenarios A (blue), C (red), E (orange), and G (yellow), where the scenarios are given in Table \ref{tab:dealer_types}. (b) 5--95\% percentiles for the paths of transient price impact $I$ across time in the same scenarios. Both use the impact-agnostic strategy profile.}
    \label{fig:constrained-side-by-side}
\end{figure}

\begin{table}[H]
\begin{center}
\caption{Client and dealer performance metrics: impact-agnostic strategy profile}
\label{tab:keymetrics-constrained}
\vskip-0.2cm
\small
\setlength{\tabcolsep}{4.5pt}
\fontsize{8.8pt}{11pt}\selectfont
\begin{tabular*}{\textwidth}{@{\extracolsep{\fill}}lcccccc@{}}
\toprule
            &   & \multicolumn{2}{c}{\textbf{Spread capture}} & \textbf{Hedging} & \textbf{Hedging} & \textbf{Median client} \\
\cline{3-4}
            & \textbf{P\&L} & Nominal $S$ & Effective $\altcal{S}$ & \textbf{costs} $\altcal{H}$ & \textbf{volume} $v$ & \textbf{spread} \\
\midrule
\multicolumn{6}{l}{\emph{Scenario A : 6 internalisers}} & 3.188 $\pm$ 0.021 \\
Internaliser    & 725.278 $\pm$ 5.640 & 2.925 $\pm$ 0.000 & 3.628 $\pm$ 0.003 & 1.479 $\pm$ 1.904 & 1.701 ± 0.055 &  \\
\midrule
\multicolumn{6}{l}{\emph{Scenario B : 5 internalisers \& 1 externaliser}} & 3.422 $\pm$ 0.052 \\
Internaliser    & 708.213 $\pm$ 5.549 & 2.925 $\pm$ 0.000 & 3.654 $\pm$ 0.006 & 24.516 $\pm$ 4.393 & 9.799 ± 0.104 &  \\
\cdashline{1-7}[0.4pt/1.6pt]
Externaliser    & 674.142 $\pm$ 2.809 & 2.925 $\pm$ 0.000 & 3.489 $\pm$ 0.007 & 24.778 $\pm$ 4.436 & 9.904 ± 0.105 &  \\
\midrule
\multicolumn{6}{l}{\emph{Scenario C : 4 internalisers \& 2 externalisers}} & 3.579 $\pm$ 0.060 \\
Internaliser    & 695.761 $\pm$ 5.433 & 2.925 $\pm$ 0.000 & 3.672 $\pm$ 0.008 & 40.532 $\pm$ 5.417 & 14.285 ± 0.130 &  \\
\cdashline{1-7}[0.4pt/1.6pt]
Externaliser    & 663.510 $\pm$ 2.194 & 2.925 $\pm$ 0.000 & 3.520 $\pm$ 0.007 & 40.369 $\pm$ 5.397 & 14.230 ± 0.130 &  \\
\midrule
\multicolumn{6}{l}{\emph{Scenario D : 3 internalisers \& 3 externalisers}} & 3.691 $\pm$ 0.063 \\
Internaliser    & 684.007 $\pm$ 5.663 & 2.925 $\pm$ 0.000 & 3.692 $\pm$ 0.010 & 55.911 $\pm$ 6.125 & 18.048 ± 0.141 &  \\
\cdashline{1-7}[0.4pt/1.6pt]
Externaliser    & 654.390 $\pm$ 1.919 & 2.925 $\pm$ 0.000 & 3.549 $\pm$ 0.008 & 55.398 $\pm$ 6.071 & 17.887 ± 0.140 &  \\
\midrule
\multicolumn{6}{l}{\emph{Scenario E : 2 internalisers \& 4 externalisers}} & 3.752 $\pm$ 0.066 \\
Internaliser    & 672.376 $\pm$ 6.058 & 2.925 $\pm$ 0.000 & 3.708 $\pm$ 0.012 & 68.411 $\pm$ 6.462 & 21.452 ± 0.147 &  \\
\cdashline{1-7}[0.4pt/1.6pt]
Externaliser    & 647.491 $\pm$ 1.773 & 2.925 $\pm$ 0.000 & 3.574 $\pm$ 0.008 & 67.618 $\pm$ 6.388 & 21.205 ± 0.145 &  \\
\midrule
\multicolumn{6}{l}{\emph{Scenario F : 1 internaliser \& 5 externalisers}} & 3.805 $\pm$ 0.067 \\
Internaliser    & 672.692 $\pm$ 7.750 & 2.925 $\pm$ 0.000 & 3.726 $\pm$ 0.015 & 79.492 $\pm$ 6.749 & 24.711 ± 0.155 &  \\
\cdashline{1-7}[0.4pt/1.6pt]
Externaliser    & 641.242 $\pm$ 1.664 & 2.925 $\pm$ 0.000 & 3.595 $\pm$ 0.009 & 78.431 $\pm$ 6.662 & 24.386 ± 0.153 &  \\
\midrule
\multicolumn{6}{l}{\emph{Scenario G : 6 externalisers}} & 3.843 $\pm$ 0.068 \\
Externaliser    & 636.289 $\pm$ 1.595 & 2.925 $\pm$ 0.000 & 3.616 $\pm$ 0.010 & 88.725 $\pm$ 6.951 & 27.451 ± 0.159 &  \\
\bottomrule
\end{tabular*}
\end{center}
\end{table}

\subsection{The effect of adding dealers} \label{sec:experiment2}

We investigate the effects of adding more dealers of different types to an existing pool. We begin with a pool of five internalisers and gradually increase the number of dealers. Our methodology is described as follows.
\begin{enumerate}
    \item First, we generate 1000 paths of ask-side order flow and 1000 paths of bid-side order flow.

    \item For each of these 1000 realisations, we simulate a market  with five dealers, all internalisers. The dealers use the equilibrium strategy profile from Theorem \ref{thm:solution}.

    \item We then simulate the market again for each of the same 1000 realisations for a market of six internalisers.  

    \item We repeat this process for markets of seven internalisers, eight internalisers, and so on until there are nine internalisers. Each time we use the same 1000 samples of order flow.

    \item We now return to the pool of just five internalisers, and add a sixth dealer who is an externaliser. We then rerun the 1000 realisations. 

    \item We repeat the procedure for five internalisers and two native externalisers, and so on until there are five internalisers and four native externalisers. 
\end{enumerate}
We observe in Figure \ref{fig:spread-adding-unconstrained} that under the impact-aware strategy profile \eqref{eq:feedback}, adding more dealers always reduces the client's costs \eqref{def:median_client_best_spread} due to increased competition among dealers. However, the benefit is stronger if internalisers are added as there are lower price impact costs absorbed by clients. Under the impact-agnostic strategy profile, the effect of increased impact costs dominates the effect of increased competition. This can be seen in Figure \ref{fig:spread-adding-constrained} where adding externalisers worsens client's costs substantially. Moreover, we see from the magnitudes on the vertical axes in Figure \ref{fig:Spreads Adding} that clients in general have lower spread costs when dealers are impact-aware. This happens because a market with impact-agnostic dealers has more impact than a market with impact-aware dealers, even when all dealers are internalisers, and these impact costs are passed on to the median client. 

\begin{figure}[H]
    \centering
    \begin{subfigure}[t]{0.48\textwidth}
        \centering
        \includegraphics[width=\linewidth]{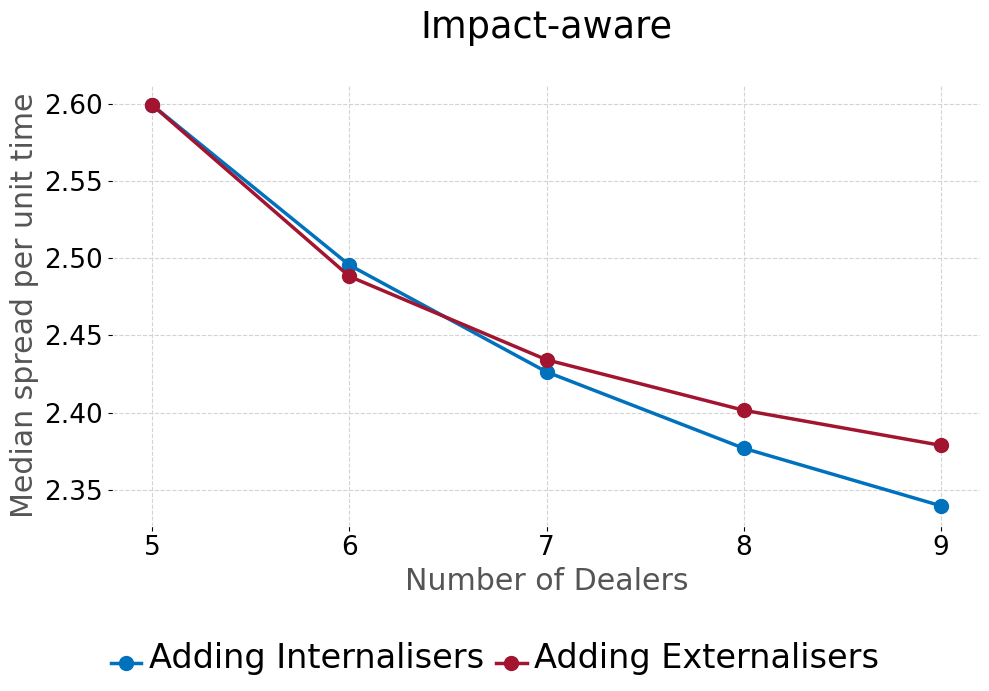}
        \caption{}
        \label{fig:spread-adding-unconstrained}
    \end{subfigure}
    \hfill
    \begin{subfigure}[t]{0.48\textwidth}
        \centering
        \includegraphics[width=\linewidth]{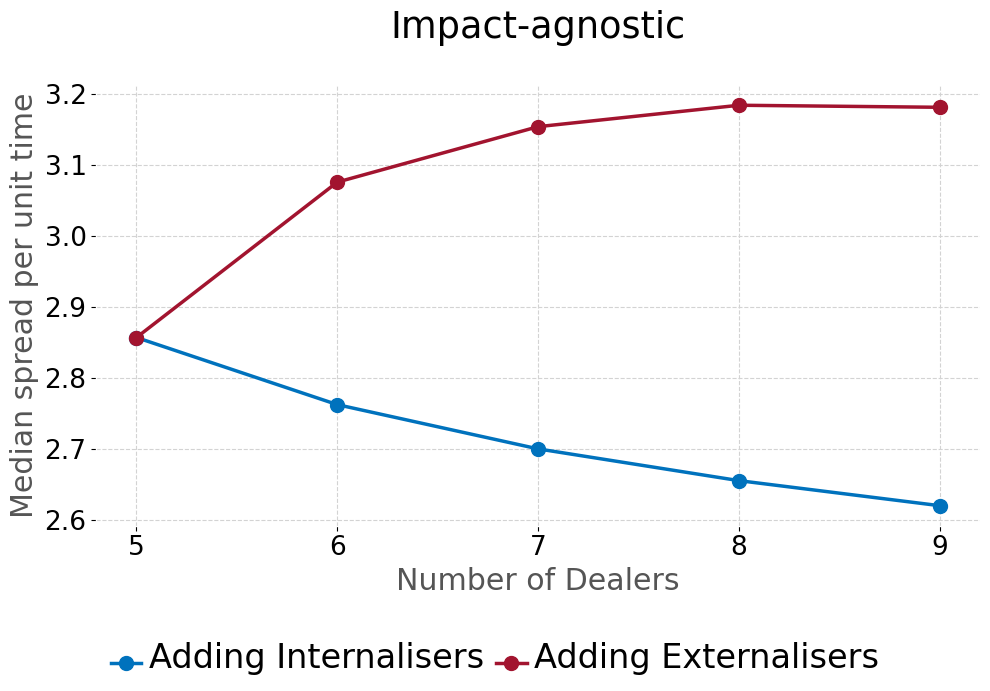}
        \caption{}
        \label{fig:spread-adding-constrained}
    \end{subfigure}
    \caption{Median client's mean spread as dealers of different types are added to an initial pool of five internalisers.}
    \label{fig:Spreads Adding}
\end{figure}

\subsection{Robustness with respect to model parameters} \label{sec:robustness}

We observed in the results of Section \ref{sec:experiment2} an effect whereby the presence of native externalisers incentivises native internalisers to trade in the same manner, thereby worsening dealers' P\&Ls and clients' spreads. Since all internalisers are incentivised to behave in this way, yet incur lower P\&Ls as a consequence, this phenomenon can be viewed as a type of prisoner's dilemma, similar to the prisoner's dilemma studied in Section 5 of \cite{oomen2017aggregator}. In this final numerical section, we investigate the dependence of this prisoner's dilemma effect on some of the model's most important parameters. Specifically, we fix the `baseline parameters' to be those in Table \ref{tab:params}, and vary \(\lambda, \beta, \phi^{I}, \phi^{X}, \alpha^{I}, \alpha^{X}\), where \(\phi^{I}, \alpha^{I}\) correspond to internalisers and \(\phi^{X}, \alpha^{X}\) correspond to externalisers. As \(\phi^{I}\) and \(\phi^{X}\) are varied, we vary \(\alpha^{I}\) and \(\alpha^{X}\), respectively, in lockstep. We only test parameter combinations satisfying Assumption \ref{assum:bounded}, with \(\phi^{I}<\phi^{X}\).

Figure \ref{fig:robustness-constrained} shows a robustness check for the impact-agnostic strategy profile \eqref{eq:candidate}, for different values of the parameters $\lambda$ and $\beta$ (left column) and $\phi^{I}$ and $\phi^{X}$ (right column). The percentage change in internaliser's P\&L \eqref{def:pnl} (top panels) and median client's spread \eqref{def:median_client_best_spread} (bottom panels) is plotted. Where P\&L is plotted, red indicates a decrease as internalisers are swapped for externalisers. Where median client's spread is plotted, red indicates an increase as internalisers are swapped for externalisers. Figure \ref{fig:robustness-unconstrained} shows the same for the impact-aware strategy profile \eqref{eq:feedback}. 

\begin{figure}[H]
    \centering
    \includegraphics[width=0.8\linewidth]{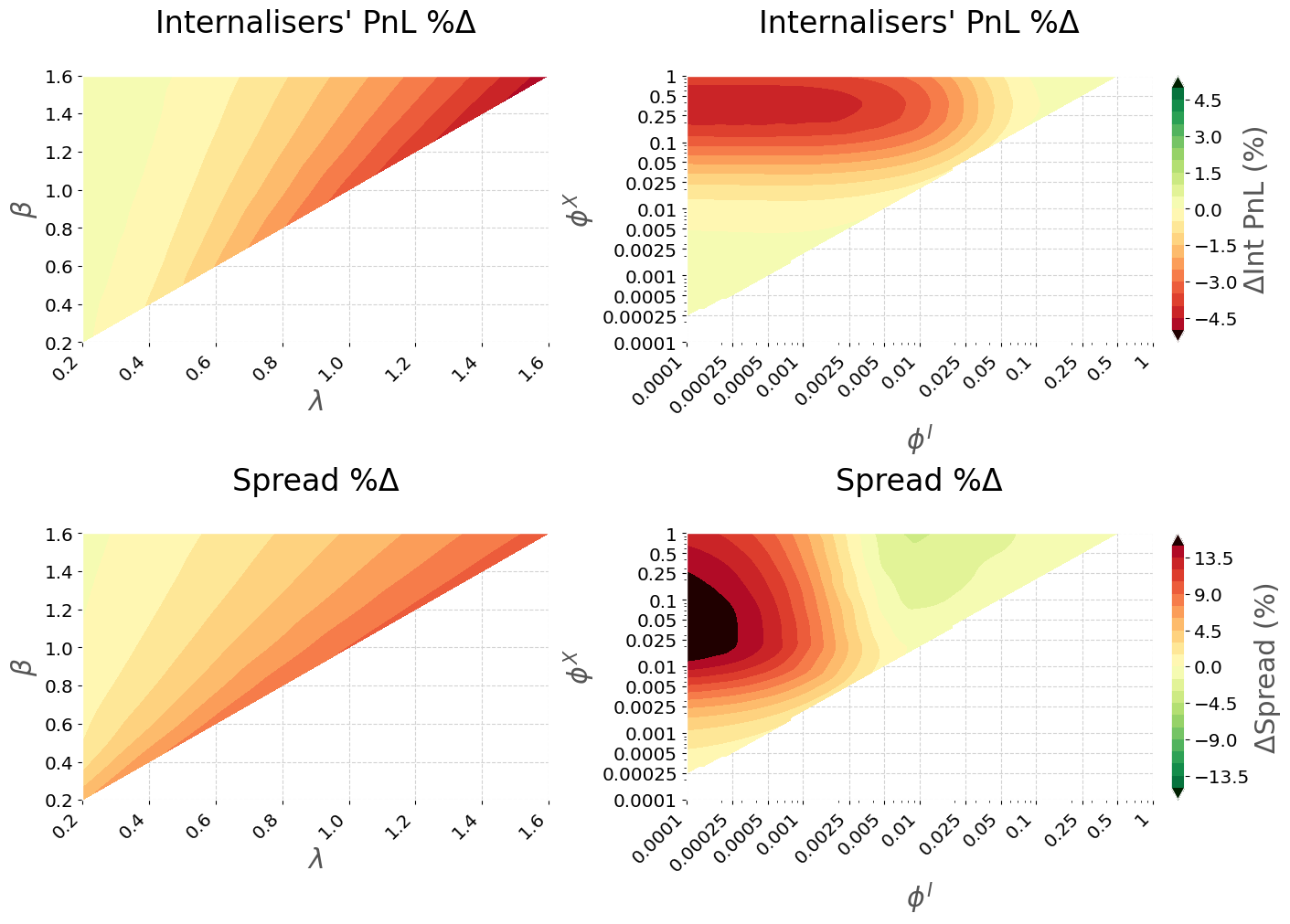}
    \caption{The effect of swapping two dealers in a pool of six internalisers for externalisers, on P\&L and median client spread, as parameters are varied under the impact-agnostic strategy profile.}
    \label{fig:robustness-constrained}
\end{figure}

\begin{figure}[H]
    \centering
    \includegraphics[width=0.8\linewidth]{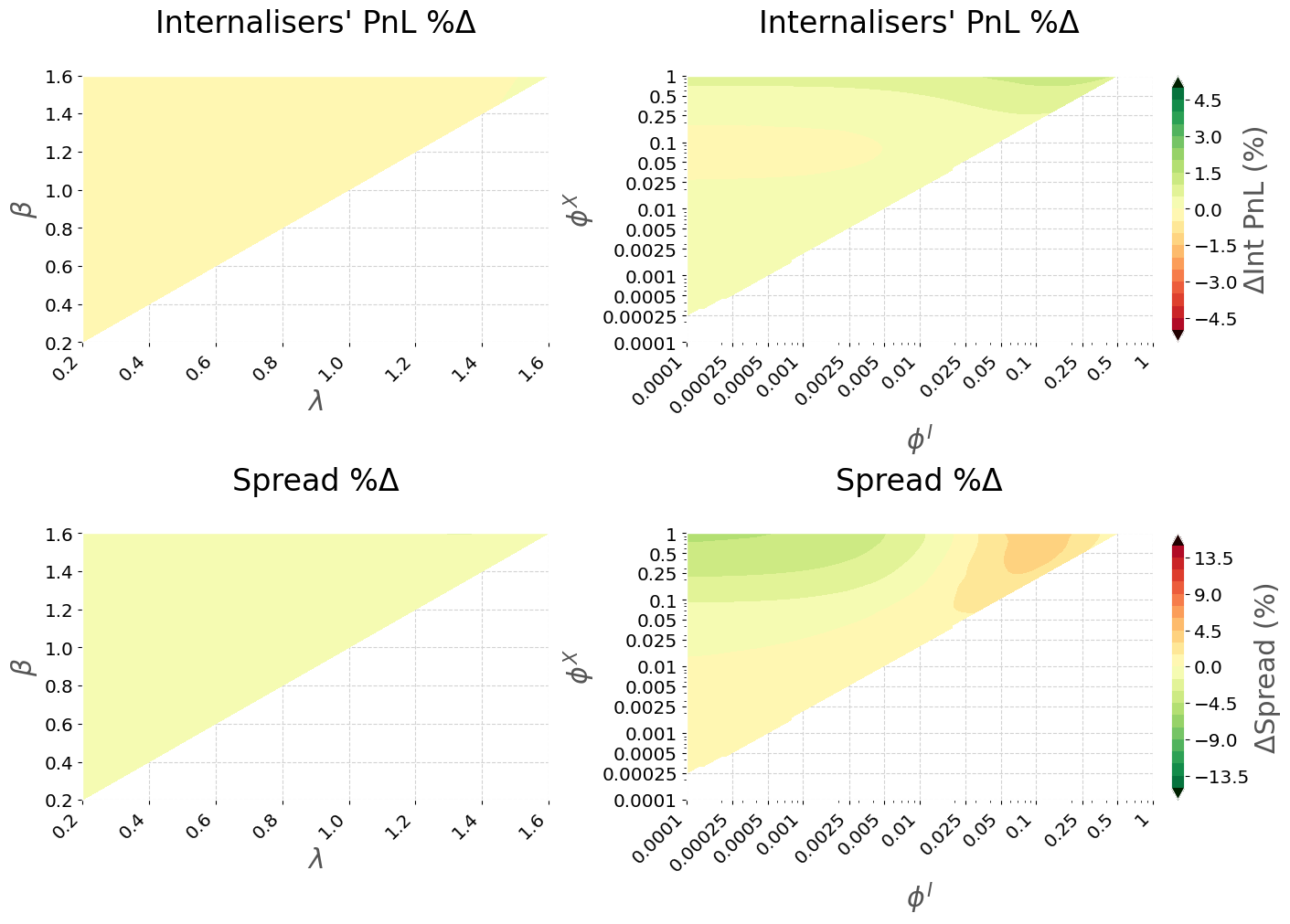}
    \caption{The effect of swapping two dealers in a pool of six internalisers for externalisers, on P\&L and median client spread, as parameters are varied under the impact-aware strategy profile.}
    \label{fig:robustness-unconstrained}
\end{figure}

This experiment yields two takeaways. First, using the impact-agnostic strategy profile, the existence of the prisoner's dilemma effect is robust to changes in the model parameters, although the magnitude of the effect varies.\footnote{One small exception is for the spread when internalisers are almost indistinguishable from externalisers. Possible explanations for this include that in this case the prisoner's dilemma effect is either too difficult to detect without a prohibitively high sample size, or that some other tiny and thus usually invisible effect takes over.} 
Second, we see again that the prisoner's dilemma effect only exists under the impact-agnostic case, at least as the first two dealers are swapped. A possible explanation for this is the following. The transient price impact acts as a common price signal, but one that is created by the dealers themselves. If dealers use this impact distortion to set their trading strategies, they trade it as a short-term price signal, and in doing so drastically shrink the impact itself as shown in \cite{neuman2023trading}. As externalisers are added, additional impact costs are therefore not larger than the improvement in P\&L from the additional effective spread capture. Thus when dealers are impact-aware, internalisers' P\&L rises (or at least does not fall) as externalisers are added. If instead dealers are impact-agnostic, the reverse is true and internalisers' P\&L falls as externalisers are added.

\begin{appendices}

\section{Proof of Theorem \ref{thm:solution}}\label{appendix:proof}

Let $N \in \mathbb{N}$. Throughout this section, we use the following notation, for stochastic processes $x^{1}, ..., x^{N}$ and non-zero constants $c^{1}, ..., c^{N}$:
\begin{equation} \label{av-not} 
\begin{aligned} 
&    \overline{x}_{t} := \frac{1}{N}\sum_{i=1}^{N}x^{i}_{t}, 
    \qquad
    \overline{\frac{1}{c}} := \frac{1}{N}\sum_{i=1}^{N}\frac{1}{c^{i}}, 
    \qquad
    \overline{\left(\frac{x}{c}\right)}_{t} := \frac{1}{N}\sum_{i=1}^{N}\frac{x^{i}_{t}}{c^{i}}.\\ 
&     \overline{x^{-i}_t} := \frac{1}{N-1}\sum_{j\neq i}x^{j}_{t},
    \qquad
    \overline{\frac{1}{c}} := \frac{1}{N-1}\sum_{j\neq i}\frac{1}{c^{j}},
    \qquad
    \overline{\left(\frac{x}{c}\right)^{-i}_t} := \frac{1}{N-1}\sum_{j\neq i}\frac{x^{j}_{t}}{c^{j}}.
    \end{aligned}
\end{equation}
In order to prove Theorem \ref{thm:solution}, we first prove the concavity property of the objective functional \eqref{def:objective}. Recall that the constants $\lambda$, $\beta$ and $\kappa$ were defined in \eqref{def:Y} and \eqref{def:Z^i}.
\begin{proposition} \label{prop:uniqueness}
Let $Z^{a}$ and $Z^{b}$ in \eqref{def:Z^i} be Poisson processes with intensities $\rho^a$ and $\rho^b$, respectively. Assume that $ \frac{\beta}{\lambda} \geq \frac{\kappa(N-1)}{4N^{2}}(\rho^a+\rho^b)$. Then the objective functional $u^{i}\rightarrow\altcal{J}^{i}(u^{i}, u^{-i})$ is strictly concave.
\end{proposition}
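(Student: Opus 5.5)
Since dealer $i$'s inventory $X^{i}$, the impact $I$ and the inflows $Z^{a,i},Z^{b,i}$ are all affine in $u^{i}=(\delta^{a,i},\delta^{b,i},q^{i})$ once $u^{-i}$ is fixed, the functional $u^{i}\mapsto\altcal{J}^{i}(u^{i},u^{-i})$ is a quadratic functional on $\altcal{D}\times\altcal{D}\times\altcal{A}$. The plan is therefore to fix $u^{i}$ and a nonzero admissible direction $w=(w^{a},w^{b},w^{q})$, and to expand $\altcal{J}^{i}(u^{i}+\varepsilon w,u^{-i})$ in $\varepsilon$. Strict concavity is equivalent to the coefficient $Q(w)$ of $\varepsilon^{2}$ being strictly negative for every $w\neq 0$.

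\textbf{Step 1: collect the quadratic terms.} Write $I^{w}_{t}=\lambda\int_{0}^{t}e^{-\beta(t-s)}w^{q}_{s}\,ds$ and $c:=\kappa(N-1)/N^{2}$. By \eqref{def:Z^i}, the perturbation of $dZ^{a,i}$ is $-c\,w^{a}_{t}\,dZ^{a}_{t}$, and similarly for the bid side. Compensating the Poisson integrals (all integrands are progressively measurable and square integrable) replaces $dZ^{k}$ by $\rho^{k}dt$ in expectation. The terms involving $P^{\ast}$ are linear in $w$, since $P^{\ast}$ is uncontrolled; this covers $\int P^{\ast}dZ^{k,i}$, $\int P^{\ast}q^{i}dt$ and $P^{\ast}_{T}X^{i}_{T}$. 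These terms do not enter $Q$. Collecting the remaining terms should give
\begin{equation}
Q(w)=\mathbb{E}\Big[\int_{0}^{T}\Big(-c\rho^{a}(w^{a}_{t})^{2}-c\rho^{b}(w^{b}_{t})^{2}-c\rho^{a}I^{w}_{t}w^{a}_{t}-c\rho^{b}I^{w}_{t}w^{b}_{t}-I^{w}_{t}w^{q}_{t}-\tfrac{\epsilon^{i}}{2}(w^{q}_{t})^{2}-\phi^{i}(X^{w}_{t})^{2}\Big)dt-\alpha^{i}(X^{w}_{T})^{2}\Big],
\end{equation}
where $X^{w}$ is the (affine) perturbation of the inventory. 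The signs of the cross terms $I^{w}w^{a}$ and $I^{w}w^{b}$ need to be checked carefully, but only their absolute values will be used below.

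\textbf{Step 2: positivity of the impact term.} Since $dI^{w}_{t}=(-\beta I^{w}_{t}+\lambda w^{q}_{t})\,dt$, we have
\begin{equation}
\int_{0}^{T}I^{w}_{t}w^{q}_{t}\,dt=\frac{1}{\lambda}\Big(\tfrac12 (I^{w}_{T})^{2}+\beta\int_{0}^{T}(I^{w}_{t})^{2}dt\Big)\ \ge\ \frac{\beta}{\lambda}\int_{0}^{T}(I^{w}_{t})^{2}dt .
\end{equation}

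\textbf{Step 3: absorb the cross terms (main obstacle).} This is the only place where the assumption enters. Young's inequality gives
\begin{equation}
c\rho^{k}|I^{w}w^{k}|\le c\rho^{k}(w^{k})^{2}+\tfrac{c\rho^{k}}{4}(I^{w})^{2}, \qquad k\in\{a,b\}.
\end{equation}
Combining this with Step 2 yields
\begin{equation}
Q(w)\le -\Big(\frac{\beta}{\lambda}-\frac{c(\rho^{a}+\rho^{b})}{4}\Big)\mathbb{E}\!\int_{0}^{T}(I^{w})^{2}dt-\frac{\epsilon^{i}}{2}\mathbb{E}\!\int_{0}^{T}(w^{q})^{2}dt-\phi^{i}\mathbb{E}\!\int_{0}^{T}(X^{w})^{2}dt\le 0 .
\end{equation}
The bracket is nonnegative exactly under the stated condition $\beta/\lambda\ge \kappa(N-1)(\rho^{a}+\rho^{b})/(4N^{2})$.

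\textbf{Step 4: strictness.} The Young step used up the full $(w^{k})^{2}$ coefficients, so strictness needs a short separate argument. If $Q(w)=0$, then $w^{q}=0$ a.e., because $\epsilon^{i}>0$. Hence $I^{w}\equiv 0$, and $Q(w)$ reduces to $-\mathbb{E}\int_{0}^{T}\big(c\rho^{a}(w^{a})^{2}+c\rho^{b}(w^{b})^{2}\big)dt-\phi^{i}\mathbb{E}\int_{0}^{T}(X^{w})^{2}dt-\alpha^{i}\mathbb{E}(X^{w}_{T})^{2}$. This forces $w^{a}=w^{b}=0$, since $c\rho^{k}>0$. Therefore $Q(w)<0$ for every $w\neq0$, and the map $u^{i}\mapsto\altcal{J}^{i}(u^{i},u^{-i})$ is strictly concave.

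The step I expect to be most delicate is Step 3: identifying the exact coefficients and signs of the cross terms between $I^{w}$ and the spread perturbations $w^{a},w^{b}$, so that the constant $\tfrac14 c(\rho^{a}+\rho^{b})$ comes out matching the assumption.
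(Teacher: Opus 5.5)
Your proposal is correct and is essentially the paper's argument. You phrase it through the second variation $Q(w)$ rather than the convex-combination gap (which for a quadratic functional equals $-\eta(1-\eta)Q(u^{i}-v)$), but you use the same integration-by-parts identity for $\int_0^T I^w_t w^q_t\,dt$ and the same Young-type absorption of the impact--spread cross terms into the $\beta/\lambda$ term. Your weighting $|xy|\le x^{2}+\tfrac14 y^{2}$ is what legitimately gives the constant $\frac{\kappa(N-1)}{4N^{2}}$: the paper's chain \eqref{pos11} reaches it only through factors $\tfrac12$ on the cross terms that do not follow from the preceding identities. Your Step 4 case split via $\epsilon^{i}>0$ also handles strictness more soundly than the paper's claim \eqref{pos111}, which is false when only $q^{i}$ is perturbed. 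One caveat, shared with the paper: replacing $dZ^{\ell}$ by $\rho^{\ell}\,dt$ in $\mathbb{E}\int (w^{\ell})^{2}\,dZ^{\ell}$ requires predictable spreads, not merely progressively measurable ones as your parenthetical in Step 1 asserts.
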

\begin{proof}
We will prove that for any $\eta \in (0,1)$, and for any admissible $\{u^{i}\}_{i\in [N]}$, 
\begin{equation} \label{conc-def} 
     \altcal{J}^{i}(\eta u^{i} + (1-\eta) v, u^{-i}) - \eta \altcal{J}^{i}(u^{i}, u^{-i}) - (1-\eta) \altcal{J}^{i}(v, u^{-i})>0,
 \end{equation}
 which concludes the result.  
In order to do that, we decompose the cost functional \eqref{def:objective} to the following components: 
\begin{equation} \label{objective1} 
\begin{split}
   \hat{ \altcal{J}^{i}}(u^{i}, u^{-i}) = \mathbb{E}\Bigg[\int_{0}^{T}\left(P^{\ast}_{t}+I_{t}+\delta^{a, i}_{t}\right)dZ^{a, i}_{t} -
    \int_{0}^{T}\big(P^{\ast}_{t}+I_{t}-\delta^{b, i}_{t}\big)dZ^{b, i}_{t} \Bigg], 
\end{split}
\end{equation}  
 and,  
\begin{equation} \label{objective2}
\wt{\altcal{J}^{i}}(u^{i}, u^{-i}) := \mathbb{E}\Bigg[ - \int_{0}^{T}\big(P^{\ast}_{t}+I_{t}+\frac{1}{2}\epsilon^{i} q_t^i\big)q^{i}_{t}dt 
- \phi^{i}\int_{0}^{T}\left(X^{i}_{t}\right)^{2}dt - \alpha^{i} \left(X^{i}_{T}\right)^{2} + (P^{\ast}_{T}X^{i}_{T} - px^{i}_{0}) \Bigg]. 
\end{equation}  
We first handle \eqref{objective1}. Denote by 
\begin{equation} \label{def:average_delta_a}
    \overline{\delta^{a, -i}_t} = \frac{1}{N-1}\sum_{j\in[N]\setminus \{i\}} \delta^{a, j}_{t},
\end{equation}
and 
\begin{equation} \label{def:average_delta_b}
    \overline{\delta^{
b, -i}_t } = \frac{1}{N-1}\sum_{j\in[N]\setminus \{i\}}\delta^{b, j}_{t}.
\end{equation}
We note that 
\begin{equation}
    \overline{\delta^{a}_t}= \frac{N-1}{N}\overline{\delta^{a, -i}_t}+ \frac{1}{N}\delta^{a, i}_{t} \qquad \text{and} \qquad  \overline{\delta^{b}}_{t} = \frac{N-1}{N}\overline{\delta^{b, -i}_t}+ \frac{1}{N}\delta^{b, i}_{t}, 
\end{equation}
so \eqref{def:Z^i} can be rewritten as follows: 
\begin{equation} \label{eq:Zi_rewritten}
\begin{aligned} 
    Z^{a, i}_{t} &= \frac{1}{N}\int_{0}^{t}\left(1+\frac{\kappa(N-1)}{N}\overline{\delta^{a, -i}_s} - \frac{\kappa(N-1)}{N}\delta^{a, i}_{s}\right)dZ^{a}_{s} +J^{a, i}_{t}, \\
     Z^{b, i}_{t} &= \frac{1}{N}\int_{0}^{t}\left(1+\frac{\kappa(N-1)}{N}\overline{\delta^{b, -i}_s} - \frac{\kappa(N-1)}{N}\delta^{b, i}_{s}\right)dZ^{b}_{s}+J^{b, i}_{t}.
\end{aligned} 
\end{equation}
Let $Z^{a, i}, Z^{b, i}, Q^{i}, X^{i}$ and $I$ be the controlled states by the admissible control $u^{i}=\left(\delta^{a, i}, \delta^{b, i}, q^{i}\right)$. Moreover, let $\check{Z}^{a, i}, \check{Z}^{b, i}, \check{Q}^{i}, \check{X}^{i}$ and $\check{I}$ be controlled by a different admissible control $v=\left(\gamma^{a}, \gamma^{b}, h\right)$. 
Finally, let $\eta \in (0,1)$ and denote by $\wt{Z}^{a, i}, \wt{Z}^{b, i}, \wt{Q}^{i}, \wt{X}^{i}$ and $\wt{I}$ the states controlled by $\eta u^{i}+(1-\eta) v^{i}$. Then, a lengthy but straightforward calculation using \eqref{def:Y} and \eqref{eq:Zi_rewritten} gives, 
\begin{equation} \label{eq:Za_concavity}
\begin{split}
  &\mathcal{I}^{i,(a)}(\eta u^{i} + (1-\eta) v, u^{-i}) \\
    &:= \int_{0}^{T}\left(P^{\ast}_{t}+\wt{I}_{t}+\eta\delta^{a, i}_{t} + (1-\eta)\gamma^{a}_{t}\right)d\wt{Z}^{a, i}_{t} \\
    &\qquad- \eta \int_{0}^{T}\left(P^{\ast}_{t}+I_{t}+\delta^{a, i}_{t}\right)dZ^{a, i}_{t} - (1-\eta)\int_{0}^{T}\left(P^{\ast}_{t}+\check{I}_{t}+\gamma^{a}_{t}\right)d\check{Z}^{a, i}_{t} \\
    &=\eta(1-\eta)\frac{\kappa(N-1)}{N^{2}}\left(\int_{0}^{t}\left(\delta^{a, i}_{s} - \gamma^{a}_{s}\right)^{2}dZ^{a}_{t}+
      \int_{0}^{T}\left(I_{t}-\check{I}_{t}\right) \big(\delta^{a, i}_{t}-\gamma^{a}_{t} \big) dZ^{a}_{t}\right)
\end{split}
\end{equation}
and similarly, also using again \eqref{def:Y} and \eqref{eq:Zi_rewritten},  
\begin{equation} \label{eq:Zb_concavity}
\begin{split}
  &\mathcal{I}^{i,(b)}(\eta u^{i} + (1-\eta) v, u^{-i}) \\
    &:=\int_{0}^{T}\left(P^{\ast}_{t}+\wt{I}_{t}-\eta\delta^{b, i}_{t} - (1-\eta)\gamma^{b}_{t}\right)d\wt{Z}^{b, i}_{t} \\
    &\qquad\quad- \eta \int_{0}^{T}\left(P^{\ast}_{t}+I_{t}-\delta^{b, i}_{t}\right)dZ^{b, i}_{t} - (1-\eta)\int_{0}^{T}\left(P^{\ast}_{t}+\check{I}_{t}-\gamma^{b}_{t}\right)d\check{Z}^{b, i}_{t} \\
    &=\eta(1-\eta)\frac{\kappa(N-1)}{N^{2}}\left(-\int_{0}^{t}\left(\delta^{b, i}_{s} - \gamma^{b}_{s}\right)^{2}dZ^{b}_{t}   + \int_{0}^{T}\left(I_{t}-\check{I}_{t}\right)\big(\delta^{b, i}_{t}-\gamma^{b}_{t}\big)dZ^{b}_{t}\right).
\end{split}
\end{equation}
Recall that $Z^{a}, Z^{b}$ are increasing processes. Using Young's inequality we get for $\ell \in \{a,b\}$, 
\begin{equation} \label{yng}
  \left|\int_{0}^{T}\left(I_{t}-\check{I}_{t}\right)\big(\delta^{\ell, i}_{t}-\gamma^{\ell}_{t}\big)dZ^{\ell}_{t} \right|\\
\leq  \frac{1}{2} \left(\int_{0}^{T}\left(I_{t}-\check{I}_{t}\right)^2 dZ^{\ell}_{t} + \int_{0}^{T} \big(\delta^{\ell, i}_{t}-\gamma^{\ell}_{t}\big)^2dZ^{\ell}_{t} \right) .
\end{equation}
Using again the fact that $Z^{a}, Z^{b}$ are increasing processes we have, 
\begin{equation} \label{pos111} 
 \int_{0}^{t}\left(\delta^{\ell, i}_{s} - \gamma^{\ell}_{s}\right)^{2}dZ^{\ell}_{t} > 0, \quad \ell \in \{a,b\}. 
\end{equation}
From \eqref{eq:Za_concavity}--\eqref{pos111} we get, 
\begin{equation} \label{pos11} 
\begin{aligned} 
& \mathcal{I}^{i,(a)}(\eta u^{i} + (1-\eta) v, u^{-i}) - \mathcal{I}^{i,(b)}(\eta u^{i} + (1-\eta) v, u^{-i})  \\
&= \eta(1-\eta)  \frac{\kappa(N-1)}{N^{2}} \bigg( \sum_{\ell \in\{a,b\}} \int_{0}^{t}\left(\delta^{\ell, i}_{s} - \gamma^{\ell}_{s}\right)^{2}dZ^{\ell}_{t} \\
&\qquad\qquad\qquad\qquad\quad+\frac{1}{2}\int_{0}^{T}\left(I_{t}-\check{I}_{t}\right)\big(\delta^{a, i}_{t}-\gamma^{a}_{t}\big)dZ^{a}_{t}-\frac{1}{2}\int_{0}^{T}\left(I_{t}-\check{I}_{t}\right)\big(\delta^{b, i}_{t}-\gamma^{b}_{t}\big)dZ^{b}_{t}\bigg) \\
&\geq \eta(1-\eta)  \frac{\kappa(N-1)}{N^{2}}  \sum_{\ell \in\{a,b\}}\bigg( \int_{0}^{t}\left(\delta^{\ell, i}_{s} - \gamma^{\ell}_{s}\right)^{2}dZ^{\ell}_{t}
- \frac{1}{2} \bigg|\int_{0}^{T}\left(I_{t}-\check{I}_{t}\right)\big(\delta^{\ell, i}_{t}-\gamma^{\ell}_{t}\big)dZ^{\ell}_{t}\bigg| \bigg)   \\
&\geq  \eta(1-\eta)  \frac{\kappa(N-1)}{4N^{2}}  \sum_{\ell \in\{a,b\}}\left( \int_{0}^{t}\left(\delta^{\ell, i}_{s} - \gamma^{\ell}_{s}\right)^{2}dZ^{\ell}_{t}-  \int_{0}^{T}\left(I_{t}-\check{I}_{t}\right)^2 dZ^{\ell}_{t} \right).
\end{aligned} 
\end{equation}
Together with \eqref{objective1} it follows that, 
 \begin{equation} \label{pos} 
\begin{aligned} 
 &  \hat{\altcal{J}^{i}}(\eta u^{i} + (1-\eta) v, u^{-i}) - \eta \hat{\altcal{J}^{i}}(u^{i}, u^{-i}) - (1-\eta) \hat{\altcal{J}^{i}}(v, u^{-i}) \\
&=\E\big[\mathcal{I}^{i,(a)}(\eta u^{i} + (1-\eta) v, u^{-i}) - \mathcal{I}^{i,(b)}(\eta u^{i} + (1-\eta) v, u^{-i}) \big]
 \\
&>-\eta(1-\eta) \frac{\kappa(N-1)}{4N^{2}}  \sum_{\ell \in\{a,b\}} \E\left[\int_{0}^{T}\left(I_{t}-\check{I}_{t}\right)^2 dZ^{\ell}_{t}\right]. 
\end{aligned} 
\end{equation}
Next we derive a lower bound on the convex difference of \eqref{objective2}. 
Note that for the first term on the right-hand side of \eqref{objective2} we have 
\begin{equation} \label{eq:PY_concavity}
\begin{split}
    &-\int_{0}^{T}\left(P^{\ast}_{t} + \wt{I}_{t} + \frac{1}{2}\epsilon^{i} \left(\eta q^{i}_{t} + (1-\eta) h_{t}\right)\right)\left(\eta q^{i}_{t} + (1-\eta) h_{t}\right)dt \\
    &+\eta \int_{0}^{T}\left(P^{\ast}_{t} + I_{t} + \frac{1}{2}\epsilon^{i} q^{i}_{t}\right)q^{i}_{t}dt +(1-\eta) \int_{0}^{T}\left(P^{\ast}_{t} + \check{I}_{t} + \frac{1}{2}\epsilon^{i} h_{t}\right)h_{t}dt \\
    &=\eta(1-\eta)\int_{0}^{T}\big( (I_{t}-\check{I}_{t} ) (q^{i}_{t}-h_{t} ) + \frac{1}{2}\epsilon^{i} (q^{i}_{t}-h_{t} )^{2} \big)dt.
\end{split}
\end{equation}
 From \eqref{def:Y} we get, 
 \begin{equation} \label{eq:Yi_dynamics}
    d(I_{t}-\check{I}_{t}) = \left(-\beta (I_{t}-\check{I}_{t})+\lambda (q^{i}_{t}- h_t)\right)dt, 
\end{equation}
which since $\lambda\neq0$, is used together with integration by parts to get, 
\begin{equation} \label{eq:Yi2_concavity}
\begin{aligned} 
    \int_{0}^{T} (I_{t} - \check{I}_{t} ) (q^{i}_{t}-h_{t} )dt &=\frac{1}{\lam} \int_0^T (I_{t} - \check{I}_{t} ) d (I_{t} - \check{I}_{t} )  +\frac{\beta}{\lambda}\int_{0}^{T}  (I_{t}-\check{I}_{t} )^{2} dt   \\
    &= \frac{1}{2\lambda} (I_{t}-\check{I}_{T} )^{2} + \frac{\beta}{\lambda}\int_{0}^{T}  (I_{t}^{i}-\check{I}^{i}_{t} )^{2} dt.
 \end{aligned} 
\end{equation}
Note also that $\wt{X}^{i}_{t} = \eta X^{i}_{t} + (1-\eta)\check{X}^{i}_{t}$ and hence
\begin{equation} \label{eq:X_concavity}
    P^{\ast}_{t}\wt{X}^{i}_{t} - \eta P^{\ast}_{t}X^{i}_{t} - (1-\eta)P^{\ast}_{t}\check{X}^{i}_{t} = 0,
\end{equation}
and
\begin{equation} \label{eq:X2_concavity}
   - \left(\wt{X}^{i}_{t}\right)^{2} + \eta \left(X^{i}_{t}\right)^{2} + (1-\eta)\left(\check{X}^{i}_{t}\right)^{2} = \eta(1-\eta)\left(X^{i}_{t} - \check{X}^{i}_{t}\right)^{2} \geq 0.
\end{equation}
From \eqref{eq:PY_concavity}--\eqref{eq:X2_concavity} and \eqref{objective2} we conclude that 
\begin{equation} \label{pos2} 
\begin{aligned} 
 &\wt{\altcal{J}^{i}}(\eta u^{i} + (1-\eta) v, u^{-i}) - \eta \wt{\altcal{J}^{i}}(u^{i}, u^{-i}) - (1-\eta) \wt{\altcal{J}^{i}}(v, u^{-i})\\\
 &\geq \eta(1-\eta)\frac{\beta}{\lambda} \E\left[\int_{0}^{T}  (I_{t}-\check{I}_{t} )^{2} dt\right]. 
\end{aligned} 
\end{equation}
From \eqref{objective1}, \eqref{objective2},  \eqref{pos}, \eqref{pos2} and \eqref{def:objective} we obtain

\begin{equation} \label{conc-1}
\begin{aligned}  
  &   \altcal{J}^{i}(\eta u^{i} + (1-\eta) v, u^{-i}) - \eta \altcal{J}^{i}(u^{i}, u^{-i}) - (1-\eta) \altcal{J}^{i}(v, u^{-i}) \\ 
  & > \eta(1-\eta)\left(\frac{\beta}{\lambda} \E\left[\int_{0}^{T}  (I_{t}-\check{I}_{t} )^{2} dt\right]-  \frac{\kappa(N-1)}{4N^{2}}    \sum_{\ell \in\{a,b\}} \E\left[\int_{0}^{T}\left(I_{t}-\check{I}_{t}\right)^2 dZ^{\ell}_{t}\right]\right)
 \end{aligned} 
 \end{equation}
 Using the assumption that $Z^{a}$ and $Z^{b}$ in \eqref{def:Z^i} are Poisson processes with intensities $\rho^a$ and $\rho^b$, respectively, it then follows that \eqref{conc-def} holds if 
 $$
 \frac{\beta}{\lambda} \geq \frac{\kappa(N-1)}{4N^{2}}(\rho^a+\rho^b). 
 $$
 \end{proof} 
 
The proof of Theorem \ref{thm:solution} uses a variational approach as in \cite{neuman2023trading}. We first derive the Gateaux derivative of $\altcal{J}^{i}$ in Lemma \ref{lemma:gateaux}, then from the resulting first-order condition we obtain a system of FBSDEs  (see Proposition \ref{prop:FBSDE}). We then explicitly solve the system and, combined with Proposition \ref{prop:uniqueness}, the result of Theorem \ref{thm:solution} follows. 

By Proposition \ref{prop:uniqueness} and Proposition 2.1 from Chapter 2 of \cite{ekeland1999convex}, $\altcal{J}^{i}(\cdot)$ admits a maximiser $u \in \mathcal A$ if the Gateaux derivative satisfies
\begin{equation} \label{def:Gateaux}
    \langle(\altcal{J}^{i})'(u, u^{-i}), v\rangle := \underset{\eta\rightarrow0}{\lim}\left(\frac{\altcal{J}^{i}(u+\eta v, u^{-i})-\altcal{J}^{i}(u, u^{-i})}{\eta}\right) =0, \quad \textrm{for all } v \in\altcal{A}. 
\end{equation}
We derive an explicit expression for the Gateaux derivative of $\altcal{J}$ in Lemma \ref{lemma:gateaux} below. 
\begin{lemma} \label{lemma:gateaux}
    Let $\altcal{J}$ be the objective functional \eqref{def:objective}. Then, for any admissible strategy \newline$v=(\gamma^{a}, \gamma^{b}, h) \in  \altcal{D} \times  \altcal{D}  \times \altcal{A}$ (see \eqref{adms}, \eqref{adms2})  we have,  
    \begin{equation} \label{eq:gateaux}
    \begin{split}
        & \langle(\altcal{J}^{i})'(u^i, u^{-i}), v\rangle  \\ 
        &=\mathbb{E}\Bigg[
        \int_{0}^{T}\gamma^{a}_{s}\Bigg(\rho^{a}\bigg(\frac{1}{N}\left(1-\kappa(\delta^{a, i}_{s}-\overline{\delta^{a}}_{s})\right) - \kappa\frac{N-1}{N^{2}}\left(P^{\ast}_{s}+I_{s}+\delta^{a, i}_{s}\right) \\ 
        &\qquad\quad-2\phi^{i}\kappa\frac{N-1}{N^{2}}\int_{s}^{T}X^{i}_{t}dt + \kappa\frac{N-1}{N^{2}}\left(P^{\ast}_{s}-2\alpha^{i}X^{i}_{T}\right)\bigg) + \mu\Bigg)ds \\
        &\qquad+\int_{0}^{T}\gamma^{b}\Bigg(\rho^{b}\bigg(\frac{1}{N}\left(1-\kappa(\delta^{b,i }_{s}-\overline{\delta^{b}}_{s})\right) + \kappa\frac{N-1}{N^{2}}\left(P^{\ast}_{s} + I_{s} - \delta^{b, i}_{s}\right) \\
        &\qquad\quad+2\phi^{i}\kappa\frac{N-1}{N^{2}}\int_{s}^{T}X^{i}_{t}dt - \kappa\frac{N-1}{N^{2}}\left(P^{\ast}_{s}-2\alpha^{i}X^{i}_{T}\right)\bigg) + \mu\Bigg)ds \\
        &\qquad+\int_{0}^{T}h_{s}\bigg(\frac{\lambda}{N}\int_{s}^{T}e^{-\beta(t-s)}\left(1-\kappa(\delta^{a, i}_{t}-\overline{\delta^{a}}_{t})\right)dZ^{a}_{t} \\
        &\qquad\quad-\frac{\lambda}{N}\int_{s}^{T}e^{-\beta(t-s)}\left(1-\kappa(\delta^{b, i}_{t}-\overline{\delta^{b}}_{t})\right)dZ^{b}_{t} \\
        &\qquad\quad-\lambda \int_{s}^{T}e^{-\beta(t-s)}q^{i}_{t}dt - P^{\ast}_{s} - I_{s} - \epsilon^{i} q^{i}_{s} \\
        &\qquad\quad-2\phi^{i}\int_{s}^{T}X^{i}_{t}dt + (P^{\ast}_{t}-2\alpha^{i}X^{i}_{T})\bigg)ds\Bigg] ,\quad i=1,...,N. 
    \end{split}
    \end{equation}
\end{lemma}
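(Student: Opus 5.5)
We compute the Gateaux derivative directly from the definition \eqref{def:Gateaux}, by perturbing $u^i$ to $u^i+\eta v$ with $v=(\gamma^a,\gamma^b,h)$. The perturbed states are affine in $\eta$. Write $Z^{\ell,i,\eta}$, $X^{i,\eta}$, $I^\eta$ for them. From \eqref{eq:Zi_rewritten},
\[
dZ^{a,i,\eta}_t=dZ^{a,i}_t-\eta\kappa\tfrac{N-1}{N^2}\gamma^a_t\,dZ^a_t ,
\]
and analogously on the bid side. From \eqref{def:Y}, $I^\eta_t=I_t+\eta\lambda\int_0^t e^{-\beta(t-s)}h_s\,ds$. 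From \eqref{def:X}, $X^{i,\eta}_t-X^i_t$ equals $\eta$ times
\[
\kappa\tfrac{N-1}{N^2}\int_0^t\gamma^a_s\,dZ^a_s-\kappa\tfrac{N-1}{N^2}\int_0^t\gamma^b_s\,dZ^b_s+\int_0^t h_s\,ds .
\]
The objective \eqref{def:objective} is therefore a quadratic polynomial in $\eta$. The derivative at $\eta=0$ is its linear coefficient, and no limit interchange beyond square integrability is needed. We collect this linear coefficient term by term, using the split into \eqref{objective1} and \eqref{objective2} from the proof of Proposition \ref{prop:uniqueness}.

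For the spread terms, differentiating $\int(P^*+I+\delta^{a,i})\,dZ^{a,i}$ gives three contributions. The first is $\int\gamma^a\,dZ^{a,i}$. The second is $-\kappa\frac{N-1}{N^2}\int(P^*+I+\delta^{a,i})\gamma^a\,dZ^a$. The third is $\int(\lambda\int_0^t e^{-\beta(t-s)}h_s\,ds)\,dZ^{a,i}_t$, which comes from the impact. The bid side is analogous with the opposite signs. For the running and terminal inventory terms we obtain $-2\phi^i\int_0^T X^i_t(X^{i,\eta}_t-X^i_t)/\eta\,dt$ and $-2\alpha^iX^i_T(X^{i,\eta}_T-X^i_T)/\eta$, plus $P^*_T(X^{i,\eta}_T-X^i_T)/\eta$. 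For the externalisation cost the derivative is $-\int\big((P^*+I+\epsilon^iq^i)h+q^i\lambda\int_0^\cdot e^{-\beta(\cdot-s)}h_s\,ds\big)\,dt$.

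Each of these terms is then rewritten in the form $\int_0^T(\cdot)_s\,\gamma^a_s\,ds$, $\int_0^T(\cdot)_s\,\gamma^b_s\,ds$, or $\int_0^T(\cdot)_s\,h_s\,ds$. Two tools do this.
\begin{enumerate}
\item \emph{Fubini.} Double integrals such as $\int_0^T X^i_t\int_0^t\gamma^a_s\,dZ^a_s\,dt$ become $\int_0^T\gamma^a_s\int_s^T X^i_t\,dt\,dZ^a_s$. Similarly, $\int_0^T\int_0^t e^{-\beta(t-s)}h_s\,ds\,dZ^{a,i}_t$ becomes $\int_0^T h_s\int_s^T e^{-\beta(t-s)}\,dZ^{a,i}_t\,ds$. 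After substituting \eqref{def:Z^i}, the latter produces the $\frac{\lambda}{N}\int_s^T e^{-\beta(t-s)}(1-\kappa(\delta^{a,i}_t-\overline{\delta^a}_t))\,dZ^a_t$ terms.
\item \emph{Compensation.} Integrals of predictable integrands against $dZ^\ell$ and $dJ^{\ell,i}$ are replaced by $\rho^\ell\,ds$ and $\mu\,ds$ under the expectation, using the compensated Poisson martingales.
\end{enumerate}
For the terminal book value term, $\mathbb{E}[P^*_T\int_0^T h_s\,ds]=\mathbb{E}[\int_0^T P^*_s h_s\,ds]$ by the martingale property and the tower rule. The same identity handles the $\gamma$ terms multiplied by $P^*_T$.

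The main obstacle is justifying the compensation step for integrands that are not predictable, namely the anticipating factors $\int_s^T X^i_t\,dt$ and $X^i_T$ multiplying $\gamma^\ell_s\,dZ^\ell_s$. These factors depend on the future, including the jump of $Z^\ell$ at $s$ itself. We plan to condition on $\mathcal F_s$ and use the optional projection. We then argue that a single jump time has Lebesgue measure zero, so the jump effect drops out after replacing $dZ^\ell_s$ by $\rho^\ell ds$. Formally this is the identity $\mathbb{E}[\int_0^T \gamma_s\,\mathbb{E}[F\mid\mathcal F_s]\,dZ_s]=\rho\,\mathbb{E}[\int_0^T\gamma_s F\,ds]$ for square-integrable $F$, proved via the martingale representation of $\mathbb{E}[F\mid\mathcal F_s]$ and its left limits. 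The remaining bookkeeping of signs and of the $\mu$ contributions then yields \eqref{eq:gateaux}.
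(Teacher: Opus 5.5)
Your route (perturb $u^i$ to $u^i+\eta v$, read off the linear coefficient of what is a quadratic polynomial in $\eta$, then apply Fubini and compensation) is the paper's route. The $h$-directions, which enter $X^i$ and $I$ only through $ds$-integrals, go through as you describe. The gap is exactly at the step you flag. The identity $\mathbb{E}\big[\int_0^T\gamma_s\,\mathbb{E}[F\mid\mathcal{F}_s]\,dZ_s\big]=\rho\,\mathbb{E}\big[\int_0^T\gamma_sF\,ds\big]$ is false in general. Let $M_s=\mathbb{E}[F\mid\mathcal{F}_s]$ and take $\gamma$ predictable. Integration by parts for $M_t\int_0^t\gamma_s\,dZ_s$ gives
\begin{equation*}
\mathbb{E}\Big[F\int_0^T\gamma_s\,dZ_s\Big]=\mathbb{E}\Big[\int_0^T\gamma_sM_s\,dZ_s\Big]=\rho\,\mathbb{E}\Big[\int_0^T\gamma_sF\,ds\Big]+\mathbb{E}\Big[\int_0^T\gamma_s\,d[M,Z]_s\Big].
\end{equation*}
Only $\gamma_sM_{s-}$ is predictable and may be compensated. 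The Lebesgue-null argument lets you swap $M_{s-}$ for $M_s$ under $ds$, but not under $dZ_s$, which charges exactly the jump times. So the jumps $\Delta M_s$ that occur together with jumps of $Z$ survive. If you carry out your own plan (represent $M=M_0+\int H\,d(Z-\rho t)+\cdots$ and pass to left limits), you obtain precisely the extra term $\rho\,\mathbb{E}[\int_0^T\gamma_sH_s\,ds]$. In this model that term does not vanish. The relevant $F$'s are $X^i_T$ and $\int_0^TX^i_t\,dt$, and $X^i$ jumps by $-\frac1N(1-\kappa(\delta^{a,i}-\overline{\delta^a}))$ at every jump of $Z^a$. (Separately, compensation needs the controls to be predictable, whereas the admissible classes only require progressive measurability.)

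Here is a concrete check. Take $\delta^{a,j}=\delta^{b,j}\equiv0$, $q^j\equiv0$, $x^i_0=0$, independent flows, and direction $v=(1,0,0)$. The $\alpha^i$-part of the true derivative is $-2\alpha^i\kappa\frac{N-1}{N^2}\mathbb{E}[X^i_TZ^a_T]$. By contrast, \eqref{eq:gateaux} gives $-2\alpha^i\kappa\frac{N-1}{N^2}\rho^aT\,\mathbb{E}[X^i_T]$. These differ because $\mathrm{Cov}(X^i_T,Z^a_T)=-\rho^aT/N$. The $\phi^i$-part is off in the same direction, by $\phi^i\kappa(N-1)\rho^aT^2/N^3$, so the two errors do not cancel.

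Hence no sharper argument can close the gap. What is provable is \eqref{eq:gateaux} plus the correction
\[
\kappa\frac{N-1}{N^2}\,\mathbb{E}\big[\int_0^T\gamma^a_s\,d[M^i,Z^a]_s-\int_0^T\gamma^b_s\,d[M^i,Z^b]_s\big],
\]
where $M^i_s=\mathbb{E}\big[P^{\ast}_T-2\alpha^iX^i_T-2\phi^i\int_0^TX^i_r\,dr\mid\mathcal{F}_s\big]$. The $P^{\ast}$ component drops out when $[P^{\ast},Z^\ell]=0$.

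The paper's proof replaces $dZ^\ell_s$ by $\rho^\ell\,ds$ in these anticipating terms without comment (``compensating'' and then Fubini), so it offers no justification you could borrow. Your proposal reproduces its argument together with this flaw.
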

\begin{proof}
    Let $(Z^{a, i}, Z^{b, i}, Q^{i}, X^{i},I)$ be the state processes controlled by a set of admissible controls $u^{i}=\left(\delta^{i, a}, \delta^{i, b}, q^{i}\right)$ and assume that the admissible controls $u^{-i}$ are fixed. For any admissible set of controls $v^{i}=\left(\gamma^{a}, \gamma^{b}, h\right)$, denote by $(\wt{Z}^{a, i}, \wt{Z}^{b, i}, \wt{Q}^{i}, \wt{X}^{i},\wt{I})$ the corresponding set of state processes controlled instead by $u^{i}+\eta v^{i}$ instead of $u^i$ (where $u^{-i}$ is as before). From \eqref{def:Y} and  \eqref{def:Z^i} it follows that for any scalar $\eta>0$ we have, 
    \begin{equation}\label{eq:varied_Y_Z}
    \begin{split}
        &\wt{I}_{t} = I_{t} + \eta \lambda \int_{0}^{t}e^{-\beta(t-s)}h_{s}ds, \\
        &\wt{Z}^{a, i}_{t} = Z^{a, i}_{t} - \eta \kappa \frac{N-1}{N^{2}}\int_{0}^{t} \gamma^{a}_{s}dZ^{a}_{s}, \\
        &\wt{Z}^{b, i}_{t} = Z^{b, i}_{t} - \eta \kappa \frac{N-1}{N^{2}}\int_{0}^{t} \gamma^{b}_{s}dZ^{b}_{s}. 
    \end{split}
    \end{equation}
    Together with \eqref{def:X} we get
    \begin{equation} \label{eq:varied_X}
        \wt{X}^{i}_{t} = X^{i}_{t} + \eta \Bigg(\kappa \frac{N-1}{N^{2}}\Bigg(\int_{0}^{t}\gamma^{a}_{s}dZ^{a}_{s} - \int_{0}^{t}\gamma^{b}_{s}dZ^{b}_{s}\Bigg) + \int_{0}^{t} h_{s}ds\Bigg).
    \end{equation}
    From \eqref{def:objective}, \eqref{eq:varied_Y_Z} and \eqref{eq:varied_X} it follows that
        \begin{equation}
    \begin{split}
        &\altcal{J}(u^{i}+\eta v, u^{-i}) = \\ &\mathbb{E}\Bigg[ 
        \int_{0}^{T}\left(P^{\ast}_{t}+I_{t} + \eta \lambda \int_{0}^{t}e^{-\beta(t-s)}h_{s}ds+\delta^{a, i}_{t}+\eta\gamma^{a}_{t}\right)\left(dZ^{a, i}_{t}-\eta\kappa\frac{N-1}{N^{2}}\gamma^{a}_{t}dZ^{a}_{t}\right) \\
        &\quad-\int_{0}^{T}\left(P^{\ast}_{t} + I_{t} + \eta \lambda \int_{0}^{t}e^{-\beta(t-s)}h_{s}ds - \delta^{b, i}_{t}-\eta\gamma^{b}_{t}\right)\left(dZ^{b, i}_{t}-\eta\kappa\frac{N-1}{N^{2}}\gamma^{b}_{t}dZ^{b}_{t}\right) \\
        &\quad-\int_{0}^{T}\left(P^{\ast}_{t}+I_{t} + \eta \lambda  \int_{0}^{t}e^{-\beta(t-s)}h_{s}ds + \frac{1}{2}\epsilon^{i}q^{i}_{t} + \frac{1}{2}\epsilon^{i}\eta h_{t}\right)\left(q^{i}_{t} + \eta h_{t}\right)dt \\
        &\quad-\phi^{i}\int_{0}^{T}\Bigg((X^{i}_{t})^{2}+2\eta X^{i}_{t}\left(\kappa\frac{N-1}{N^{2}}\left(\int_{0}^{t}\gamma^{a}_{s}dZ^{a}_{s}-\int_{0}^{t}\gamma^{b}_{s}dZ^{b}_{s}\right)+\int_{0}^{t}h_{s}ds\right) \\
        &\qquad\qquad+ \eta^{2}\left(\kappa\frac{N-1}{N^{2}}\left(\int_{0}^{t}\gamma^{a}_{s}dZ^{a}_{s}-\int_{0}^{t}\gamma^{b}_{s}dZ^{b}_{s}\right)+\int_{0}^{t}h_{s}ds\right)^{2}\Bigg)dt \\
        &\quad-\alpha^{i}\Bigg((X^{i}_{T})^{2} + 2\eta X^{i}_{T}\left(\kappa\frac{N-1}{N^{2}}\left(\int_{0}^{T}\gamma^{a}dZ^{a}_{t}-\int_{0}^{T}\gamma^{b}_{t}dZ^{b}_{t}\right)+\int_{0}^{T}h_{t}dt\right) \\
        &\qquad\qquad+\eta^{2}\left(\kappa\frac{N-1}{N^{2}}\left(\int_{0}^{T}\gamma^{a}dZ^{a}_{t} - \int_{0}^{T}\gamma^{b}_{t}dZ^{b}_{t}\right) + \int_{0}^{T}h_{t}dt\right)^{2}\Bigg) \\
        &\qquad\qquad+P^{\ast}_{t}\left(X^{i}_{T} + \eta\left(\kappa\frac{N-1}{N^{2}}\left(\int_{0}^{T}\gamma^{a}_{t}dZ^{a}_{t} - \int_{0}^{T}\gamma^{b}dZ^{b}_{t}\right) + \int_{0}^{T}h_{t}dt\right)\right)
        \Bigg].
     \end{split}
    \end{equation}
Note that we can rewrite the right-hand side of \eqref{eq:varied_Y_Z} as follows, 
    \begin{equation} \label{comp1} 
    \begin{split}
        &\altcal{J}(u^{i}+\eta v, u^{-i}) = \\ &\mathbb{E}\Bigg[ 
        \int_{0}^{T}\left(P^{\ast}_{t}+I_{t}+\delta^{a, i}_{t}\right)dZ^{a, i}_{t} - \int_{0}^{T}\left(P^{\ast}_{t}+I_{t}-\delta^{b, i}_{t}\right)dZ^{b, i}_{t} \\
        &\quad+\eta\Bigg(
        \int_{0}^{T}\left(\lambda\int_{0}^{t}e^{-\beta(t-s)}h_{s}ds + \gamma^{a}_{t}\right)dZ^{a, i}_{t} 
        - \kappa\frac{N-1}{N^{2}}\int_{0}^{T}\left(P^{\ast}_{t}+I_{t}+\delta^{a, i}_{t}\right)\gamma^{a}_{t}dZ^{a}_{t} \\
        &\qquad\quad- \int_{0}^{T}\left(\lambda\int_{0}^{t}e^{-\beta(t-s)}h_{s}ds-\gamma^{b}_{t}\right)dZ^{b, i}_{t}
        + \kappa\frac{N-1}{N^{2}}\int_{0}^{T}\left(P^{\ast}_{t}+I_{t}-\delta^{b, i}_{t}\right)\gamma^{b}_{t}dZ^{b}_{t}
        \Bigg)\\
        &\quad+\eta^{2}\Bigg(\kappa^{2}\left(\frac{N-1}{N^{2}}\right)^{2}\int_{0}^{T}\left(\lambda\int_{0}^{t}e^{-\beta(t-s)}h_{s}ds+\gamma^{a}_{t}\right)\gamma^{a}_{t}dZ^{a}_{t} \\
        &\qquad\quad- \kappa^{2}\left(\frac{N-1}{N^{2}}\right)^{2}\int_{0}^{T}\left(\lambda\int_{0}^{t}e^{-\beta(t-s)}h_{s}ds-\gamma^{b}_{t}\right)\gamma^{b}_{t}dZ^{b}_{t}\Bigg) \\
        &\quad-\int_{0}^{T}\left(P^{\ast}_{t}+I_{t}+\frac{1}{2}\epsilon q^{i}_{t}\right)q^{i}_{t}dt \\
        &\quad-\eta\Bigg(\int_{0}^{T}\left(\lambda\int_{0}^{t}e^{-\beta(t-s)}h_{s}ds + \frac{1}{2}\epsilon^{i}h_{t}\right)q^{i}_{t}dt + \int_{0}^{T}\left(P^{\ast}_{t}+I_{t}+\frac{1}{2}\epsilon^{i}q^{i}_{t}\right)h_{t}dt\Bigg) \\
        &\quad-\eta^{2}\int_{0}^{T}\left(\lambda\int_{0}^{t}e^{-\beta(t-s)}h_{s}ds+\frac{1}{2}\epsilon^{i}h_{t}\right)h_{t}dt \\
        &\quad-\phi^{i}\int_{0}^{T}(X^{i}_{t})^{2}dt-\alpha^{i}(X^{i}_{T})^{2} \\
        &-\eta\Bigg(2\phi^{i}\int_{0}^{T}X^{i}_{t}\left(\kappa\frac{N-1}{N^{2}}\left(\int_{0}^{t}\gamma^{a}_{s}dZ^{a}_{s}-\gamma^{b}_{s}dZ^{b}_{s}\right) + \int_{0}^{t}h_{s}ds\right) \\
        &\qquad\quad+2\alpha^{i}X^{i}_{T}\left(\kappa\frac{N-1}{N^{2}}\left(\int_{0}^{T}\gamma^{a}_{t}dZ^{a}_{t}-\int_{0}^{T}\gamma^{b}_{t}dZ^{b}_{t}\right) + \int_{0}^{T}h_{t}dt\right)\Bigg) \\
        &\quad+\eta^{2}\Bigg(\phi^{i}\int_{0}^{T}\left(\kappa\frac{N-1}{N^{2}}\left(\int_{0}^{t}\gamma^{a}_{s}dZ^{a}_{t}-\int_{0}^{t}\gamma^{b}_{s}dZ^{b}_{s}\right) + \int_{0}^{t}h_{s}\right)^{2}dt \\
        &\qquad\quad+2\alpha^{i}\left(\kappa\frac{N-1}{N^{2}}\left(\int_{0}^{T}\gamma^{a}_{t}dZ^{a}_{t}-\int_{0}^{T}\gamma^{b}_{t}dZ^{b}_{t}\right) + \int_{0}^{T}h_{t}dt\right)^{2} \Bigg) \\
        &\quad+P^{\ast}_{t}X^{i}_{T} + \eta P^{\ast}_{t}X^{i}_{T}\left(\kappa\frac{N-1}{N^{2}}\left(\int_{0}^{T}\gamma^{a}_{t}dZ^{a}_{t} - \int_{0}^{T}\gamma^{b}_{t}dZ^{b}_{t}\right) + \int_{0}^{T}h_{t}dt\right)
        \Bigg].
    \end{split}
    \end{equation}
    By subtracting $\altcal{J}(u^{i}, u^{-i})$ in \eqref{def:objective} from \eqref{comp1}, using \eqref{def:Z^i}, and then dividing by $\eta$, and taking the limit as $\eta\rightarrow0$, while using similar arguments as in the proof of Lemma 5.2 in \cite{neuman2022optimal} in order to bound higher order terms, we get,  
    \begin{equation} 
    \begin{split}
        &\langle(\altcal{J}^{i})'(u^{i}, u^{-i})), v\rangle \\
        &= \mathbb{E}\Bigg[
        \frac{1}{N}\int_{0}^{T} \left(\lambda \int_{0}^{t}e^{-\beta(t-s)}h_{s}ds + \gamma^{a}_{t}\right)\left(1-\kappa\left(\delta^{a, i}_{t}-\overline{\delta^{a}}_{t}\right)\right)dZ^{a}_{t} \\
        &\qquad + \int_{0}^{T}\left(\lambda \int_{0}^{t}e^{-\beta(t-s)}h_{s}ds + \gamma^{a}_{t}\right)dJ^{a, i}_{t} - \frac{\kappa(N-1)}{N^{2}}\int_{0}^{T}\gamma^{a}_{t}\left(P^{\ast}_{t}+I_{t}+\delta^{a, i}_{t}\right)dZ^{a}_{t}  \\
        &\qquad-\frac{1}{N}\int_{0}^{T}\left(\lambda \int_{0}^{t}e^{-\beta(t-s)}h_{s}ds - \gamma^{b}_{t}\right)\left(1-\kappa\left(\delta^{b, i}_{t}-\overline{\delta^{b}}_{t}\right)\right)dZ^{b}_{t} \\
        &\qquad - \int_{0}^{T}\left(\lambda \int_{0}^{t}e^{-\beta(t-s)}h_{s}ds - \gamma^{b}_{t}\right)dJ^{b, i}_{t} + \frac{\kappa(N-1)}{N^{2}}\int_{0}^{T}\gamma^{b}_{t}\left(P^{\ast}_{t}+I_{t}-\delta^{b, i}_{t}\right)dZ^{b}_{t} \\
        &\qquad-\int_{0}^{T}\left(P^{\ast}_{t}h_{t} + \lambda  q^{i}_{t}\int_{0}^{t}e^{-\beta(t-s)}h_{s}ds + I_{t}h_{t} + \epsilon^{i} q^{i}_{t}h_{t}\right) dt \\
        &\qquad-2\phi^{i}\int_{0}^{T} X_{t}^{i}\left(\frac{\kappa(N-1)}{N^{2}}\left(\int_{0}^{t}\gamma^{a}_{s}dZ^{a}_{t}-\int_{0}^{t}\gamma^{b}_{s}dZ^{b}_{t}\right) + \int_{0}^{t}h_{s}ds\right)dt \\
        &\qquad+\left(P^{\ast}_{t}-2\alpha^{i} X_{T}^{i}\right)\left(\frac{\kappa (N-1)}{N^{2}}\left(\int_{0}^{T} \gamma^{a}_{t}dZ^{a}_{t}-
        \int_{0}^{T}\gamma^{b}_{t}dZ^{b}_{t}\right) + \int_{0}^{T}h_{t}ds\right)
        \Bigg].
    \end{split}
    \end{equation}
    By compensating the first six integrals, then using linearity of expectations and the martingale property, we get
        \begin{equation} \label{eq:gateaux_pre_fubini}
    \begin{split}
        &\langle(\altcal{J}^{i})'(u^{i}, u^{-i})), v\rangle \\
        &= \mathbb{E}\Bigg[
        \frac{\rho^{a}}{N}\int_{0}^{T} \left(\lambda \int_{0}^{t}e^{-\beta(t-s)}h_{s}ds + \gamma^{a}_{t}\right)\left(1-\kappa\left(\delta^{a, i}_{t}-\overline{\delta^{a}}_{t}\right)\right)dt \\
        &\qquad + \mu\int_{0}^{T}\left(\lambda \int_{0}^{t}e^{-\beta(t-s)}h_{s}ds + \gamma^{a}_{t}\right)dt - \frac{\rho^{a}\kappa(N-1)}{N^{2}}\int_{0}^{T}\gamma^{a}_{t}\left(P^{\ast}_{t}+I_{t}+\delta^{a, i}_{t}\right)dt  \\
        &\qquad-\frac{\rho^{b}}{N}\int_{0}^{T}\left(\lambda \int_{0}^{t}e^{-\beta(t-s)}h_{s}ds - \gamma^{b}_{t}\right)\left(1-\kappa\left(\delta^{b, i}_{t}-\overline{\delta^{b}}_{t}\right)\right)dt\\
        &\qquad- \mu\int_{0}^{T}\left(\lambda \int_{0}^{t}e^{-\beta(t-s)}h_{s}ds - \gamma^{b}_{t}\right)dt + \frac{\rho^{b}\kappa(N-1)}{N^{2}}\int_{0}^{T}\gamma^{b}_{t}\left(P^{\ast}_{t}+I_{t}-\delta^{b, i}_{t}\right)dt \\
        &\qquad-\int_{0}^{T}\left(P^{\ast}_{t}h_{t} + \lambda  q^{i}_{t}\int_{0}^{t}e^{-\beta(t-s)}h_{s}ds + I_{t}h_{t} + \epsilon^{i} q^{i}_{t}h_{t}\right) dt \\
        &\qquad-2\phi^{i}\int_{0}^{T} X_{t}^{i}\left(\frac{\kappa(N-1)}{N^{2}}\left(\int_{0}^{t}\gamma^{a}_{s}dZ^{a}_{t}-\int_{0}^{t}\gamma^{b}_{s}dZ^{b}_{t}\right) + \int_{0}^{t}h_{s}ds\right)dt \\
        &\qquad+\left(P^{\ast}_{t}-2\alpha^{i} X_{T}^{i}\right)\left(\frac{\kappa (N-1)}{N^{2}}\left(\int_{0}^{T} \gamma^{a}_{t}dZ^{a}_{t}-
        \int_{0}^{T}\gamma^{b}_{t}dZ^{b}_{t}\right) + \int_{0}^{T}h_{t}ds\right)
        \Bigg].
    \end{split}
    \end{equation}
    By applying Fubini's theorem to \eqref{eq:gateaux_pre_fubini}, we obtain \eqref{eq:gateaux}. 
\end{proof}

Before we state Proposition \ref{prop:FBSDE}, which characterises the optimal controls as the solution to a forward-backward-stochastic-differential-equation, we introduce the following square-integrable martingales
A maximising $(\delta^{a, i}, \delta^{b, i}, q^{i})$ to \eqref{def:objective}, must therefore satisfy \eqref{eq:delta_a_first_order}, \eqref{eq:delta_b_first_order} and \eqref{eq:q_first_order}.
Define the square-integrable martingale
\begin{equation} \label{def:tildeM}
    M^{i}_{t} := \mathbb{E}\left[P^{\ast}_{t}-2\alpha^{i} X^{i}_{T}-2\phi^{i}\int_{0}^{T}X^{i}_{r}dr\Big\vert\mathcal{F}_{t}\right], \quad 0\leq t \leq T, 
\end{equation}
and recall the definitions of $\Gamma^{i}$ and $\wt{N}^i$ in \eqref{def:Gamma} and  \eqref{def:tildeN}, respectively. 
We further define the square-integrable martingales
\begin{equation} \label{N-L-mart} 
    N^{i}_{t} := \lambda\int_{0}^{t}e^{\beta s}d\wt{N}^{i}_{s}, \quad L^{i}_{t} := -\int_{0}^{t}e^{\beta s}d\wt{N}^{i}_{s} + M^{i}_{t},    \quad 0\leq t \leq T.
\end{equation}
In the following proposition we characterize the Nash equilibrium of the game as a system of FBSDEs. 
\begin{proposition} \label{prop:FBSDE}
    Assume that $Z^{a}$ and $Z^{b}$ are Poisson processes with intensities $\rho^a$ and $\rho^b$, respectively, initiating at zero. A set of admissible controls $(u^{i})_{i\in [N]}$ , where $u^{i}=(\delta^{a, i}, \delta^{b, i}, q^{i})$, is a Nash equilibrium in the sense of Definition \ref{def:nash}, if and only if $(X^{i}, \delta^{a, i}, \delta^{b, i}, q^{i}, \Gamma^{i})_{i\in [N]}$ and $Y$ satisfy the following coupled system of FBSDEs:      \begin{equation} \label{eq:FBSDE_aggregated}
    \begin{dcases}
        &dX^{i}_{t} = \frac{1}{N}\Big(1-\kappa\Big(\delta^{b, i}_{t} -\overline{\delta^{b}}_{t}\Big)\Big)dZ^{b}_{t} - \frac{1}{N}\Big(1-\kappa\Big(\delta^{a, i}_{t} - \overline{\delta^{a}}_{t}\Big)\Big)dZ^{a}_{t} + q^{i}_{t}dt, \\
                &dI_{t} = -\beta I_{t}dt + \lambda N \overline{q}_{t}dt, \\
        &d\delta^{a, i}_{t} = -dP^{\ast}_{t} + \beta I_{t}dt - \lambda N \overline{q}_{t}dt + 2\phi^{i}X^{i}_{t}dt + \frac{2N}{2N-1}\overline{\phi X}_{t}dt + \\
        &\qquad\qquad dM^{i}_{t} + \frac{N}{2N-1}d\overline{M_{t}^{-i}}, \\
        &d\delta^{b, i}_{t} = dP^{\ast}_{t} - \beta I_{t}dt + \lambda N \overline{q}_{t}dt - 2\phi^{i}X^{i}_{t}dt - \frac{2N}{2N-1}\overline{\phi X}_{t}dt \\
        &\qquad\qquad - dM^{i}_{t} - \frac{N}{2N-1}d\overline{M^{-i}_t}, \\
        &dq^{i}_{t} = \frac{1}{\epsilon^{i}}\Bigg(-dP^{\ast}_{t} + \beta I_{t}dt - \lambda N \overline{q}_{t}dt + 2\phi^{i}X^{i}_{t}dt + \beta \Gamma^{i}_{t}dt +\lambda q^{i}_{t}dt + dL^{i}_{t} \\
        &\qquad\qquad- \frac{\lambda}{N}\left(1-\kappa\left(\delta^{a, i}_{t}-\overline{\delta^{a}}_{t}\right)\right)dZ^{a}_{t} + \frac{\lambda}{N}\left(1-\kappa\left(\delta^{b, i}_{t}-\overline{\delta^{b}}_{t}\right)\right)dZ^{b}_{t} \Bigg), \\
        &d\Gamma^{i}_{t} = \beta\Gamma^{i}_{t}dt -\frac{\lambda}{N}\left(1-\kappa\left(\delta^{a, i}_{t} - \overline{\delta^{a}}_{t}\right)\right)dZ^{a}_{t} + \frac{\lambda}{N}\left(1-\kappa\left(\delta^{b, i}_{t} - \overline{\delta^{b}}_{t}\right)\right)dZ^{b}_{t} \\
        &\qquad\qquad + \lambda q^{i}_{t}dt + dN^{i}_{t}, \\
    \end{dcases}
    \end{equation}
    with the  initial and terminal conditions: 
    \be
 \begin{aligned} 
  X^{i}_{0}=x^{i}_{0},  \quad I_{0} = 0, \quad  \Gamma^{i}_{T} = 0, \quad  q^{i}_{T} = -\frac{1}{\epsilon^{i}}\left(I_{T}+2\alpha^{i} X^{i}_{T}\right), 
\end{aligned} 
\ee
\be
\delta^{a, i}_{T} = \frac{N}{\kappa(2N-1)} +\frac{N\mu}{\rho^{a}}+ \frac{N}{2N-1}\overline{\delta^{a}_{T}} - \frac{N-1}{2N-1}I_{t} - \frac{2N-2}{2N-1}\alpha^{i} X^{i}_{T}, \\ 
\ee
and 
\be
  \delta^{b, i}_{T} = \frac{N}{\kappa(2N-1)}+\frac{N\mu}{\rho^{b}} + \frac{N}{2N-1}\overline{\delta^{b}_{T}} + \frac{N-1}{2N-1}I_{t} + \frac{2N-2}{2N-1}\alpha^{i} X^{i}_{T}. 
\ee
\end{proposition}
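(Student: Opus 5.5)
The plan is to read the FBSDE system off the first-order condition, using the concavity already proved. By Proposition \ref{prop:uniqueness}, for fixed $u^{-i}$ the map $u^{i}\mapsto\altcal{J}^{i}(u^{i},u^{-i})$ is strictly concave on $\altcal{D}\times\altcal{D}\times\altcal{A}$. Hence $u^{i}$ is a best response if and only if the Gateaux derivative \eqref{eq:gateaux} of Lemma \ref{lemma:gateaux} vanishes for every direction $v=(\gamma^{a},\gamma^{b},h)$. Sufficiency comes from concavity (Ekeland--Temam, Proposition 2.1 of Chapter 2). Necessity holds because $\eta\mapsto\altcal{J}^{i}(u^{i}+\eta v,u^{-i})$ is maximised at $\eta=0$. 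A profile is therefore a Nash equilibrium if and only if these $N$ first-order conditions hold simultaneously. The proof then reduces to showing that this family of conditions is equivalent to \eqref{eq:FBSDE_aggregated} with the stated boundary conditions.

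\textbf{Step 1: pointwise conditions.} In \eqref{eq:gateaux} each of $\gamma^{a},\gamma^{b},h$ is an arbitrary progressively measurable square-integrable process. Conditioning the integrand on $\mathcal{F}_{s}$ (tower property), vanishing for all directions is equivalent to the $d\mathbb{P}\otimes ds$-a.e. vanishing of the optional projections of the three brackets. For $\gamma^{a}$ this gives
\begin{equation*}
\rho^{a}\Big(\tfrac{1}{N}\big(1-\kappa(\delta^{a,i}_{s}-\overline{\delta^{a}}_{s})\big)-\kappa\tfrac{N-1}{N^{2}}(I_{s}+\delta^{a,i}_{s})+\kappa\tfrac{N-1}{N^{2}}\big(M^{i}_{s}-P^{\ast}_{s}+2\phi^{i}\textstyle\int_{0}^{s}X^{i}_{r}dr\big)\Big)+\mu=0 ,
\end{equation*}
where $M^{i}$ is the martingale \eqref{def:tildeM}. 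The $\gamma^{b}$ condition is symmetric. For $h$, the terms $\int_{s}^{T}e^{-\beta(t-s)}(\cdots)$ are rewritten through $\Gamma^{i}$ and $\wt N^{i}$ from \eqref{def:Gamma}--\eqref{def:tildeN}. After projection they become exactly $\Gamma^{i}_{s}$ plus the martingale pieces gathered in $L^{i}$ from \eqref{N-L-mart}. The result is an algebraic relation $\epsilon^{i}q^{i}_{s}=-I_{s}-\Gamma^{i}_{s}+(\text{martingale terms})-2\phi^{i}\int_{0}^{s}X^{i}_{r}dr$.

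\textbf{Step 2: decoupling the $\delta$'s.} The $\gamma^{a}$ conditions are linear in $(\delta^{a,j})_{j}$, because $\overline{\delta^{a}}$ involves all dealers. I would average them over $i$ to express $\overline{\delta^{a}}$, substitute back, and solve for $\delta^{a,i}$. This step produces the factors $\frac{N}{2N-1}$, $\frac{2N}{2N-1}$ and $\overline{M^{-i}}$ (the notation \eqref{av-not}). Differentiating the resulting identities in $t$ gives the $d\delta^{a,i}$ and $d\delta^{b,i}$ equations: the $P^{\ast}$ and $M$ pieces give the martingale increments, $\int_{0}^{t}X^{i}$ gives the $2\phi^{i}X^{i}dt$ terms, and $dI_{t}=(-\beta I_{t}+\lambda N\overline q_{t})dt$ comes from \eqref{def:Y}. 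Evaluating at $t=T$, where $M^{i}_{T}-P^{\ast}_{T}+2\phi^{i}\int_{0}^{T}X^{i}=-2\alpha^{i}X^{i}_{T}$, gives the terminal conditions for $\delta^{a,i}_{T}$ and $\delta^{b,i}_{T}$. The same procedure applied to the $h$-relation yields the $dq^{i}$ equation and $q^{i}_{T}=-(I_{T}+2\alpha^{i}X^{i}_{T})/\epsilon^{i}$, using $\Gamma^{i}_{T}=0$. The $d\Gamma^{i}$ dynamics follow from Itô's product rule applied to $e^{\beta t}\times(\cdots)$ in \eqref{def:Gamma}. The forward equations for $X^{i}$ and $I$ are just \eqref{def:X}, \eqref{def:Z^i} and \eqref{def:Y}.

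\textbf{Step 3: converse.} Conversely, given a solution of \eqref{eq:FBSDE_aggregated}, integrate the backward equations from $t$ to $T$, use the terminal conditions, and take conditional expectations. This recovers the pointwise identities of Step 1, so the Gateaux derivative vanishes for all $v$. By concavity each $u^{i}$ is then a best response, and the profile is a Nash equilibrium.

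The main obstacle is the bookkeeping in Step 2. Inverting the linear coupling between dealers' quotes must produce exactly the coefficients $\frac{N}{2N-1}$ and $\frac{2N-2}{2N-1}$. The $dZ^{a},dZ^{b}$ jump terms inside $\Gamma^{i}$ and the $q$-equation must match after projection, and I expect to need care with conditioning versus the compensators $\rho^{a},\rho^{b}$ in that matching. I would first verify the $N$-averaging identity for $\overline{\delta^{a}}$ separately, then plug it back in.
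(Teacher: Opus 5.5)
Your proposal follows the paper's proof essentially step for step. Concavity from Proposition \ref{prop:uniqueness} plus Lemma \ref{lemma:gateaux} reduces equilibrium to vanishing Gateaux derivatives; optional projection gives the pointwise conditions \eqref{eq:delta_a_first_order}--\eqref{eq:q_first_order}; averaging over dealers decouples the quotes as in \eqref{eq:delta_a_pre_dynamics} and \eqref{eq:delta_a_bar_dynamics}; and sufficiency follows by reversing these steps using the martingale property of $M^{i}$ and $\wt{N}^{i}$.

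There are two bookkeeping points. First, with $\Gamma^{i}$ as in \eqref{def:Gamma} the projected future-flow terms equal $+\Gamma^{i}_{s}$, so the $h$-condition reads $\epsilon^{i}q^{i}_{s}=\Gamma^{i}_{s}-P^{\ast}_{s}-I_{s}+M^{i}_{s}+2\phi^{i}\int_{0}^{s}X^{i}_{r}dr$, which is \eqref{eq:optimal_q}. You flipped the signs of $\Gamma^{i}_{s}$ and of the running integral, which would make the $+\beta\Gamma^{i}_{t}dt$ and $+2\phi^{i}X^{i}_{t}dt$ terms of $dq^{i}$ negative. Second, combining \eqref{eq:delta_a_pre_dynamics} with \eqref{eq:delta_a_bar_dynamics} gives $\frac{2(N-1)}{2N-1}\phi^{i}X^{i}_{t}dt+\frac{2N}{2N-1}\overline{\phi X}_{t}dt+\frac{N}{2N-1}dM^{i}_{t}+\frac{N-1}{2N-1}d\overline{M^{-i}_{t}}$ in the $d\delta^{a,i}$ equation, and a constant $\frac{N}{\kappa(2N-1)}+\frac{N^{2}\mu}{\kappa(2N-1)\rho^{a}}$ in $\delta^{a,i}_{T}$. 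These are not the printed coefficients, and the printed $d\delta^{a,i}$ equation does not even average back to \eqref{eq:delta_a_bar_dynamics}, so do not try to reproduce the statement verbatim at those places.
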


\begin{proof}
From Proposition \ref{prop:uniqueness}, Lemma \ref{lemma:gateaux}, and Proposition 2.1 in Chapter 2 of \cite{ekeland1999convex}, we derive a first-order condition for maximiser $u^{i}=(\delta^{a, i}, \delta^{b, i}, q^{i})$ for the objective \eqref{def:objective}, for every $i\in [N]$.

\textit{Necessity:} We first prove that a profile $(u_i)_{i\in N}$ which satisfies \eqref{def:Gateaux}, also satisfies \eqref{eq:FBSDE_aggregated}. 
By applying optional projection on the right-hand side of \eqref{eq:gateaux} and using \eqref{def:Gateaux} we get,   \begin{equation} \label{eq:opt_proj}
\begin{split}
    \mathbb{E}\Bigg[&
    \int_{0}^{T}\gamma^{a}_{s}\Bigg(\rho^{a}\bigg(\frac{1}{N}\left(1-\kappa(\delta^{a, i}_{s}-\overline{\delta^{a}}_{s})\right) - \kappa\frac{N-1}{N^{2}}\left(P^{\ast}_{s}+I_{s}+\delta^{a, i}_{s}\right) \\ 
    &\qquad\quad+\kappa\frac{N-1}{N^{2}}\mathbb{E}\bigg[\left(P^{\ast}_{t}-2\alpha^{i}X^{i}_{T}\right) -2\phi^{i}\int_{s}^{T}X^{i}_{t}dt\bigg\vert\mathcal{F}_{s}\bigg]\bigg)+ \mu\Bigg)ds \\
    &+\int_{0}^{T}\gamma^{b}_s\Bigg(\rho^{b}\bigg(\frac{1}{N}\left(1-\kappa(\delta^{b,i }_{s}-\overline{\delta^{b}}_{s})\right) + \kappa\frac{N-1}{N^{2}}\left(P^{\ast}_{s} + I_{s} - \delta^{b, i}_{s}\right) \\
    &\qquad\quad-\kappa\frac{N-1}{N^{2}}\mathbb{E}\bigg[\left(P^{\ast}_{t}-2\alpha^{i}X^{i}_{T}\right) -2\phi^{i}\int_{s}^{T}X^{i}_{t}dt\bigg\vert\mathcal{F}_{s}\bigg]\bigg) + \mu\Bigg)ds \\
    &+\int_{0}^{T}h_{s}\bigg(\frac{\lambda}{N}\mathbb{E}\bigg[\int_{s}^{T}e^{-\beta(t-s)}\left(1-\kappa(\delta^{a, i}_{t}-\overline{\delta^{a}}_{t})\right)dZ^{a}_{t}\bigg\vert\mathcal{F}_{s}\bigg] \\
    &\qquad\quad-\frac{\lambda}{N}\mathbb{E}\bigg[\int_{s}^{T}e^{-\beta(t-s)}\left(1-\kappa(\delta^{b, i}_{t}-\overline{\delta^{b}}_{t})\right)dZ^{b}_{t}\bigg\vert\mathcal{F}_{s}\bigg] \\
    &\qquad\quad-\lambda \mathbb{E}\bigg[\int_{s}^{T}e^{-\beta(t-s)}q^{i}_{t}dt\bigg\vert\mathcal{F}_{s}\bigg] - P^{\ast}_{s} - I_{s} - \epsilon^{i} q^{i}_{s} \\
    &\qquad\quad+\mathbb{E}\bigg[\left(P^{\ast}_{t}-2\alpha^{i}X^{i}_{T}\right) -2\phi^{i}\int_{s}^{T}X^{i}_{t}dt\bigg\vert\mathcal{F}_{s}\bigg]\bigg)ds\Bigg] = 0,
\end{split}
\end{equation}
for any admissible $\gamma^{a}, \gamma^{b}, h$.  It follows that the conditions given below are necessary for \eqref{eq:opt_proj} to hold: 
\begin{equation} \label{eq:delta_a_first_order}
    1+\frac{N\mu}{\rho^{a}} - \kappa \delta^{a, i}_{s} + \kappa\overline{\delta^{a}_{s}} - \frac{\kappa(N-1)}{N}\left(P^{\ast}_{s}+I_{s}+\delta^{a, i}_{s} - \mathbb{E}\left[P^{\ast}_{t}-2\alpha^{i} X^{i}_{T}-2\phi^{i}\int_{s}^{T}X^{i}_{t}dt\Big\vert\mathcal{F}_{s}\right]\right) = 0 
\end{equation}
and 
\begin{equation}\label{eq:delta_b_first_order}
    1+\frac{N\mu}{\rho^{b}} - \kappa \delta^{b, i}_{s} + \kappa\overline{\delta^{b}_{s}} + \frac{\kappa(N-1)}{N}\left(P^{\ast}_{s}+I_{s}-\delta^{b, i}_{s} - \mathbb{E}\left[P^{\ast}_{t}-2\alpha^{i} X^{i}_{T}-2\phi^{i}\int_{s}^{T}X^{i}_{t}dt\Big\vert\mathcal{F}_{s}\right]\right) = 0 
\end{equation}
and
\begin{equation}\label{eq:q_first_order}
\begin{split}
    &\frac{\lambda}{N}\mathbb{E}\bigg[\int_{s}^{T}e^{-\beta(t-s)}\left(1-\kappa(\delta^{a, i}_{t}-\overline{\delta^{a}}_{t})\right)dZ^{a}_{t}\bigg\vert\mathcal{F}_{s}\bigg] \\
    &-\frac{\lambda}{N}\mathbb{E}\bigg[\int_{s}^{T}e^{-\beta(t-s)}\left(1-\kappa(\delta^{b, i}_{t}-\overline{\delta^{b}}_{t})\right)dZ^{b}_{t}\bigg\vert\mathcal{F}_{s}\bigg] \\
    &-\lambda \mathbb{E}\bigg[\int_{s}^{T}e^{-\beta(t-s)}q^{i}_{t}dt\bigg\vert\mathcal{F}_{s}\bigg] - P^{\ast}_{s} - I_{s} - \epsilon^{i} q^{i}_{s} +\mathbb{E}\bigg[\left(P^{\ast}_{t}-2\alpha^{i}X^{i}_{T}\right) -2\phi^{i}\int_{s}^{T}X^{i}_{t}dt\bigg\vert\mathcal{F}_{s}\bigg] = 0.
\end{split}
\end{equation}
That is, the maximiser $(\delta^{a, i}, \delta^{b, i}, q^{i})$ to \eqref{def:objective}, must satisfy \eqref{eq:delta_a_first_order}--\eqref{eq:q_first_order}.
From \eqref{def:tildeM} and \eqref{eq:delta_a_first_order} we get that,  
\begin{equation} \label{eq:delta_a_pre_dynamics}
    d\delta^{a, i}_{s} = \frac{N}{2N-1}d\overline{\delta^{a}}_{s} - \frac{N-1}{2N-1}\bigg(dP^{\ast}_{s} -\beta I_{s}ds + \lambda N \overline{q}_{s}ds - dM^{i}_{s} - 2\phi^{i}X^{i}_{s}ds\bigg).
\end{equation}
From \eqref{def:tildeM} and \eqref{eq:delta_b_first_order} it follows that
\begin{equation} \label{eq:delta_b_pre_dynamics}
    d\delta^{b, i}_{s} = \frac{N}{2N-1}d\overline{\delta^{b}}_{s} + \frac{N-1}{2N-1}\bigg(dP^{\ast}_{s} -\beta I_{s} ds + \lambda N \overline{q}_{s} ds - dM^{i}_{s} - 2\phi^{i}X^{i}_{s}ds\bigg).
\end{equation}
Note that from \eqref{def:Gamma} and \eqref{def:Z^i} it follows that $\Gamma^{i}$ satisfies, 
\begin{equation} \label{eq:Gamma_dynamcs}
\begin{split}
    d\Gamma^{i}_{s} =& \beta\Gamma^{i}_{s}ds -\frac{\lambda}{N}\left(1-\kappa\left(\delta^{a, i}_{s} - \overline{\delta^{a}}_{s}\right)\right)dZ^{a}_{s} \\
    &+ \frac{\lambda}{N}\left(1-\kappa\left(\delta^{b, i}_{s} - \overline{\delta^{b}}_{s}\right)\right)dZ^{b}_{s} +\lambda q^{i}_{s}ds + \lambda e^{\beta s}d\wt{N}^{i}_{s}.
\end{split}
\end{equation}
Finally from \eqref{def:Gamma} and \eqref{eq:q_first_order} we have, 
\begin{equation} \label{eq:optimal_q}
    q^{i}_{s} = \frac{1}{\epsilon^{i}}\bigg(-P^{\ast}_{s} - I_{s} + M^{i}_{s} + 2\phi^{i}\int_{0}^{s}X^{i}_{t}dt + \Gamma^{i}_{s} \bigg). 
\end{equation}
Together with \eqref{eq:Gamma_dynamcs} and \eqref{def:Y} it follows that the dynamics of $q^{i}$ is given by
\begin{equation} \label{eq:q_dynamics} 
\begin{split}
    dq^{i}_{s} = \frac{1}{\epsilon^{i}}\bigg(
    &-dP^{\ast}_{s} + \beta I_{s}ds - \lambda N\overline{q}_{s}ds + dM^{i}_{s} + 2\phi^{i}X^{i}_{s}ds + \beta \Gamma^{i}_{s}ds \\
    &- \frac{\lambda}{N}\left(1-\kappa\left(\delta^{a, i}_{s}-\overline{\delta^{a}}_{s}\right)\right)dZ^{a}_{s} + \frac{\lambda}{N}\left(1-\kappa\left(\delta^{b, i}_{s}-\overline{\delta^{b}}_{s}\right)\right)dZ^{b}_{s} \\
    &+\lambda q^{i}_{s}ds - \lambda e^{-\beta s}d\wt{N}^{i}_{s} \bigg).
\end{split}
\end{equation}

Summing over the index $i$ on both sides of \eqref{eq:delta_a_pre_dynamics} and \eqref{eq:delta_b_pre_dynamics}, dividing by $N$, and using the notation in \eqref{av-not} we get, 
\begin{equation} \label{eq:delta_a_bar_dynamics}
     d\overline{\delta^{a}_s}= -dP^{\ast}_{s} + \beta I_{s} ds - \lambda N \overline{q}_{s}ds + d\overline{M}_{s} + 2\overline{\phi X}_{s} ds,
 \end{equation}
and
\begin{equation} \label{eq:delta_b_bar_dynamics}
    d\overline{\delta^{b}_s}= dP^{\ast}_{s} - \beta I_{s} ds + \lambda N \overline{q}_{s}ds - d\overline{M}_{s} - 2\overline{\phi X}_{s} ds.
\end{equation}
 Using \eqref{eq:delta_a_pre_dynamics} and \eqref{eq:delta_a_bar_dynamics}, and the notation introduced in  \eqref{av-not} for $\overline{M^{-i}}$ below, we get that  \begin{equation} \label{eq:delta_a_dynamics}
    d\delta^{a, i}_{s} = -dP^{\ast}_{s} + \beta I_{s}ds - \lambda N \overline{q}_{s}ds + 2\phi^{i}X^{i}_{s}ds + \frac{2N}{2N-1}\overline{\phi X}_{s}ds + dM^{i}_{s} + \frac{N}{2N-1}d\overline{M^{-i}_s}
\end{equation}
and using \eqref{eq:delta_b_pre_dynamics} and \eqref{eq:delta_b_bar_dynamics}, we get that
\begin{equation} \label{eq:delta_b_dynamics}
    d\delta^{b, i}_{s} = dP^{\ast}_{s} - \beta I_{s}ds + \lambda N \overline{q}_{s}ds - 2\phi^{i}X^{i}_{s}ds - \frac{2N}{2N-1}\overline{\phi X}_{s}ds - dM^{i}_{s} - \frac{N}{2N-1}d\overline{M^{-i}_s}.
\end{equation}
Using \eqref{eq:Gamma_dynamcs} and \eqref{av-not} we get,  
\begin{equation} \label{eq:Gamma_bar_dynamics}
    d\overline{\Gamma}_{s} = \beta\overline{\Gamma}_{s} ds + \lambda \overline{q}_{s}ds + \lambda \left(dZ^{b}_{s} - dZ^{a}_{s}\right) + d\overline{N}_{s}.
\end{equation}
Furthermore, using \eqref{eq:optimal_q} and \eqref{av-not} we note that 
\begin{equation} \label{eq:q_bar}
    \overline{q}_{s} = -\overline{\frac{1}{\epsilon}}P^{\ast}_{s} - \overline{\frac{1}{\epsilon}}I_{s} + \overline{\left(\frac{M}{\epsilon}\right)}_{s} + 2\int_{0}^{s}\overline{\left(\frac{\phi X}{\epsilon}\right)}_{t}dt + \overline{\left(\frac{\Gamma}{\epsilon}\right)}_{s}.
\end{equation}
From \eqref{N-L-mart}, \eqref{eq:q_dynamics} and \eqref{av-not} it follows that the dynamics of $\overline{q}$ is given by  
\begin{equation} \label{eq:q_bar_dynamics}
\begin{split}
    d\overline{q}_{s} =& -\overline{\frac{1}{\epsilon}}dP^{\ast}_{s} + \overline{\frac{1}{\epsilon}}\beta I_{s}ds - \overline{\frac{1}{\epsilon}}\lambda N \overline{q}_{s}ds + 2\overline{\left(\frac{\phi X}{\epsilon}\right)}_{s}ds + \beta\overline{\left(\frac{\Gamma}{\epsilon}\right)}_{s}ds \\
    &+ \frac{\kappa\lambda}{N^{2}}\overline{\left(\frac{\delta^{a}}{\epsilon}\right)}_{s}dZ^{a}_{s} - \frac{\kappa\lambda}{N^{2}}\overline{\frac{1}{\epsilon}}\overline{\delta^{a}}_{s}dZ^{a}_{s} - \frac{\kappa\lambda}{N^{2}}\overline{\left(\frac{\delta^{b}}{\epsilon}\right)}_{s}dZ^{b}_{s} + \frac{\kappa\lambda}{N^{2}}\overline{\frac{1}{\epsilon}}\overline{\delta^{b}}_{s}dZ^{b}_{s} \\
    &+ \overline{\frac{1}{\epsilon}}\frac{\lambda}{N}\left(dZ^{b}_{s}-dZ^{a}_{s}\right) + \lambda \overline{\left(\frac{q}{\epsilon}\right)}_{s}ds + d\overline{\left(\frac{L}{\epsilon}\right)}_{s}.
\end{split}
\end{equation}
Finally, it is left to verify the terminal conditions in \eqref{eq:FBSDE_aggregated}. The terminal condition $\Gamma^{i}_{T}=0$ follows from the definitions of $\Gamma^{i}$ and $\tilde{N}^{i}$, given in \eqref{def:Gamma} and \eqref{def:tildeN} respectively. The terminal condition for $q^{i}$ follows from \eqref{eq:optimal_q} and the terminal condition of $\Gamma^{i}$, where we recall the definition of $\tilde{M}^{i}$ from \eqref{def:tildeM}.  Note that the unaffected price term and the running inventory penalty terms in \eqref{eq:optimal_q} cancel due to \eqref{def:tildeM}. Finally, the terminal conditions of $\delta^{a, i}$ and $\delta^{b, i}$ follow from \eqref{eq:delta_a_first_order} and \eqref{eq:delta_b_first_order} respectively.

By combining \eqref{def:Y}, \eqref{def:Z^i}, \eqref{def:X}, \eqref{eq:delta_a_bar_dynamics}--\eqref{eq:q_bar_dynamics} we get \eqref{eq:FBSDE_aggregated}.  
\newline
 \textit{Sufficiency:} Let $(u^i,X^{i}, \delta^{a, i}, \delta^{b, i}, q^{i}, \Gamma^{i})_{i\in [N]}$ and $Y$ be a solution to \eqref{eq:FBSDE_aggregated}. We will prove that \eqref{def:Gateaux} is satisfied. 
Reversing the steps leading to \eqref{eq:delta_a_dynamics} and \eqref{eq:delta_b_dynamics}, we get \eqref{eq:delta_a_first_order} and \eqref{eq:delta_b_first_order}. Substituting \eqref{eq:delta_a_first_order} and \eqref{def:tildeM} into the first expected integral on the left-hand side of \eqref{eq:opt_proj}, we get 
\begin{equation} \label{eq:ask_sufficiency_1}
\begin{aligned}
    &\mathbb{E}\Bigg[\int_{0}^{T}\gamma^{a}_{s}\rho^{a}\Bigg(
    \frac{1}{N}\left(1-\kappa\left(\delta^{a, i}_{s}-\overline{\delta^{a}_s}\right)\right)-\kappa\frac{N-1}{N^{2}}\left(P^{\ast}_{s}+I_{s}+\delta^{a, i}_{s}\right) \\
    &\qquad\qquad-2\phi^{i}\kappa\frac{N-1}{N^{2}}\int_{s}^{T}X^{i}_{t}dt + \kappa\frac{N-1}{N^{2}}\left(P^{\ast}_{t}-2\alpha^{i}X^{i}_{T}\right)\Bigg)ds\Bigg] \\
    &=\mathbb{E}\Bigg[\int_{0}^{T}\gamma^{a}_{s}\rho^{a}\Bigg(
    -\kappa\frac{N-1}{N^{2}}M^{i}_{s}+\kappa\frac{N-1}{N^{2}}\left(P^{\ast}_{t}-2\alpha^{i}X^{i}_{T}-2\phi^{i}\int_{0}^{T}X^{i}_{t}dt\right)
    \Bigg)ds\Bigg] \\
    &= \kappa\frac{N-1}{N^{2}}\mathbb{E}\Bigg[\int_{0}^{T}\gamma^{a}_{s}\rho^{a}
   \left(M^{i}_{T}-M^{i}_{s}\right)
    ds\Bigg].
\end{aligned}
\end{equation}
An application of the tower property, Fubini's theorem, and the martingale property of $M^{i}$ then gives us 
\begin{equation} \label{eq:ask_sufficiency_2}
    \mathbb{E}\Bigg[\int_{0}^{T}\gamma^{a}_{s}\rho^{a}\left(M^{i}_{T}-M^{i}_{s}\right) 
   ds\Bigg] =\mathbb{E}\Bigg[\int_{0}^{T}\gamma^{a}_{s}\rho^{a}\Bigg(
  \left(\mathbb{E}\left[M^{i}_{T}|\mathcal{F}_{s}\right]-M^{i}_{s}\right)
    \Bigg)ds\Bigg] = 0.
\end{equation}
Substituting \eqref{eq:delta_b_first_order} and \eqref{def:tildeM} into the second expected integral on the left-hand side of \eqref{eq:opt_proj} and repeating the same steps we get,   
\begin{equation} \label{eq:bid_sufficiency_1}
\begin{split}
    &\mathbb{E}\Bigg[\gamma^{b}_{s}\rho^{b}\Bigg(
    \frac{1}{N}\left(1-\kappa\left(\delta^{b, i}_{s}-\overline{\delta^{b}}_{s}\right)\right) - \kappa\frac{N-1}{N^{2}}\left(P^{\ast}_{s}+I_{s}-\delta^{b, i}_{s}\right) \\
     &\qquad\quad+2\phi^{i}\kappa\frac{N-1}{N^{2}}\int_{s}^{T}X^{i}_{t}dt - \kappa\frac{N-1}{N^{2}}\left(P^{\ast}_{t}-2\alpha^{i}X^{i}_{T}\right)
    \Bigg)ds\Bigg] =0.
 \end{split}
\end{equation}

From \eqref{eq:Gamma_dynamcs} and \eqref{eq:q_dynamics} we get \eqref{eq:optimal_q}. Substituting this into the third expected integral on the left-hand side of \eqref{eq:opt_proj}, we get
\begin{equation} \label{eq:q_sufficiency_1}
\begin{split}
    &\mathbb{E}\Bigg[\int_{0}^{T}h_{s}\Bigg(\lambda\int_{s}^{T}e^{-\beta(t-s)}dZ^{a, i}_{t} - \lambda\int_{s}^{T}e^{-\beta(t-s)}dZ^{b, i}_{t} - \lambda\int_{s}^{T}e^{-\beta(t-s)}q^{i}_{t}dt \\
    &\qquad\qquad- P^{\ast}_{s} - I_{s} - \epsilon^{i}q^{i}_{s} - 2\phi^{i}\int_{s}^{T}X^{i}_{t}dt + \left(P^{\ast}_{t}-2\alpha^{i}X^{i}_{T}\right)\Bigg)ds
    \Bigg] \\
    &=\mathbb{E}\Bigg[\int_{0}^{T}h_{s}\Bigg(\lambda\int_{s}^{T}e^{-\beta(t-s)}dZ^{a, i}_{t} - \lambda\int_{s}^{T}e^{-\beta(t-s)}dZ^{b, i}_{t} - \lambda\int_{s}^{T}e^{-\beta(t-s)}q^{i}dt \\
    &\qquad\qquad-M^{i}_{s}-2\phi^{i}\int_{0}^{s}X^{i}_{t}dt - \Gamma^{i}_{s} - 2\phi^{i}\int_{s}^{T}X^{i}_{t}dt + \left(P^{\ast}_{t}-2\alpha^{i}X^{i}_{T}\right)\Bigg)ds
    \Bigg] \\
     &=\mathbb{E}\Bigg[\int_{0}^{T}h_{s}\Bigg(
    e^{\beta s}\wt{N}^{i}_{T} - e^{\beta s}\wt{N}^{i}_{s} + M^{i}_{T} - M^{i}_{s}
    \Bigg)ds\Bigg] \\
    &=\mathbb{E}\Bigg[\int_{0}^{T}h_{s}\Bigg(
    e^{\beta s}\mathbb{E}\left[\wt{N}^{i}_{T}|\mathcal{F}_{s}\right] - e^{\beta s}\wt{N}^{i}_{s} + \mathbb{E}\left[M^{i}_{T}|\mathcal{F}_{s}\right] - M^{i}_{s}
    \Bigg)ds\Bigg] = 0.
\end{split}
\end{equation}
where we have used \eqref{def:Gamma}, \eqref{def:tildeN} and \eqref{def:tildeM} in the second equality and the tower property, Fubini theorem and the martingale property of $\wt{N}^{i}$ and $M^{i}$ in the last two equalities. 
Putting together \eqref{eq:ask_sufficiency_1}--\eqref{eq:q_sufficiency_1} into \eqref{eq:opt_proj} we verify \eqref{def:Gateaux}. 
\end{proof}

Now we are ready to prove Theorem \ref{thm:solution}. 
\begin{proof}[Proof of Theorem \ref{thm:solution}.] 
From Proposition \ref{prop:uniqueness}, Lemma \ref{lemma:gateaux} and Proposition \ref{prop:FBSDE}, it follows that the unique Nash equilibrium is the solution of \eqref{eq:FBSDE_aggregated}. We will derive a closed form solution to this equilibrium and prove its admissibility in the following steps. 

\textbf{Step 1: Matrix form of \eqref{eq:FBSDE_aggregated}.}
Combining each of the $N$-FBSDE systems in \eqref{eq:FBSDE_aggregated} gives us one $5N+2$ dimensional FBSDE system. To write this in a compact form, we introduce some notation and recall the definitions of $\mathbf{V}$ and $\mathbf{X}$ which were given in \eqref{def:V_vector} and \eqref{def:X_vector}, respectively. 
Using the notation convention in \eqref{av-not}, we additionally define
\begin{equation}
    S^{i}_{t} = 
    \begin{pmatrix}
        -P^{\ast}_{t} + \frac{N}{2N-1}dM^{i}_{t} + \frac{N-1}{2N-1}\overline{M^{-i}}_{t}  \\
        P^{\ast}_{t} - \frac{N}{2N-1}dM^{i}_{t} - \frac{N-1}{2N-1}\overline{M^{-i}}_{t} \\
        \frac{1}{\epsilon}\Big(-P^{\ast}_{t} + L^{i}_{t}\Big) \\
        N^{i}_{t}
    \end{pmatrix} ,
\end{equation}
\begin{equation}
    \check{S} = \bigg(
    -P^{\ast}_{t} + \overline{M}_{t} ,
    P^{\ast}_{t} - \overline{M}_{t} ,
    -\overline{\left(\frac{1}{\epsilon}\right)}P^{\ast}_{t} + \overline{\left(\frac{L}{\epsilon}\right)}_{t} ,
    \overline{N}_{t} ,
    0 
    \bigg)^\top,
\end{equation}
and
\begin{equation}
    \wt{\mathbf{S}}_{t} = 
    (
        S^{1}_{t} , ... , S^{N}_{t} , \check{S}_{t} , \mathbf{0}_{N+1}^{\top}. 
   )^{\top}
\end{equation}
Finally, define 
\begin{equation} \label{def:Y_vector}
    \mathbf{Y}_{t} = \left(
        (\mathbf{V}_{t})^{\top}, (\mathbf{X}_{t})^{\top}
    \right)^{\top}, \quad 0\leq t \leq T. 
\end{equation}
 Then, using the matrices $\wt{\mathbf{K}}, \mathbf{Q}^{a}, \mathbf{Q}^{b}$ which are introduced in Appendix \ref{appendix:matrices2}, we can write the overall $5N+2$ dimensional FBSDE in \eqref{eq:FBSDE_aggregated} as follows,  
\begin{equation} \label{eq:matrix_FBSDE_Z}
    d\mathbf{Y}_{t} = \mathbf{Y}_{t} \big(\wt{\mathbf{K}}dt + \mathbf{Q}^{a} dZ^{a}_{t} + \mathbf{Q}^{b} dZ^{b}_{t}\big) + d\wt{\mathbf{S}}_{t}, \quad 0\leq t\leq T, 
\end{equation}
with $N+1$ initial conditions given by $\mathbf{Y}^{4N+1+i}_{0} = x^{i}$ for $i\in [N]$, $\mathbf{Y}^{5N+2}_{0} = 0$ and $4N+1$ terminal conditions given by  
\begin{equation} \label{eq:TY=c}
    \mathbf{T}\mathbf{Y}_{T} = \mathbf{c} := 
    \begin{pmatrix}
        0 & ... & 0 & 1
    \end{pmatrix}^{\top}\in\mathbb{R}^{4N+1}, 
\end{equation}
which were introduced below \eqref{eq:FBSDE_aggregated}. Recall that the matrix $\mathbf{T}$ is defined in \eqref{def:T}. 
Recall that $Z^{a}$ and $Z^{b}$ are Poisson processes with intensities $\rho^a$ and $\rho^b$, respectively. 
We can therefore rewrite \eqref{eq:matrix_FBSDE_Z} as follows, 
\begin{equation} \label{eq:matrix_FBSDE_rho}
    d\mathbf{Y}_{t} = \mathbf{Y}_{t}(\wt{\mathbf{K}} + \mathbf{Q}^{a}\rho^{a}_{t} + \mathbf{Q}^{b}\rho^{b}_{t})dt + d\mathbf{S}_{t}, \quad 0\leq t \leq T. 
\end{equation}
where 
\begin{equation}
    \mathbf{S}_{t}=\wt{\mathbf{S}}_{t} + \mathbf{Q}^{a} \int_{0}^{t}\mathbf{Y}_{s}(dZ^{a}_{s}-\rho^{a}ds) +\mathbf{Q}^{b} \int_{0}^{t} \mathbf{Y}_{s}(dZ^{b}_{s}-\rho^{b}ds), \quad 0\leq t \leq T. 
\end{equation}

\textbf{Step 2: Closed form solution for \eqref{eq:FBSDE_aggregated}.}
Define
\begin{equation}
    \mathbf{K} = \wt{\mathbf{K}} + \mathbf{Q}^{a}\rho^{a} + \mathbf{Q}^{b}\rho^{b}, \quad   \mathbf{G}(t) = \exp\left(\mathbf{K}t\right), \quad 0\leq t \leq T. 
\end{equation}
Then, the solution to \eqref{eq:matrix_FBSDE_rho} is given by
\begin{equation} \label{eq:matrix sde solution}
    \mathbf{Y}_{T} = \mathbf{G}(T-t)\mathbf{Y}_{t} + \int_{t}^{T}\mathbf{G}(T-s)d\mathbf{S}_{s}.
\end{equation}
Let $\mathbf{H}(t) := \mathbf{T}\mathbf{G}(t)\in\mathbb{R}^{(4N+1)\times(5N+2)}$. 
Then, multiplying both sides of \eqref{eq:matrix sde solution} by $\mathbf{T}$ and using \eqref{eq:TY=c}, we get,
      \begin{equation} \label{eq:matrix sde solution2}
    c = \mathbf{H}(T-t)\mathbf{Y}_{t} + \int_{t}^{T}\mathbf{H}(T-s)d\mathbf{S}_{s}.
\end{equation} 
Denote by $\mathbf{H}^{V}$ the first $4N+1$ columns of $\mathbf{H}$ which correspond to the processes in $\mathbf{V}$ and by $\mathbf{H}^{X}$ the last $N+1$ columns which correspond to the processes in $\mathbf{X}$. Therefore from \eqref{def:Y_vector} and \eqref{eq:matrix sde solution2} we get,  
\begin{equation}
  \mathbf{c} = \mathbf{H}^{V}(T-t)\mathbf{V}_{t} + \mathbf{H}^{X}(T-t)\mathbf{X}_{t} + \int_{t}^{T} \mathbf{H}(T-s) d\mathbf{S}_{s}.
\end{equation}
By Assumption \ref{assum:bounded}(i), $\mathbf{H}^{V}(t)$ is invertible for any $t\in [0,T]$ so we have, 
 \begin{equation} \label{eq:feedback-pf}
    \mathbf{V}^{}_{t} = \left(\mathbf{H}^{V}(T-t)\Big)^{-1}\Big(\mathbf{c}-\mathbf{H}^{X}(T-t)\mathbf{X}_{t}-\int_{t}^{T} \mathbf{H}(T-s) d\mathbf{S}_{s}\right), \quad 0\leq t \leq T.  
\end{equation}
Taking conditional expectation with respect to $\mathcal F_t$ on both sides of \eqref{eq:feedback-pf} and using the fact that $\mathbf{S}$ is a martingale we get \eqref{eq:feedback}. 

\textbf{Step 3: Admissibility of  \eqref{eq:feedback}.}
From \eqref{eq:feedback} and the triangle inequality we get,
\begin{equation}
\begin{split}
    \mathbb{E}\left[\norm*{\mathbf{V}_{t}}^{2}\right] 
    &=\mathbb{E}\left[\norm*{\left(\mathbf{H}^{V}(T-t)\right)^{-1}\mathbf{c} - \left(\mathbf{H}^{V}(T-t)\right)^{-1}\mathbf{H}^{X}(T-t)\mathbf{X}_{t}}^{2}\right] \\
    &\leq \mathbb{E}\left[\norm*{\left(\mathbf{H}^{V}(T-t)\right)^{-1}\mathbf{c}}^{2}\right] +\E\left[ \norm*{\left(\mathbf{H}^{V}(T-t)\right)^{-1}\mathbf{H}^{X}(T-t)\mathbf{X}_{t}}^{2}\right].
\end{split}
\end{equation}
Then, using Assumption \ref{assum:bounded}(i),  \eqref{def:Y}, \eqref{def:X}, \eqref{def:X_vector}, Cauchy-Schwarz inequality and H\"older inequality we get,    
\begin{equation} \label{gr1} 
\begin{split}
    \mathbb{E}\left[\norm*{\mathbf{V}_{t}}^{2}\right]
    &\leq C_{1} + C_{2}\mathbb{E}\left[\norm*{\mathbf{X}_{t}}^{2}\right] \\
     &\leq C_{1} + \wt C_{2} \big(\sum_{i=1}^{N}\mathbb{E}\left[ ({X}^i_{t})^{2}\right] + \mathbb{E}\left[ {I}_{t}^{2}\right]  \big)\\
     &\leq \wt C_{1} + \wt C_{2} \sum_{i=1}^{N}\int_0^t\mathbb{E}\left[ (q^i_{s})^{2}\right]ds   \\
            &\leq \wt C_{1} + \wt C_{2} \int_0^t\mathbb{E}\left[ \norm*{\mathbf{V}_{s}}^{2}\right] ds, 
     \end{split}
\end{equation}
for some constants $\wt C_{1}, \wt C_{2}>0$. Note that we have used \eqref{v-i-def} and \eqref{def:V_vector} in the last inequality. 
From \eqref{gr1} we get 
\begin{equation}  
\begin{split}
\sup_{s\in [0,t]}   \mathbb{E}\left[\norm*{\mathbf{V}_{s}}^{2}\right] \leq \wt C_{1} + \wt C_{2} \int_0^t \sup_{r\in [0,s]}\mathbb{E}\left[ \norm*{\mathbf{V}_{r}}^{2}\right] ds, \quad \textrm{for all } 0\leq t\leq T. 
     \end{split}
\end{equation}
Hence by applying Gronwall's Lemma we get, 
\begin{equation}
    \underset{0\leq t\leq T}{\sup} \mathbb{E}\left[\norm*{\mathbf{V}_{t}}^{2}\right] < \infty,
\end{equation}
which in turn implies by \eqref{v-i-def} and \eqref{def:V_vector} that 
\begin{equation}
      \sup_{0\leq t\leq T}  \left( \mathbb{E}\big[\big(q^{i}_{t}\big)^{2}\big]  +  \mathbb{E}\big[\big(\delta^{a, i}_{t}\big)^{2}\big] + \mathbb{E}\big[\big(\delta^{b, i}_{t} \big)^{2}\big] \right)< \infty, \quad \textrm{for all } i\in [N]. 
 \end{equation}
 We conclude that the solution to \eqref{eq:FBSDE_aggregated} is admissible by \eqref{adms} and \eqref{adms2}. 
\end{proof}

\section{\texorpdfstring{The Matrices $K$ and $T$ from Section \ref{sec:model}}{The Matrices K and T from Section \ref{sec:model}}}\label{appendix:matrices}

\subsection{\texorpdfstring{Definition of $\mathbf{K}\in\mathbb{R}^{(5N+2)\times(5N+2)}$}{Definition of K}}
In this section, we define the matrices $\mathbf{K}$ and $\mathbf{T}$. These are constructed out of several other matrices. Define $K(i, j)\in\mathbb{R}^{4\times4}$ for $i, j\in[N]$ by
\begingroup
\begin{equation}
\begin{split}
    &K(i, j) = 
    \begin{pmatrix}
        0 & 0 & -\lambda  & 0 \\
        0 & 0 & \lambda  & 0 \\
        \frac{\rho^{a}\lambda\kappa(N-1)}{\epsilon^{i}N^{2}} & -\frac{\rho^{b}\lambda\kappa(N-1)}{\epsilon^{i}N^{2}} & 0 & \frac{\beta}{\epsilon^{i}} \\
        \frac{\rho^{a}\lambda\kappa(N-1)}{N^{2}} & -\frac{\rho^{b}\lambda\kappa(N-1)}{N^{2}} & \lambda  & \beta \\
    \end{pmatrix}
    \qquad \text{if} \quad i=j, \\
    &K(i, j) =
    \begin{pmatrix}
       0 & 0 & -\lambda  & 0 \\
        0 & 0 & \lambda  & 0 \\
        -\frac{\rho^{a}\lambda\kappa}{\epsilon^{i}N^{2}} & \frac{\rho^{b}\lambda\kappa}{\epsilon^{i}N^{2}} & -\frac{\lambda}{\epsilon^{i}} & 0 \\
        -\frac{\rho^{a}\lambda\kappa}{N^{2}} & \frac{\rho^{b}\lambda\kappa}{N^{2}} & 0 & 0
    \end{pmatrix}
    \qquad \text{if} \quad i\neq j, \\
\end{split}
\end{equation}
\endgroup
\begingroup
\begin{equation}
    \bm{\varphi}(i, j) =
    \begin{dcases}
        \begin{pmatrix}
            2\phi^{j} \\
            -2\phi^{j} \\
            \frac{2\phi^{j}}{\epsilon^{j}} \\
            0
        \end{pmatrix} 
        \qquad \text{if} \quad i = j \\
        \begin{pmatrix}
            \frac{2N}{2N-1}\phi^{j} \\
            -\frac{2N}{2N-1}\phi^{j} \\
            0 \\
            0
        \end{pmatrix}
        \qquad \text{if} \quad i \neq j,
    \end{dcases}
\end{equation}
\endgroup
\begingroup
\begin{equation}
    \check{\bm{\varphi}}(j) = 
    \begin{pmatrix}
        \beta \\
        -\beta \\
        \frac{\beta}{\epsilon^{j}} \\
        0
    \end{pmatrix}
\end{equation}
\begin{equation} \label{def:Phi_i}
    \Phi^{i} = \big(
         \bm{\varphi}(i, 1) , \bm{\varphi}(i, 2) , \bm{\varphi}(i, 3) ,
        ... ,
        \bm{\varphi}(i, N-1) , 
        \bm{\varphi}(i, N) ,
        \check{\bm{\varphi}}(i) \big)^\top
     \in\mathbb{R}^{4\times(N+1)}
\end{equation}
\endgroup
\begingroup
\begin{equation}
    \bm{\vartheta}(i, j) =
    \begin{dcases}
            \big( \frac{\rho^{a}\kappa(N-1)}{N^{2}},  -\frac{\rho^{b}\kappa(N-1)}{N^{2}} , 1 , 0 \big)^\top
         \qquad \text{if} \quad i=j,\\ 
             (-\frac{\rho^{a}\kappa}{N^{2}}, \frac{\rho^{b}\kappa}{N^{2}} , 0 , 0)^\top
        \qquad \text{if} \quad i \neq j,
    \end{dcases}
\end{equation}
\endgroup
and
\begingroup
 \begin{equation} \label{def:vartheta_tilde}
    \tilde{\bm{\vartheta}} = 
(        0 , 0 , \lambda , 0 ),
\end{equation}
\endgroup
\begingroup
 \begin{equation}
    \Theta^{j} = 
    \begin{pmatrix}
        \bm{\vartheta}(1, j) \\
        \bm{\vartheta}(2, j) \\
        \bm{\vartheta}(3, j) \\
        \vdots \\
        \bm{\vartheta}(N-1, j)  \\ 
        \bm{\vartheta}(N, j) \\
        \tilde{\bm{\vartheta}}
    \end{pmatrix}
    \in\mathbb{R}^{(N+1)\times4},
\end{equation}
\endgroup
\begingroup
 \begin{equation}
    \check{K}(i) = \Big(
         0 , 0 , \frac{\lambda}{\epsilon^{i}N}(\rho^{b}-\rho^{a}) , \frac{\lambda}{N}(\rho^{b}-\rho^{a})
  \Big)^{\top}
    \in\mathbb{R}^{4},
\end{equation}
\endgroup
\begingroup
\begin{equation}
    \underline{K} = 
    \begin{pmatrix}
        0 & 0 & 0 & \dots & 0 & 0 & 0 \\
        0 & 0 & 0 & \dots & 0 & 0 & 0 \\
        0 & 0 & 0 & \dots & 0 & 0 & 0 \\
        \vdots & \vdots & \vdots & \ddots & \vdots & \vdots & \vdots \\
        0 & 0 & 0 & \dots & 0 & 0 & 0 \\
        0 & 0 & 0 & \dots & 0 & 0 & 0 \\
        0 & 0 & 0 & \dots & 0 & 0 & -\beta \\
    \end{pmatrix}
    \in\mathbb{R}^{(N+1)\times(N+1)},
\end{equation}
\endgroup
and
\begingroup
\begin{equation}
    \check{\Theta} = 
    \begin{pmatrix}
        \frac{\rho^{b}-\rho^{a}}{N} & \frac{\rho^{b}-\rho^{a}}{N} & \frac{\rho^{b}-\rho^{a}}{N} & \dots & \frac{\rho^{b}-\rho^{a}}{N} & \frac{\rho^{b}-\rho^{a}}{N} & 0
    \end{pmatrix}^{\top}
    \in\mathbb{R}^{N+1}.
\end{equation}
\endgroup  
Then $ \mathbf{K}$ is defined as follows: 
\begingroup
\begin{equation} \label{def:K}
    \mathbf{K} = 
    \begin{pmatrix}
        K(1, 1) & K(1, 2) & \dots & K(1, N) & \check{K}(1) & \Phi^{1} \\
        K(2, 1) & K(2, 2) &  \dots & K(2, N) & \check{K}(2) &\Phi^{2} \\
        \vdots & \vdots & \ddots & \vdots & \vdots & \vdots \\
        \vdots & \vdots &  \ddots & \vdots & \vdots & \vdots \\
        K(N-1, 1) & K(N-1, 2) & \dots & K(N-1, N) & \check{K}(N-1) & \Phi^{N-1} \\
        K(N, 1) & K(N, 2) & \dots &  K(N, N) & \check{K}(N) & \Phi^{N} \\ 
        \mathbf{0}_{4}^{\top} & \mathbf{0}_{4}^{\top} & \dots &  \mathbf{0}_{4}^{\top} & 0 & \mathbf{0}_{N+1}^{\top}\\
        \Theta^{1} & \Theta^{2}& \dots &  \Theta^{N} & \check{\Theta} & \underline{K} \\
    \end{pmatrix}.
\end{equation}
\endgroup

\subsection{\texorpdfstring{Definition of $\mathbf{T}\in\mathbb{R}^{(4N+1)\times(5N+2)}$}{Definition of T}}
In order to define $\mathbf{T}$ we need a few preliminary definitions. 
Define $T(i, j)\in\mathbb{R}^{4\times4}$ for $i, j\in[N]$ by
\begingroup
\begin{equation}
\begin{split}
    &T(i, j) = 
    \begin{pmatrix}
        -1 + \frac{1}{2N-1} & 0 & 0 & 0 \\
        0 & -1 + \frac{1}{2N-1} & 0 & 0 \\
        0 & 0 & -1 & 0 \\
        0 & 0 & 0 & -1 \\
    \end{pmatrix}
    \qquad \text{if} \quad i=j, \\
    &T(i, j) =
    \begin{pmatrix}
        \frac{1}{2N-1} & 0 & 0 & 0 \\
        0 & \frac{1}{2N-1} & 0 & 0 \\
        0 & 0 & 0 & 0 \\
        0 & 0 & 0 & 0 \\
    \end{pmatrix}
    \qquad \text{if} \quad i\neq j, \\
\end{split}
\end{equation}
\endgroup
\begingroup
\begin{equation}
    \check{T} = 
        \Big( \frac{N}{\kappa(2N-1)} +\frac{N\mu}{\rho^{a}}, \frac{N}{\kappa(2N-1)} +\frac{N\mu}{\rho^{b}}, 0 , 0
 \Big)^{\top}
    \in\mathbb{R}^{4}, 
\end{equation}
\endgroup

\begingroup
\begin{equation}
    \bm{\psi}(i, j) = 
    \begin{dcases}
        \begin{pmatrix}
            -2\alpha^{j}\frac{N-1}{2N-1} \\
            2\alpha^{j}\frac{N-1}{2N-1} \\
            -\frac{2\alpha^{j}}{\epsilon^{j}} \\
            0
        \end{pmatrix}
        \qquad \text{if} \quad i=j \\
        \mathbf{0}_{4} 
        \qquad \text{if} \quad i \neq j, 
    \end{dcases}
\end{equation}
\endgroup
\begingroup
\begin{equation}
    \tilde{\bm{\psi}}(j) = 
    \begin{pmatrix}
        -\frac{N-1}{2N-1} \\
        \frac{N-1}{2N-1} \\
        -\frac{1}{\epsilon^{j}} \\
        0
    \end{pmatrix}
    \in\mathbb{R}^{4}, 
\end{equation}
\endgroup
and
\begingroup
\begin{equation}
    \Psi^{i} = \big(
         \bm{\psi}(i, 1) , \bm{\psi}(i, 2) , \bm{\psi}(i, 3) , ...
        , ...
        , \bm{\psi}(i, N-1) , \bm{\psi}(i, N) , \tilde{\bm{\psi}}(i) \big)^\top
     \in\mathbb{R}^{4\times(N+1)}. 
\end{equation}
\endgroup
Then $\mathbf{T}$ is given by, 
\begingroup
\begin{equation} \label{def:T}
    \mathbf{T} = 
    \begin{pmatrix}
        T(1, 1) & T(1, 2) & \dots & \dots & T(1, N-1) & T(1, N) & \check{T} & \Psi^{1} \\
        T(2, 1) & T(2, 2) & \dots & \dots & T(2, N-1) & T(2, N) & \check{T} & \Psi^{2} \\
        \vdots & \vdots & \ddots & \ddots & \vdots & \vdots & \vdots & \vdots \\
        \vdots & \vdots & \ddots & \ddots & \vdots & \vdots & \vdots & \vdots \\ 
        T(N-1, 1) & T(N-1, 2) & \dots & \dots & T(N-1, N-1) & T(N-1, N) & \check{T} & \Psi^{N-1} \\    
        T(N, 1) & T(N, 2) & \dots & \dots & T(N, N-1) & T(N, N) & \check{T} & \Psi^{N} \\
        \mathbf{0}_{4}^{\top} & \mathbf{0}_{4}^{\top} & \dots & \dots & \mathbf{0}_{4}^{\top} & \mathbf{0}_{4}^{\top} & 1 & \mathbf{0}_{N+1}^{\top}. 
    \end{pmatrix}
\end{equation}
\endgroup

\section{Additional matrix definitions }\label{appendix:matrices2}
 In this section, we define the matrices $\tilde{\mathbf{K}}, \mathbb{Q}^{a}$ and $\mathbb{Q}^{b}$, which are introduced in \eqref{eq:matrix_FBSDE_Z}. These matrices are constructed out of several other matrices.

\subsection{\texorpdfstring{Definition of $\wt{\mathbf{K}}\in\mathbb{R}^{(5N+2)\times(5N+2)}$}{Definition of K}}

Define $\tilde{K}(i, j)\in\mathbb{R}^{4\times4}$ for $i, j\in[N]$ by
\begingroup
\begin{equation}
\begin{split}
    &\tilde{K}(i, j) = 
    \begin{pmatrix}
        0 & 0 & -\lambda  & 0 \\
        0 & 0 & \lambda  & 0 \\
        0 & 0 & 0 & \frac{\beta}{\epsilon^{i}} \\
       0 & 0 & \lambda  & \beta \\
    \end{pmatrix}
    \qquad \text{if} \quad i=j, \\
    &\tilde{K}(i, j) =
    \begin{pmatrix}
       0 & 0 & -\lambda  & 0 \\
        0 & 0 & \lambda  & 0 \\
        0 & 0 & -\frac{\lambda}{\epsilon^{i}} & 0 \\
        0 & 0 & 0 & 0
    \end{pmatrix}
    \qquad \text{if} \quad i\neq j.
\end{split}
\end{equation}
\endgroup
 
Recall that $\Phi^{i}\in\mathbb{R}^{4\times(N+1)}$ was defined in \eqref{def:Phi_i} and that  $\tilde{\bm{\vartheta}}\in\mathbb{R}^{4}$ was defined in \eqref{def:vartheta_tilde}. Define  
\begingroup
\begin{equation}
    \tilde{\bm{\vartheta}}'(i, j) =
    \begin{dcases}
            ( 0, 0 , 1 , 0)
         \qquad \text{if} \quad i=j,\\ 
             (0 , 0 , 0 , 0)
        \qquad \text{if} \quad i \neq j,
    \end{dcases}
\end{equation}
\endgroup
 and 
\begingroup
 \begin{equation}
    \tilde{\Theta}^{j} = 
    \begin{pmatrix}
        \tilde{\bm{\vartheta}}'(1, j) \\
        \tilde{\bm{\vartheta}}'(2, j) \\
        \tilde{\bm{\vartheta}}'(3, j) \\
        \vdots \\
        \tilde{\bm{\vartheta}}'(N-1, j)  \\ 
        \tilde{\bm{\vartheta}}'(N, j) \\
        \tilde{\bm{\vartheta}}
    \end{pmatrix}
    \in\mathbb{R}^{(N+1)\times4}.
\end{equation}
\endgroup
Let 
\begingroup
\begin{equation}
    \underline{K} = 
    \begin{pmatrix}
        0 & 0 & 0 & \dots & 0 & 0 & 0 \\
        0 & 0 & 0 & \dots & 0 & 0 & 0 \\
        0 & 0 & 0 & \dots & 0 & 0 & 0 \\
        \vdots & \vdots & \vdots & \ddots & \vdots & \vdots & \vdots \\
        0 & 0 & 0 & \dots & 0 & 0 & 0 \\
        0 & 0 & 0 & \dots & 0 & 0 & 0 \\
        0 & 0 & 0 & \dots & 0 & 0 & -\beta \\
    \end{pmatrix}
    \in\mathbb{R}^{(N+1)\times(N+1)}. 
\end{equation}
\endgroup
Then we define, \begingroup
\begin{equation} 
    \tilde{\mathbf{K}} = 
    \begin{pmatrix}
        \tilde{K}(1, 1) & \tilde{K}(1, 2) & \dots & \tilde{K}(1, N) & \mathbf{0}_{4} & \Phi^{1} \\
        \tilde{K}(2, 1) & \tilde{K}(2, 2) & \dots & \tilde{K}(2, N) & \mathbf{0}_{4} &\Phi^{2} \\
        \vdots & \vdots & \ddots & \vdots & \vdots & \vdots \\
        \vdots & \vdots & \ddots & \vdots & \vdots & \vdots \\
        \tilde{K}(N-1, 1) & \tilde{K}(N-1, 2) & \dots & \tilde{K}(N-1, N) & \mathbf{0}_{4} & \Phi^{N-1} \\
        \tilde{K}(N, 1) & \tilde{K}(N, 2) & \dots & \tilde{K}(N, N) & \mathbf{0}_{4} & \Phi^{N} \\ 
        \mathbf{0}_{4}^{\top} & \mathbf{0}_{4}^{\top} & \dots & \mathbf{0}_{4}^{\top} & 0 & \mathbf{0}_{N+1}^{\top}\\
        \tilde{\Theta}^{1} & \tilde{\Theta}^{2} & \dots & \tilde{\Theta}^{N} & \mathbf{0}_{N+1} & \underline{K} \\
    \end{pmatrix}. 
\end{equation}
\endgroup

\subsection{\texorpdfstring{Definition of $\mathbf{Q}^{a}\in\mathbb{R}^{(5N+2)\times(5N+2)}$}{Definition of Qa}}

Define $Q^{a}(i, j)\in\mathbb{R}^{4\times4}$ for $i, j\in[N]$ by
\begingroup
\begin{equation}
\begin{split}
    &Q^{a}(i, j) = 
    \begin{pmatrix}
        0 & 0 & 0  & 0 \\
        0 & 0 & 0  & 0 \\
        \frac{\lambda\kappa(N-1)}{\epsilon^{i}N^{2}} & 0 & 0 & 0 \\
        \frac{\lambda\kappa(N-1)}{N^{2}} & 0 & 0 & 0 \\
    \end{pmatrix}
    \qquad \text{if} \quad i=j, \\
    &Q^{a}(i, j) =
    \begin{pmatrix}
       0 & 0 & 0  & 0 \\
        0 & 0 & 0  & 0 \\
        -\frac{\lambda\kappa}{\epsilon^{i}N^{2}} & 0 & 0 & 0 \\
        -\frac{\lambda\kappa}{N^{2}} & 0 & 0 & 0
    \end{pmatrix}
    \qquad \text{if} \quad i\neq j. \\
\end{split}
\end{equation}
\endgroup
Let 
\begingroup
\begin{equation}
    \bm{\vartheta}^{a}(i, j) =
    \begin{dcases}
        \begin{pmatrix}
            \frac{\kappa(N-1)}{N^{2}} & 0 & 0 & 0
        \end{pmatrix}
        \qquad \text{if} \quad i=j,\\ 
        \begin{pmatrix}
            -\frac{\kappa}{N^{2}} & 0 & 0 & 0
        \end{pmatrix}
        \qquad \text{if} \quad i \neq j, 
    \end{dcases}
\end{equation}
\endgroup
\begingroup
 \begin{equation}
    \Theta^{a, j} = 
    \begin{pmatrix}
        \bm{\vartheta}^{a}(1, j) \\
        \bm{\vartheta}^{a}(2, j) \\
        \bm{\vartheta}^{a}(3, j) \\
        \vdots \\
        \bm{\vartheta}^{a}(N-1, j)  \\ 
        \bm{\vartheta}^{a}(N, j) \\
        \mathbf(0)_{4}
    \end{pmatrix}
    \in\mathbb{R}^{(N+1)\times4}, 
\end{equation}
\endgroup
\begingroup
 \begin{equation}
    \check{Q}^{a}(i) = \big(        0 , 0 , -\frac{\lambda}{\epsilon^{i}N} , -\frac{\lambda}{N}
\big)^{\top}
    \in\mathbb{R}^{4}, 
\end{equation}
\endgroup
and
\begingroup
\begin{equation}
    \check{\Theta}^{a} = 
 \big(        -\frac{1}{N} , -\frac{1}{N} , -\frac{1}{N} , \dots , -\frac{1}{N} , -\frac{1}{N} ,0
    \big)^{\top}
    \in\mathbb{R}^{N+1}. 
\end{equation}
\endgroup
Then we define, 
\begingroup
\begin{equation}
    \mathbf{Q}^{a} = 
    \begin{pmatrix}
        Q^{a}(1, 1) & Q^{a}(1, 2) & \dots & Q^{a}(1, N) & \check{Q}^{a}(1) & \mathbf{0}_{4\times(N+1)} \\
        Q^{a}(2, 1) & Q^{a}(2, 2) & \dots & Q^{a}(2, N) & \check{Q}^{a}(2) & \mathbf{0}_{4\times(N+1)} \\
        \vdots & \vdots & \ddots & \vdots & \vdots & \vdots \\
        \vdots & \vdots & \ddots & \vdots & \vdots & \vdots \\
        Q^{a}(N-1, 1) & Q^{a}(N-1, 2) & \dots & Q^{a}(N-1, N) & \check{Q}^{a}(N-1) & \mathbf{0}_{4\times(N+1)} \\
        Q^{a}(N, 1) & Q^{a}(N, 2) & \dots & Q^{a}(N, N) & \check{Q}^{a}(N) & \mathbf{0}_{4\times(N+1)} \\ 
        \mathbf{0}_{4}^{\top} & \mathbf{0}_{4}^{\top} & \dots & \mathbf{0}_{4}^{\top} & 0 & \mathbf{0}_{N+1}^{\top}\\
        \Theta^{a, 1} & \Theta^{a, 2} & \dots & \Theta^{a, N} & \check{\Theta}^{a} & \mathbf{0}_{(N+1)\times(N+1)} \\
    \end{pmatrix}. 
\end{equation}
\endgroup

\subsection{\texorpdfstring{Definition of $\mathbf{Q}^{b}\in\mathbb{R}^{(5N+2)\times(5N+2)}$}{Definition of Qb}}

Define $Q^{b}(i, j)\in\mathbb{R}^{4\times4}$ for $i, j\in[N]$ by
\begingroup
\begin{equation}
\begin{split}
    &Q^{b}(i, j) = 
    \begin{pmatrix}
        0 & 0 & 0  & 0 \\
        0 & 0 & 0  & 0 \\
        0 & -\frac{\lambda\kappa(N-1)}{N^{2}} & 0 & 0 \\
        0 & -\frac{\lambda\kappa(N-1)}{N^{2}} & 0  & 0 \\
    \end{pmatrix}
    \qquad \text{if} \quad i=j, \\
    &Q^{b}(i, j) =
    \begin{pmatrix}
       0 & 0 & 0  & 0 \\
        0 & 0 & 0  & 0 \\
       0 & \frac{\lambda\kappa}{\epsilon^{i}N^{2}} & 0 & 0 \\
       0 & \frac{\lambda\kappa}{N^{2}} & 0 & 0
    \end{pmatrix}
    \qquad \text{if} \quad i\neq j.  
\end{split}
\end{equation}
\endgroup
Let, 
\begingroup
\begin{equation}
    \bm{\vartheta}^{b}(i, j) =
    \begin{dcases}
        \begin{pmatrix}
            0 & -\frac{\kappa(N-1)}{N^{2}} & 1 & 0
        \end{pmatrix}
        \qquad \text{if} \quad i=j,\\ 
        \begin{pmatrix}
            0 & \frac{\kappa}{N^{2}} & 0 & 0
        \end{pmatrix}
        \qquad \text{if} \quad i \neq j, 
    \end{dcases}
\end{equation}
\endgroup
\begingroup
 \begin{equation}
    \Theta^{b, j} = 
    \begin{pmatrix}
        \bm{\vartheta}^{b}(1, j) \\
        \bm{\vartheta}^{b}(2, j) \\
        \bm{\vartheta}^{b}(3, j) \\
        \vdots \\
        \bm{\vartheta}^{b}(N-1, j)  \\ 
        \bm{\vartheta}^{b}(N, j) \\
        \mathbf{0}_{4}
    \end{pmatrix}
    \in\mathbb{R}^{(N+1)\times4}, 
\end{equation}
\endgroup
\begingroup
 \begin{equation}
    \check{Q}^{b}(i) = 
 \big(        0 , 0 ,\frac{\lambda}{\epsilon^{i}N} , \frac{\lambda}{N}
    \big)^{\top}
    \in\mathbb{R}^{4}, 
\end{equation}
\endgroup
and
\begingroup
\begin{equation}
    \check{\Theta}^{b} = 
 \big(
        \frac{1}{N} ,\frac{1}{N} , \frac{1}{N} , \dots , \frac{1}{N} , \frac{1}{N} , 0
 \big)^{\top}
    \in\mathbb{R}^{N+1}. 
\end{equation}
\endgroup
Then finally we define
\begingroup
\begin{equation} 
    \mathbf{Q}^{b} = 
    \begin{pmatrix}
        Q^{b}(1, 1) & Q^{b}(1, 2) & \dots & Q^{b}(1, N) & \check{Q}^{b}(1) & \mathbf{0}_{4\times(N+1)} \\
        Q^{b}(2, 1) & Q^{b}(2, 2) & \dots & Q^{b}(2, N) & \check{Q}^{b}(2) & \mathbf{0}_{4\times(N+1)} \\
        \vdots & \vdots & \ddots & \vdots & \vdots & \vdots \\
        \vdots & \vdots & \ddots & \vdots & \vdots & \vdots \\
        Q^{b}(N-1, 1) & Q^{b}(N-1, 2) & \dots & Q^{b}(N-1, N) & \check{Q}^{b}(N-1) & \mathbf{0}_{4\times(N+1)} \\
        Q^{b}(N, 1) & Q^{b}(N, 2) & \dots & Q^{b}(N, N) & \check{Q}^{b}(N) & \mathbf{0}_{4\times(N+1)} \\ 
        \mathbf{0}_{4}^{\top} & \mathbf{0}_{4}^{\top} & \dots & \mathbf{0}_{4}^{\top} & 0 & \mathbf{0}_{N+1}^{\top}\\
        \Theta^{b, 1} & \Theta^{b, 2} & \dots & \Theta^{b, N} & \check{\Theta}^{b} & \mathbf{0}_{(N+1)\times(N+1)}
    \end{pmatrix}.
\end{equation}
\endgroup

\end{appendices}

\pagebreak

\section*{Acknowledgements}

We are especially grateful to Roel Oomen for his contributions to the development of this work. His insights, feedback, and support were invaluable in shaping the final manuscript. We additionally thank Philipp Plank for helpful discussion regarding the $\epsilon$-optimality test in Remark \ref{remark:optimality-test}.

\bibliographystyle{plainnat}
\bibliography{prisoners}

\end{document}